\newtheorem{theorem}{Theorem}[section]
\newtheorem{definition}[theorem]{Definition}
\newtheorem{lemma}[theorem]{Lemma}
\newtheorem{corollary}[theorem]{Corollary}
\newtheorem{example}[theorem]{Example}
\newtheorem{remark}[theorem]{Remark}
\newenvironment{figurehere}
  {\def\@captype{figure}}
  {}
\title{A notion of fractality for a class of states and noncommutative relative distance zeta functional}
\begin{document}

\author{
Yat Tin Chow  \footnote{Department of Mathematics, University of California, Riverside (yattinc@ucr.edu).  The author is partially supported by NSF DMS-2409903 and ONR N000142412661.}
}

\date{}
\maketitle

\begin{abstract}
In this work, we first recall the definition of the relative distance zeta function in \cite{LRZ_book,LRZ,LRZ3,LRZ5,LRZ6a} and slightly generalize this notion from sets to probability measures, and then move on to propose a novel definition a relative distance (and tube) zeta functional for a class of states over a C* algebra.  With such an extension, we look into the chance to define relative Minkowski dimensions in this context, and explore the notion of fractality for this class of states.  Relative complex dimensions as poles of this newly proposed relative distance zeta functional, as well as its geometric and transformation properties, decomposition rules and properties that respects tensor products are discussed.  We then explore some examples that possess fractal properties with this new zeta functional and propose functional equations similar to \cite{Hoffer,LapPe2,LapPeWi1,LRZ_book}.
\end{abstract}

\noindent { \footnotesize {\bf Mathematics Subject Classification
(MSC2020)}:
Primary: 11M41, 28A12, 28A75, 28A80, 46L51.  Secondary: 28B15, 42B20, 44A05, 44A15, 46L89
}

\noindent { \footnotesize {\bf Keywords}: Fractality, distance zeta functional, tube zeta functional, complex dimensions, C*-algebra, noncommutative measure and integration
}

\section{Introduction}

Fractality is traditionally captured via either a general notions of self-similarity, in an exact or approximate sense, or through that of dimensions, for instance in the sense of Minkowski or Hausdorff, so that in this context, a fractal is to be thought of as sets having non-integer dimensions.
However, as elaborated in \cite{LF_book,LRZ_book,LRZ3, LRZ5,LRZ6a,Richardson}, most of these notions, while being successful in capturing fractality in some occasions, fall short in capturing this properties at some other times.  For instance, the Cantor function, while widely agreed upon as a fractal, is rectifiable and has topological dimension $1$; space-filling curves, commonly agreed to be a fractal, has its image filling the $d$-dimensional space, and is thus of dimension $d$; meanwhile, $\alpha$-string generally has non-integer Minkowski dimension but does not possess the symmetry that the community would consider as a fractal.

With these observations, new notions are developed to refine the mathematical description of fractality, and a major class of methods that met with success is via various notions of zeta functions and complex dimensions as poles of the corresponding zeta functions concerned.  The first zeta function introduced for this purpose was that of fractal strings \cite{LF1,LF2,LF_book}, and then it is generalized in \cite{LRZ0,LRZ_book,LRZ,LRZ3,LRZ4,LRZ5,LRZ6a,LRZ6b,LRZ7} to tube and distance zeta functions to cover arbitrary fractal sets in the Euclidean spaces via developing appropriate relative fractal drums for this purpose.  In each of these cases, fractality is defined via the existence of non-real poles of the corresponding zeta functions.  These poles are referred to as complex dimensions and regarded as a generalized definition of dimensions, considering the fact that the upper Minkowski dimension is always a first pole.  The presence of the imaginary parts in the poles reflects the fact that a (generalized version of) Weyl's tube formula for the geometric object to be studied possesses some form of periodicity in the logarithmic scale (a.k.a. multiplicative periodicity), which are then converted to the imaginary parts of the poles in the zeta function via a Mellin transform of the tube formula.   It is this key observation that makes this approach's great success in capturing the symmetric property required to signify fractality, bypassing possible issues with solely using Minkowski or Hausdorff dimensions.  For a more detailed and thorough discussion of zeta functions and complex dimensions, we would refer the readers to consult \cite{LF1,LF2,LF_book,  LRZ0, LRZ_book, LRZ,LRZ3,LRZ4, LRZ5,LRZ6a,LRZ6b,LRZ7} and the references therein, along side with \cite{ElLapMacRo,Fal1,Fal2,
HL2,HL1,HerLap1,HerLap2,
LalLap1,LalLap2,
L1_ref,L2_ref,L3_ref,L4_ref,L5_ref,L6_ref,
LapLeRo,LapLu,
LLF1,LLF2,
LM1,LM2,
LapPe1,LapPe2,
LapPeWi1,LapPeWi2,
LP1,LP2,LP3,
LapRo,
LapRoZu,RaWi,
Tep1,Tep2}
for the development and detailed analysis on various versions of zeta functions and tube formulae for different specific occasions and its relevance to Reimann hypothesis.

In this work, we make a pioneering step to explore the possibility of extending the definition of the relative distance (and tube) zeta function in \cite{LF1,LF2,LF_book,LRZ0,LRZ_book,LRZ,LRZ3,LRZ4,LRZ5,LRZ6a,LRZ6b,LRZ7} to the noncommutative setting, more specifically to a class of states over a C* algebra. 
(A relevent but slightly different generalization of fractal zeta functions to (Ahlfors) metric measure spaces are also explored in \cite{Hen,LW}, see also \cite{LRZ}.)
We actualize this by first slightly extending this notion from sets to probability measures in the Euclidean space as in \eqref{zeta_communtative}, and view them as a special case of states over a commutative C* algebra.  Then we observe a consistent generalization can be made to provide the definition of a relative distance (and tube) zeta functional as in \eqref{zeta_non_communtative} for a special class of states.
Along this line, we may also define tubular neighborhood measures and functionals which help us define relative Minkowski dimensions in this context.
Relative complex dimensions can then be defined as poles of this newly proposed relative distance zeta functional, and can be regarded as a generaliztion of dimensions in view of the fact that under mild hypotheses the relative upper Minkowski dimension is always contained in the set of relative complex dimensions.  
Similar to the commutative case, the presence of the imaginary parts in the relative complex dimensions, i.e. the poles of the relative distance zeta functional, captures the periodicity in the logarithmic scale, i.e. the multiplicative periodicity, of the tubular neighborhood functionals.
This therefore gives us the right tools to explore the notion of fractality for our class of states over a C* algebra, which, being a limit of convex combinations of pure states via Choquet's theorem, may be regarded as a noncommutative generalization of geometric objects when we recall the fact that pure states over commutative C* algebra are always Dirac measures over points in a space.
Additional geometric and transformation properties, decomposition rules and properties that respects tensor products are explored such as functional equations similar to those introduced in \cite{Hoffer} could be obtained, which could possibly offer us a better description of the relative complex dimensions of the states considered.
Functional equations satisfied by the geometric, fractal and distance zeta function with rescaling as in \cite{Hoffer} were previously also introduced and used earlier in the following references, including \cite{KL93,LalLap1,LalLap2,Lal88,Lal89,L3_ref,LF1,LF2,LF_book,LapPe1,LapPe2,LapPeWi1,LapPeWi2,LRZ_book,LRZ,LRZ4}.

As an important remark to further understand the motivation of this work and its physical relevance, we would like to turn our attention to, and compare with, another major direction of fractal studies in the noncommutative setup, notably \cite{FLLP,LLL} in which the authors perform a quantization of fractals via the study of their spectral triple
 through the lens of Gromov-Hausdorff propinquity over quantum metric space \cite{Latremoliere_1, Latremoliere_2}.
Other works that aims for a quantization of fractals via the study of their spectral triple include \cite{GIL_new, L0_new, S_new, V_new}.
The difference between our work and thes aforementioned approaches is majorly on the respective motivations: their works were to look for a noncommutative characterization of (known) fractals (via a spectral triple, for instance), whereas our work and approach is to explore the other way around, asking the question of what might be considered as a fractal in a noncommutative setting, more concretely in the context of the C*-algebraic formulation of quantum mechanics where physical observables are their self-adjoint elements.
To understand more concretely what we mean here, let us recall that, in this setup, states of a C*-algebra can be regarded as a generalization of the notion of density matrices in quantum mechanics (which represent mixed  ``quantum states" in the physical system in quantum entanglement or statistical ensemble) and they now correspond to physical states as a mapping from physical observables (as self-adjoint elements) to their expected measurement outcomes (as real or complex numbers.)
Therefore asking for a characterization of fractality of states is equivalent to asking for such a characterization of a quantum system given only expected measurement outcomes of the relevant restricted set of physical observables (viewed as a geometric object).
A growing net of tubular neighborhood functionals then represents having increasingly more information of expected measurement outcomes over an increasingly larger set of physical observables which contains our relevant restricted set to be studied, the fractality of which is now described by the periodicity in the logarithmic scale/multiplicative periodicity to be observed when the set of physical observables from where information for expected measurement outcomes can be obtained grows.

Our work is organized as follows.  In Section \ref{sec_2_1}, we first recall the definition of the relative distance zeta function in  \cite{LRZ_book,LRZ,LRZ3,LRZ5,LRZ6a} and slightly generalize this notion from sets to probability measures.  Then we move on to defining relative distance (and tube) zeta functionals for a class of states over a C* algebra in Section \ref{sec_2_2}, and define relative Minkowski dimensions, relative complex dimensions and other notions of fractality in this context.
In Section \ref{sec_3_1} we explored the geometric and transformation properties of this newly proposed zeta functional; then in Section \ref{sec_3_2} the decomposition rules and in Section \ref{sec_3_3} the properties that respects tensor products.  Finally, we examine some examples that possess fractal properties with this new zeta functional in Section \ref{sec_4}.

\section{Noncommutative relative distance zeta functionals for states} \label{sec_2}

In this section, we would like to first recall the definition of the relative distance zeta function in \cite{Hen,LRZ_book,LRZ,LRZ3,LRZ5,LRZ6a,LW} for arbitrary sets in the Euclidean space and slightly generalize this to probability measures over Euclidean spaces.  Then we would move on to introducing a definition of the relative distance zeta functional for a class of states over a $C^*$-algebra.

\subsection{The commutative relative distance zeta function} \label{sec_2_1}

For $x \in \mathbb{R}^d$, 
we may consider, for a given Borel probability measure $\mu \in \mathcal{P} (\mathbb{R}^d)$, the following distance function:
\[
d (x, \mu ) := \lim_{p \rightarrow \infty}  \left( \int_{\mathbb{R}^d}  d(x,y)^{-2p} \mu (dy) \right)^{- \frac{1}{2p}} =  \inf \left\{  d(x, y) : y \in \text{spt} (\mu)  \right\} \,.
\]
where $d(x,y) = \sqrt{ \sum_{i=1}^d (x_i - y_i)^2 }$ is the Euclidean distance and $ \text{spt} (\mu)$ denotes the support of $\mu$.
%\[
% \text{spt} (\mu) := \bigcap \{ A  \in  \mathcal{B} \,:\,  \mu (\mathbb{R}^d \backslash A) = 0  \ \} 
%\] 
%where $\mathcal{B}$ is the Borel sigma algebra.
Then we may consider, as in  \cite{LRZ_book,LRZ,LRZ3,LRZ5,LRZ6a}, for another given Borel probability measure $\nu \in \mathcal{M} (\mathbb{R}^d)$, the relative distance zeta function.
\begin{definition}
The relative distance zeta function of a Borel probability measure $\mu \in \mathcal{P} (\mathbb{R}^d)$ given another measure $\nu \in \mathcal{M} (\mathbb{R}^d)$ is given by
\begin{equation}
\zeta_{\mu} (s, \nu) :=  \nu \left ( d (x, \mu )^{s-d}   \right) =  \int_{\mathbb{R}^d}  d (x, \mu )^{s-d} \nu (dx)
\label{zeta_communtative}
\end{equation}
for $s \in \mathbb{C}$ and $\Re (s) $ sufficiently large, and then meromorphically continued to a domain of definition $ \Omega_{\zeta_{\mu} (\,\cdot\,, \nu)}$ such that $\{ s \in \mathbb{C} \, : \, \Re (s) > \overline{\text{dim}}_B (\mu_y, \nu) \} \subset  \Omega_{\zeta_{\mu} (\,\cdot\,, \nu)}  \subset \mathbb{C}$.  (The fact that the integral is well-defined for $\Re (s) > \overline{\text{dim}}_B (\mu_y, \nu) $ is discussed in Lemma \ref{convergence_1}.)
\end{definition}

\begin{remark}
We remark that this notion is indeed a slight generalized definition of the relative distance zeta function introduced in  \cite{LRZ_book,LRZ,LRZ3,LRZ5,LRZ6a}.  In fact, given an arbiturary bounded Borel set $A \subset \mathbb{R}^d$, if we choose $ \mu $ to be supported on $A$ and $ \nu  = \chi_{  \{ x \in \mathbb{R}^d : d (x, A )  \leq \delta \}  } (x) \, dx $, then we get back the relative distance zeta function given in the aforementioned references.
\end{remark}

Now given a finite positive Borel measure $\nu \in \mathcal{M} (\mathbb{R}^d)$ (i.e. $\nu(1) < \infty $) and for any $\delta > 0$, let us always write a new Borel measure, a $\delta$-tubular-neighborhood measure of  $\mu \in \mathcal{P} (\mathbb{R}^d)$ with respect to $\nu$ as
\[
\nu^{\mu,\delta} := \nu \circ \text{mul}_{ \chi_{  \{ x \in \mathbb{R}^d : d (x, \mu )  \leq \delta \}  } }
\]
where $\text{mul}_{ f } ( g) := f g$ whenever $ f, g $ are Borel functions and $\chi_{A}$ is the indicator function of a Borel set $A$.
Then, following  \cite{LRZ_book,LRZ,LRZ3,LRZ5,LRZ6a}, 
we may now define the relative lower $s$-dimensional Minkowski content of $\mu$ with respect to $\nu$ for $s \geq 0 $:
\[
\mathcal{M}^s_*(\mu, \nu ) := \liminf_{\delta \rightarrow 0^+}  \, \delta^{ s - d} \, \nu^{\mu,\delta} (1) = \liminf_{\delta \rightarrow 0^+}  \, \delta^{ s - d} \,  \nu \left( \{ x \in \mathbb{R}^d : d (x, \mu )  \leq \delta  \} \right)   \,,
\]
and the relative upper $s$-dimensional Minkowski content of $\mu$ with respect to $\nu$ for $s \geq 0$:
\[
\mathcal{M}^{*\,s}(\mu, \nu ) := \limsup_{\delta \rightarrow 0^+}  \, \delta^{ s - d} \, \nu^{\mu,\delta} (1) = \limsup_{\delta \rightarrow 0^+}  \, \delta^{ s - d} \,  \nu \left( \{ x \in \mathbb{R}^d : d (x, \mu )  \leq \delta  \} \right)  \,,
\]
as well as the relative lower box counting dimension of $\mu$ with respect to $\nu$:
\[
\underline{\text{dim}}_B (\mu, \nu) = \inf \{ s \geq 0 \,;\,  \mathcal{M}^s_*(\mu, \nu )  = 0 \} = \sup \{ s \geq 0 \,;\,  \mathcal{M}^s_*(\mu, \nu )  = \infty \}.
\]
and the relative upper box counting dimension of $\mu$ with respect to $\nu$:
\[
\overline{\text{dim}}_B (\mu, \nu) = \inf \{ s \geq 0  \, ; \, \mathcal{M}^{*\,s}(\mu, \nu )  = 0 \} = \sup \{ s \geq 0  \, ; \, \mathcal{M}^{*\,s}(\mu, \nu )  = \infty \}.
\]
If $\underline{\text{dim}}_B (\mu, \nu) = \overline{\text{dim}}_B (\mu, \nu)  $, we write $\text{dim}_B (\mu, \nu) $ the relative box counting dimension of $\mu$ with respect to $\nu$.  $\mu$ is relatively Minkowski nondegenerate with respect to $\nu$ if 
\[
0 <  \mathcal{M}^{\text{dim}_B (\mu, \nu)}_*(\mu, \nu )  \leq  \mathcal{M}^{*\,\text{dim}_B (\mu, \nu)}(\mu, \nu )  < \infty \,,
\]
and the $\text{dim}_B (\mu, \nu)$-dimensional relative Minkowski content of $\mu$ with respect to $\nu$ is 
\[
\mathcal{M}^{\text{dim}_B (\mu, \nu)}(\mu, \nu )  := \mathcal{M}^{\text{dim}_B (\mu, \nu)}_*(\mu, \nu )  = \mathcal{M}^{*\,\text{dim}_B (\mu, \nu)}(\mu, \nu ) 
\]
if they are so equal. 

In what follows, we would like to check whether $\zeta_{\mu} (s, \nu ) $ is well-defined and holomorphic on $\Re (s) > \overline{\text{dim}}_B (\mu, \nu)$, hence
\[
\{ s \in \mathbb{C} \, : \, \Re (s) > \overline{\text{dim}}_B (\mu_y, \nu) \} \subset \Omega_{\zeta_{\mu} (\,\cdot \,, \nu ) }  \,.
\]
In fact, following \cite{LRZ_book,LRZ,LRZ3,LRZ5,LRZ6a}, the following lemma is in force, (the proof here is essentially the same as in the aforementioned references.)

\begin{lemma}
\label{convergence_1}
If $\nu$ is a finite Borel measure with $\text{spt} (\nu) \subset \{ x \in \mathbb{R}^d : d (x, \mu )  \leq \delta \}  $ for some small $ \delta > 0 $, then
$\zeta_{\mu} (s, \nu ) $ is well-defined and holomorphic whenever $\Re (s) > \overline{\text{dim}}_B (\mu, \nu)$, with
\begin{eqnarray}
\zeta_{\mu} (s, \nu ) =   \delta^{s-d}   \nu (1)  -  (s-d) \int_{0}^{\delta}  t^{s-d-1} \, \nu^{\mu,t}(1)  dt 
\label{zeta_communtative_tube}
\end{eqnarray}
and 
\begin{eqnarray}
\partial_s \zeta_{\mu} (s, \nu ) =  \int_{\mathbb{R}^d}  d (x, \mu )^{s-d}  \log(d (x, \mu )) \nu (dx)  \,.
\label{zeta_communtative_derivative}
\end{eqnarray}
Furthermore, if the relative box counting dimension of $\mu$ with respect to $\nu$, i.e. $\text{dim}_B (\mu, \nu) $, exists and $\text{dim}_B (\mu, \nu) < d$, and moreover, if $ \mathcal{M}^{\text{dim}_B (\mu, \nu)}_*(\mu, \nu )  > 0  $, then $\zeta_{\mu} (s, \nu ) \rightarrow \infty $ when $s \rightarrow \text{dim}_B (\mu, \nu)^+$ for $s \in \mathbb{R}$.
\end{lemma}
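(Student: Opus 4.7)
The plan is to obtain the tube representation \eqref{zeta_communtative_tube} first and then read off everything else (convergence, holomorphy, the derivative formula, and the divergence as $s \to \text{dim}_B(\mu,\nu)^+$) from it. The key pointwise identity, valid for every $x$ with $d(x,\mu) \leq \delta$ and every $s \in \mathbb{C}$, is
\[
d(x,\mu)^{s-d} \;=\; \delta^{s-d} \,-\, (s-d)\int_{d(x,\mu)}^{\delta} t^{s-d-1}\,dt,
\]
obtained by integrating $\tfrac{d}{dt} t^{s-d} = (s-d)\,t^{s-d-1}$ over $[d(x,\mu),\delta]$. Since by hypothesis $\text{spt}(\nu) \subset \{d(\cdot,\mu) \leq \delta\}$, this identity holds $\nu$-a.e., and integrating against $\nu$ and swapping the order of integration via Fubini produces exactly \eqref{zeta_communtative_tube} (the region $\{(x,t) : d(x,\mu) \leq t \leq \delta\}$ sliced in $t$ first yields $\{x : d(x,\mu) \leq t\}$, whose $\nu$-measure is $\nu^{\mu,t}(1)$).

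To justify Fubini, and simultaneously establish that $\zeta_\mu(s,\nu)$ is well-defined, I would use the equivalent characterisation $\overline{\text{dim}}_B(\mu,\nu) = \sup\{s \geq 0 : \mathcal{M}^{*\,s}(\mu,\nu) = \infty\}$. For $\Re(s) > \overline{\text{dim}}_B(\mu,\nu)$, pick $s''$ with $\overline{\text{dim}}_B(\mu,\nu) < s'' < \Re(s)$; then $\mathcal{M}^{*\,s''}(\mu,\nu) < \infty$, so there are $C>0$ and $t_0 \in (0,\delta]$ with $\nu^{\mu,t}(1) \leq C\,t^{d-s''}$ for $t \in (0,t_0)$. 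Together with the trivial bound $\nu^{\mu,t}(1) \leq \nu(1) < \infty$ on $[t_0,\delta]$, this yields
\[
\int_0^{\delta} t^{\Re(s)-d-1}\,\nu^{\mu,t}(1)\,dt \;<\; \infty,
\]
which (via Fubini) forces $\int d(x,\mu)^{\Re(s)-d}\,\nu(dx) < \infty$, i.e.\ absolute convergence of \eqref{zeta_communtative}. Holomorphy on $\{\Re(s) > \overline{\text{dim}}_B(\mu,\nu)\}$ then follows directly from \eqref{zeta_communtative_tube}: the first summand is entire and the second has holomorphic integrand with a locally uniform integrable majorant, so Morera (or differentiation under the integral) applies. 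The derivative formula \eqref{zeta_communtative_derivative} follows by differentiating the original integral in $s$ with the same majorant, where shrinking $s''$ a little absorbs the extra logarithmic factor $\log d(x,\mu)$.

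For the divergence statement, set $D := \text{dim}_B(\mu,\nu) < d$ and $c_0 := \mathcal{M}^{D}_*(\mu,\nu) > 0$. Fix $c \in (0,c_0)$; by definition of $\liminf$ there is $t_0 \in (0,\delta)$ with $\nu^{\mu,t}(1) \geq c\,t^{d-D}$ for all $t \in (0,t_0)$. For real $s \in (D,d)$, both $(d-s)$ and $\nu^{\mu,t}(1)$ are positive, so discarding the non-negative first summand in \eqref{zeta_communtative_tube} and restricting the $t$-integral to $(0,t_0)$ yields
\[
\zeta_\mu(s,\nu) \;\geq\; c\,(d-s)\int_0^{t_0} t^{s-D-1}\,dt \;=\; \frac{c\,(d-s)\,t_0^{s-D}}{s-D},
\]
which tends to $+\infty$ as $s \to D^+$. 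The main technical obstacle throughout is the clean justification of Fubini and the passage from \eqref{zeta_communtative} to \eqref{zeta_communtative_tube} in terms of $\overline{\text{dim}}_B(\mu,\nu)$; once that is in place, holomorphy, the derivative formula, and the divergence assertion are all essentially immediate consequences.
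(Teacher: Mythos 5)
Your proposal is correct and follows essentially the same route as the paper: the Fubini/layer-cake identity to obtain the tube representation \eqref{zeta_communtative_tube}, a bound on $\nu^{\mu,t}(1)$ from the finiteness of an upper Minkowski content at an intermediate exponent to get absolute convergence and holomorphy, and the lower Minkowski content bound with the sign of $(s-d)$ to force divergence as $s\to \text{dim}_B(\mu,\nu)^+$. The only cosmetic differences are that the paper uses the midpoint exponent $\tfrac{\Re(s)+\overline{\text{dim}}_B}{2}$ where you use a generic $s''$, and establishes holomorphy by an explicit series expansion rather than citing Morera/differentiation under the integral; your version of the divergence argument is in fact slightly more careful in restricting the $t$-integral to a small $(0,t_0)$.
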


\begin{remark}
The map $ (s, \nu) \mapsto  \int_{0}^{\delta}  t^{s-d-1} \, \nu^{\mu,t}(1)  dt $  in \eqref{zeta_communtative_tube} can also be referred to as the relative tube zeta function of $\mu $ (with respect to $\nu$) in the spirit of \cite{LRZ_book,LRZ,LRZ3,LRZ5,LRZ6a}.
\end{remark}

\begin{proof}
It is direct to check that, since $\nu$ is finite (in particular $\sigma$-finite), whenever $s \neq d$
\begin{eqnarray*}
\int_{\mathbb{R}^d}  d (x, \mu )^{s-d} \nu (dx) 
& =&  (s-d) \int_{0}^{\delta}  t^{s-d-1}   \left(  \int_{ \{  d (x, \mu ) > t  \}   }  \nu (dx)  \right)  dt \\
& =&   \delta^{s-d}   \nu (1)  -  (s-d) \int_{0}^{\delta}  t^{s-d-1} \,  \nu \left( \{ x \in \mathbb{R}^d : d (x, \mu )  \leq t  \} \right) dt \\
& =&   \delta^{s-d}   \nu (1)  -  (s-d) \int_{0}^{\delta}  t^{s-d-1} \, \nu^{\mu,t}(1)  dt  \,.
\end{eqnarray*}
where the last line is finite whenever $\Re (s) > \overline{\text{dim}}_B (\mu, \nu)$, realizing from definition that
\[ 
\mathcal{M}^{*\, \frac{\Re (s)  + \overline{\text{dim}}_B (\mu, \nu)}{2} }(\mu, \nu )  =  \limsup_{\delta \rightarrow 0^+}  \, \delta^{  \frac{\Re (s)  + \overline{\text{dim}}_B (\mu, \nu)}{2} - d} \, \nu^{\mu,\delta} (1)  = 0 \,,
\]
leads to the existence of $\delta_0 > 0$ such that for all $t < \delta_0$
\[
t^{  \frac{\Re (s) + \overline{\text{dim}}_B (\mu, \nu) }{2} -d}  \nu^{\mu,t} (1)  <  1
\]
and hence
\begin{eqnarray*}
 \left | \int_{0}^{\delta_0}  t^{s-d-1} \,   \nu^{\mu,t}  (1) \, dt \right| \leq  \int_{0}^{\delta_0}  t^{  \frac{\Re (s) - \overline{\text{dim}}_B (\mu, \nu) }{2} -1} \,
\left( t^{  \frac{\Re (s) + \overline{\text{dim}}_B (\mu, \nu) }{2} -d}  \nu^{\mu,t} (1)   \right )\, dt  < \infty \,.
\end{eqnarray*}
We also notice that when $s = d$
\[
\int_{\mathbb{R}^d}  d (x, \mu )^{s-d} \nu (dx)  =  \nu (1)  =  \delta^{s-d}   \nu (1)  -  (s-d) \int_{0}^{\delta}  t^{s-d-1} \, \nu^{\mu,t}(1)  dt  \,.
\]
It is also handy to check via series expansion that, for a fixed $s$ such $\Re(s) > \overline{\text{dim}}_B (\mu, \nu)$, and for all $h \in \mathbb{C}$ such that $|h| < \frac{1}{2} \left( \Re(s) - \overline{\text{dim}}_B (\mu, \nu) \right) $,
\begin{eqnarray*}
&  &\left | \frac{ \zeta_{\mu} (s + h, \nu ) - \zeta_{\mu} (s , \nu ) }{h}  -  \int_{\mathbb{R}^d}  d (x, \mu )^{s-d}  \log(d (x, \mu )) \nu (dx)  \right| \\
&\leq&  C |h|
\int_{  \{ x \in \mathbb{R}^d : d (x, \mu )  \leq \delta \} }  
 \left( \log \left( d (X, \mu )  \right) \right)^2  d (X, \mu ) ^{ |h| }  \,  d (X, \mu ) ^{ \Re(s) - d - 2 |h| } \, \nu ( d x )  \\
&\leq&  C |h| \, \zeta_{\mu} ( \Re(s) - d - 2 |h|  , \nu ) \\
&=& O( h ) \,.
\end{eqnarray*}
The last part of the theorem comes directly from the definition of
\[
0 < \mathcal{M}^{\text{dim}_B (\mu, \nu)}_*(\mu, \nu ) := \liminf_{\delta \rightarrow 0^+}  \, \delta^{ \text{dim}_B (\mu, \nu)- d} \, \nu^{\mu,\delta} (1)   \,,
\]
we have the existence of $C > 0$ such that 
\[
 \nu^{\mu,\delta} (1) \geq C \delta^{ d- \text{dim}_B (\mu, \nu)} \,
\]
and that leads to the observation that if $ d \geq s > \text{dim}_B (\mu, \nu) $, we have
\begin{eqnarray*}
\infty > \zeta_{\mu} (s, \nu ) &=& \delta^{s-d}   \nu (1)  -  (s-d) \int_{0}^{\delta}  t^{s-d-1} \, \nu^{\mu,t}(1)  dt \\
&\geq &  - C (s-d)  \int_{0}^{\delta}  t^{s- \text{dim}_B (\mu, \nu)-1}   dt \\
&=&  - C (s-d)  \frac{ \delta^{s- \text{dim}_B (\mu, \nu)} }{  s- \text{dim}_B (\mu, \nu)   } 
\end{eqnarray*}
and this concludes $\zeta_{\mu} (s, \nu ) \rightarrow \infty $ when $s \rightarrow \text{dim}_B (\mu, \nu)^+$.
\end{proof}

\begin{remark}
Given a finite Borel measure $\nu$ and fixed $\delta > 0$, the tubular measure $\nu^{\mu,\delta} $ itself always satisfies $\text{spt} (\nu^{\mu,\delta} ) \subset \{ x \in \mathbb{R}^d : d (x, \mu )  \leq \delta \}  $, and therefore $\zeta_{\mu} (s, \nu^{\mu,\delta}  ) $ is well-defined and holomorphic whenever $\Re (s) > \overline{\text{dim}}_B (\mu, \nu^{\mu,\delta} )$.   Also, from the above representation of $\zeta_{\mu} (s, \nu^{\mu,\delta})$ in the lemma, we realize that
\[
 h_{\delta_1, \delta_2}(s) := \zeta_{\mu} (s, \nu^{\mu,\delta_2}) -  \zeta_{\mu} (s, \nu^{\mu,\delta_1})
\]
can be extended to an entire function and
\[
\overline{\text{dim}}_B (\mu, \nu^{\mu,\delta_1} ) = \overline{\text{dim}}_B (\mu, \nu^{\mu,\delta_2} )
\]
for any small $\delta_1 \neq \delta_2$ larger than $0$.  In fact this observation comes from the fact that
\[
h_{\delta_1, \delta_2} (s) =  ( \delta_2^{s-d} -  \delta_1^{s-d}  )   \nu (1) -   (s-d) \int_{\delta_1}^{\delta_2}  t^{s-d-1} \, \nu^{\mu,t}(1)  dt 
\]
and hence
\[
| h_{\delta_1, \delta_2} (s) - ( \delta_2^{s-d} -  \delta_1^{s-d}  )   \nu (1)   | \leq 
\begin{cases}
 |s-d| \nu^{\mu,\delta_2} (1)   \left( \frac{ \delta_2^{\Re(s)-d} -  \delta_1^{ \Re(s) -d}  }{ \Re(s) - d } \right)    &  \text{ when } \Re(s)  \neq d \,, \\
 |s-d| \nu^{\mu,\delta_2} (1)   \left( \log(\delta_2) - \log(\delta_1)   \right)  &  \text{ when } \Re(s) = d \,,
\end{cases} 
\]
which leads to the integral being an entire function for $s \in \mathbb{C}$ following the same holomorphicity argument as in  Lemma \ref{convergence_1}.
\end{remark}

Therefore, we may look into the poles of the relative distance zeta function of $\mu$ by a chosen measure $\nu$, i.e. considering the poles of 
\[
s \in  \Omega_{\zeta_{\mu} (\,\cdot \,, \nu ) } \mapsto \zeta_{\mu} (s, \nu ) \in \mathbb{C}  \,.
\]
We would also like to recall the definition of the relative complex dimensions of $\mu$ with respect to $\nu$ and the fractality of $\mu$. 

\begin{definition}
The relative complex dimensions of $\mu$ with respect to $\nu$ is the set of poles of $\zeta_{\mu} (s, \nu ) $, given as denoted as
\[
\mathcal{D}_{\mathbb{C}} ( \mu, \nu) := \left \{ s \in \Omega_{\zeta_{\mu} ( \, \cdot \,, \nu ) } \, : \,  \zeta_{\mu} (s, \nu )   = \infty \right \} \,.
\]
The measure $\mu$ is a fractal (measure) with respect to $\nu$ if 
\[
\mathcal{D}_{\mathbb{C}} ( \mu, \nu ) \backslash \mathbb{R} \neq \emptyset \,.
\]
\end{definition}

\begin{remark}
Given a finite Borel measure $\nu$ and for any small $\delta_1 \neq \delta_2$ larger than $0$, we realize from the previous observations that
\[
\mathcal{D}_{\mathbb{C}} ( \mu, \nu^{\mu,\delta_1} )  = \mathcal{D}_{\mathbb{C}} ( \mu, \nu^{\mu,\delta_2} ) 
\]
\end{remark}

\begin{remark} \label{Product} (Product decomposition)  
Let us  define
\[
d_\infty (x,\mu) :=  \lim_{p \rightarrow \infty}  \left( \int_{\mathbb{R}^d}  d_\infty (x,y)^{-2p} \mu (dy) \right)^{- \frac{1}{2p}} =  \inf \left\{  d_\infty (x, y) : y \in \text{spt} (\mu)  \right\} \,,
\]
where $d_{\infty}(x,y) = \sup_{i} | x_i - y_i | $ is the supremum-norm distance. 
We also define
\[
\nu^{\mu,\delta}_\infty := \nu \circ \text{mul}_{ \chi_{ \{ x \in \mathbb{R}^d : d_\infty (x, \mu )  \leq \delta \} }  } \,, \quad \mathcal{M}^s_{*,\infty}(\mu, \nu ) := \liminf_{\delta \rightarrow 0^+}  \, \delta^{ s - d} \, \nu^{\mu,\delta}_\infty (1) \,, \quad \mathcal{M}^{*\,s}_\infty(\mu, \nu ) := \limsup_{\delta \rightarrow 0^+}  \, \delta^{ s - d} \, \nu^{\mu,\delta}_\infty (1)  \,,
\]
and 
\[
\zeta_{\mu,\infty} (s, \nu) :=  \nu \left ( d_\infty (x, \mu )^{s-d}   \right) \,,
\]
then the following lemma holds, the last of which has been observed in  \cite{LRZ_book,LRZ,LRZ3,LRZ5,LRZ6a}:

\begin{lemma}
Whenever $\nu, \nu_i$ are finite Borel measures with  $\text{spt} (\nu) \subset \{ x \in \mathbb{R}^d : d (x, \mu )  \leq \delta \}  $ and  $\text{spt} (\nu_i) \subset \{ x \in \mathbb{R}^d : d (x, \mu_i )  \leq \delta \}  $ for all $1 \leq i \leq k$,
\begin{enumerate}
\item
it holds that for $s \in \mathbb{R}$,
\begin{eqnarray*}
 d^{-\frac{s-d}{2}} \mathcal{M}^s_{*}(\mu, \nu )   \leq \mathcal{M}^s_{*,\infty}(\mu, \nu )  \leq d^{\frac{s-d}{2}} \mathcal{M}^s_{*}(\mu, \nu )   \,, \\
 d^{-\frac{s-d}{2}} \mathcal{M}^{*\,s}(\mu, \nu )  \leq \mathcal{M}^{*\,s}_\infty(\mu, \nu )  \leq d^{\frac{s-d}{2}} \mathcal{M}^{*\,s}(\mu, \nu ) \,.
\end{eqnarray*}
\item
It holds that for $s_i \in \mathbb{R}$ for all $i$,
\begin{eqnarray*}
 \mathcal{M}^{\sum_{i=1}^k s_i}_{*,\infty}( \otimes_{i=1}^k \mu_i, \otimes_{i=1}^k \nu_i  )  & \geq & \prod_{i=1}^k \mathcal{M}^{s_i}_{*,\infty}(\mu_i, \nu_i )  \text{ with a convention that } 0 \cdot \infty = 0 \,, \\
 \mathcal{M}^{*\, \sum_{i=1}^k s_i}_\infty ( \otimes_{i=1}^k \mu_i, \otimes_{i=1}^k \nu_i  )  & \leq & \prod_{i=1}^k \mathcal{M}^{*\,s_i}_\infty (\mu_i, \nu_i )   \text{ with a convention that } 0 \cdot \infty = \infty \,,
\end{eqnarray*}
and
\begin{eqnarray*}
\underline{\text{dim}}_B ( \otimes_{i=1}^k \mu_i,   \otimes_{i=1}^k  \nu_i)& \geq &\sum_{i=1}^k \underline{\text{dim}}_B (\mu_i, \nu_i) \, , \\
\overline{\text{dim}}_B ( \otimes_{i=1}^k \mu_i,   \otimes_{i=1}^k  \nu_i) &  \leq & \sum_{i=1}^k \overline{\text{dim}}_B  (\mu_i, \nu_i) .
\end{eqnarray*}
Hence if $\underline{\text{dim}}_B (\mu_i, \nu_i) = \overline{\text{dim}}_B (\mu_i, \nu_i)  $ for all $i$, then
\begin{eqnarray*}
\text{dim}_B ( \otimes_{i=1}^k \mu_i,   \otimes_{i=1}^k  \nu_i) = \sum_{i=1}^k \text{dim}_B (\mu_i, \nu_i) \,,
\end{eqnarray*}
Furthermore, if for all $i$, $\mu_i$ is relatively Minkowski nondegenerate with respect to $\nu_i$ and the following $\text{dim}_B (\mu_i, \nu_i)$-dimensional relative Minkowski content in $\infty$-norm
\[ 
\mathcal{M}_\infty^{\text{dim}_B (\mu_i, \nu_i)}(\mu_i, \nu_i ) := \mathcal{M}_{*,\infty}^{\text{dim}_B (\mu_i, \nu_i)}(\mu_i, \nu_i ) = \mathcal{M}_\infty^{*\, \text{dim}_B (\mu_i, \nu_i)}(\mu_i, \nu_i ) 
\]
are all well-defined, then 
\[
 \mathcal{M}^{  \text{dim}_B ( \otimes_{i=1}^k \mu_i,   \otimes_{i=1}^k  \nu_i)  }_{\infty}( \otimes_{i=1}^k \mu_i, \otimes_{i=1}^k \nu_i  )  = \prod_{i=1}^k \mathcal{M}^{\text{dim}_B (\mu_i, \nu_i)}_{\infty}(\mu_i, \nu_i ) \,.
\]
\item
For $s \in \mathbb{R}$ such that $s > \overline{\text{dim}}_B (\mu, \nu)$, we have
\[
 d^{-\frac{s-d}{2}}  \zeta_{\mu} (s, \nu) \leq \zeta_{\mu,\infty} (s, \nu)   \leq  d^{\frac{s-d}{2}}  \zeta_{\mu} (s, \nu) \,.
\]
\item
For all $s_i \in \mathbb{C} $ such that  $ \Re(s_i) > \overline{\text{dim}}_B (\mu_i,  \nu_i)$, we have
\begin{eqnarray*}
&  & \left| \frac{ \zeta_{ \otimes_{i=1}^k \mu_i } ( \sum_{i=1}^k s_i , \otimes_{i=1}^k \nu_i ) -  \delta^{( \sum_{i=1}^k s_i - \sum_{i=1}^k d_i  ) }  \prod_{i=1}^k \nu_{i} (1)  }{ \sum_{i=1}^k s_i - \sum_{i=1}^k d_i } \right|    \\
&\leq  & -  \sum_{i=1}^k    \frac{  \left(  k d_i  \left( \sum_{i=1}^k d_i \right) \right)^{-   \frac{ k ( \Re(s_i) - d_i ) }{2} }   \zeta_{\mu_i^{\otimes^k} } (k \Re(s_i) , \nu_i^{\otimes^k}  ) - \delta^{ (k \Re(s_i) - k d_i )}  [ \nu_{i} (1) ]^k }{ k^2 (\Re(s_i) -  d_i )}    \,.
\end{eqnarray*}

\item
(Shift property)
Whenever the relative box counting dimension $ \text{dim}_B (\mu_2, \nu_2) $  
and the  $\text{dim}_B (\mu_2, \nu_2)$-dimensional relative Minkowski content $\mathcal{M}^{\text{dim}_B (\mu_2, \nu_2)}(\mu_2, \nu_2 ) $ are well-defined,  then for all $s \in \mathbb{R}$ such that $s - \text{dim}_B (\mu_2, \nu_2) > \overline{\text{dim}}_B (\mu_1,  \nu_1 )$,  we have
\begin{eqnarray*}
&    &  -  \frac{1}{2}  d_2^{ - \frac{d_2 -  \text{dim}_B (\mu_2, \nu_2)}{2} }  \left( \frac{ d_1^{\frac{s -  \overline{\text{dim}}_B (\mu_2, \nu_2) -d_1 }{2}}  \zeta_{\mu_1} \left(s  - \text{dim}_B (\mu_2, \nu_2 ) , \nu_1 \right) -  \delta^{(s  - \text{dim}_B (\mu_2, \nu_2 ) - d_1 )}   \nu_{1} (1)  }{ (s - \text{dim}_B (\mu_2, \nu_2 )  - d_1 )} \right) \\
& \leq & 
- \frac{  (d_1 + d_2)^{-\frac{s -d_1 - d_2 }{2}}  \zeta_{\mu_1 \otimes \mu_2} \left(s  , \nu_1 \otimes \nu_2 \right) -  \delta^{(s - d_1 - d_2)}   \nu_{1} (1) \nu_2 (1) }{ \mathcal{M}^{\text{dim}_B (\mu_2, \nu_2 )} (\mu_2, \nu_2 ) (s  - d_1 - d_2)}  \\
& \leq &  - 2 d_2^{  \frac{d_2 -  \text{dim}_B (\mu_2, \nu_2)}{2} }  \left( \frac{ d_1^{-\frac{s -  \overline{\text{dim}}_B (\mu_2, \nu_2) -d_1 }{2}}  \zeta_{\mu_1} \left(s  - \text{dim}_B (\mu_2, \nu_2 ) , \nu_1 \right) -  \delta^{(s  - \text{dim}_B (\mu_2, \nu_2 ) - d_1 )}   \nu_{1} (1)  }{ (s - \text{dim}_B (\mu_2, \nu_2 )  - d_1 )} \right) \,.
\end{eqnarray*}
More precisely, for $ p<a<b<q$,  for all  $s \in \mathbb{R}$ with $s  > \overline{\text{dim}}_B (\mu,  \nu ) + k$, we have
\begin{eqnarray*}
&   &\frac{ \zeta_{\mu \otimes \left( \frac{1}{(b-a)^k} \chi_{[a,b]^k} \mathcal{L}^k \right) ,\infty} \left(s , \nu \otimes \left( \frac{1}{(q-p)^k} \chi_{[q,p]^k} \mathcal{L}^k \right) \right) -  (s  - d - k)   \nu (1)  }{ (s - d - k )} \\
& =&    \left( q-p  \right)^{-k}   \sum_{l=0}^k C_{l}^k    2^{k-l}(b-a)^l  \frac{ \zeta_{\mu ,\infty} \left(s - l , \nu  \right) -  \delta^{(s  - l- d )}   \nu (1)  }{ (s - l - d  )} \,.
\end{eqnarray*}

\end{enumerate}
\end{lemma}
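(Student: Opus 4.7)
The proof combines three ingredients: (i) the pointwise norm equivalence $d_\infty(x,y)\le d(x,y)\le \sqrt d\,d_\infty(x,y)$ on $\mathbb{R}^d$; (ii) the product structure of sup-norm tubes, namely $(\otimes_i\nu_i)_\infty^{\otimes_i\mu_i,\delta}(1)=\prod_i\nu_{i,\infty}^{\mu_i,\delta}(1)$, which follows because $d_\infty((x_1,\ldots,x_k),\otimes_i\mu_i)\le\delta$ iff $d_\infty(x_i,\mu_i)\le\delta$ for every $i$; and (iii) the tube representation
\[
\int_0^\delta t^{s-d-1}\,\nu^{\mu,t}(1)\,dt \;=\; \frac{\delta^{s-d}\nu(1) - \zeta_\mu(s,\nu)}{s-d}
\]
from Lemma~\ref{convergence_1}, which converts statements about $\zeta$ into statements about the tube integral, the natural object on which the comparisons are performed.

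\textbf{Parts 1--3.} Part~1 follows from the set inclusions $\{d(\cdot,\mu)\le\delta\}\subseteq\{d_\infty(\cdot,\mu)\le\delta\}\subseteq\{d(\cdot,\mu)\le\sqrt d\,\delta\}$, multiplied by $\delta^{s-d}$ and then followed by $\delta\downarrow 0^+$ liminf and limsup, with the rescaling $\delta\mapsto\sqrt d\,\delta$ absorbing the constant into the $\delta^{s-d}$ prefactor. Part~2 applies the product factorization together with the elementary $\liminf\prod\ge\prod\liminf$ and $\limsup\prod\le\prod\limsup$ under the stated $0\cdot\infty$ conventions; the box-dimension inequalities follow by testing at the threshold value of $s$, and the Minkowski-content equality under joint nondegeneracy uses that both one-sided limits of $\delta^{s_i-d_i}\nu_{i,\infty}^{\mu_i,\delta}(1)$ coincide with $\mathcal{M}^{s_i}_\infty(\mu_i,\nu_i)$. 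Part~3 integrates $d_\infty(x,\mu)^{s-d}\le d^{|s-d|/2}d(x,\mu)^{s-d}$ (and the reverse) against $\nu$, splitting into the cases $s\ge d$ and $s<d$ to track the sign of the exponent.

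\textbf{Part 4.} Rewrite the left-hand side via Lemma~\ref{convergence_1} as the absolute value of the tube integral $-\int_0^\delta t^{\sum_i s_i-\sum_j d_j-1}(\otimes_i\nu_i)^{\otimes_i\mu_i,t}(1)\,dt$, dominate it by its $\Re(s_i)$-version, and factorize via Part~2 after passing from the Euclidean to the sup-norm tube (Part~3 at ambient dimension $\sum_j d_j$). Apply a weighted Young inequality to the product $\prod_i \nu_{i,\infty}^{\mu_i,t}(1)$, with weights chosen so that each summand carries the weight $t^{k\Re(s_i)-kd_i-1}[\nu_{i,\infty}^{\mu_i,t}(1)]^k$, which by Part~2 equals $t^{k\Re(s_i)-kd_i-1}(\nu_i^{\otimes k})_\infty^{\mu_i^{\otimes k},t}(1)$. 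Converting back to Euclidean tubes via Part~3 at ambient dimension $kd_i$, and then to $\zeta_{\mu_i^{\otimes k}}(k\Re(s_i),\nu_i^{\otimes k})$ via Lemma~\ref{convergence_1}, gives the bound; the geometric prefactor $(kd_i\sum_j d_j)^{-k(\Re(s_i)-d_i)/2}$ is produced by the combined effect of the two applications of norm equivalence, once at ambient dimension $\sum_j d_j$ (for the original product tube) and once at ambient dimension $kd_i$ (for each tensor-power tube).

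\textbf{Part 5 and main obstacle.} For the general shift inequality, similarly rewrite the left-hand side as a sup-norm tube integral $\int_0^\delta t^{s-d_1-d_2-1}\,\nu_{1,\infty}^{\mu_1,t}(1)\,\nu_{2,\infty}^{\mu_2,t}(1)\,dt$ via Part~2; Minkowski nondegeneracy of $\mu_2$ with respect to $\nu_2$ lets one sandwich $\nu_{2,\infty}^{\mu_2,t}(1)$ between $\tfrac12\,\mathcal{M}^{\text{dim}_B(\mu_2,\nu_2)}(\mu_2,\nu_2)\,t^{d_2-\text{dim}_B(\mu_2,\nu_2)}$ and twice this value for small $t$, and Part~3 at ambient dimension $d_1$ then rewrites the residual $\mu_1,\nu_1$-integral as $\zeta_{\mu_1}(s-\text{dim}_B(\mu_2,\nu_2),\nu_1)$ up to a factor $d_1^{\pm(s-\overline{\text{dim}}_B(\mu_2,\nu_2)-d_1)/2}$. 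For the detailed cube statement, compute the sup-norm tube directly: for $p<a<b<q$ and $t<\min(a-p,q-b)$, $\{x\in[p,q]^k\colon d_\infty(x,[a,b]^k)\le t\}=[a-t,b+t]^k$, with $\mathcal{L}^k$-measure $(b-a+2t)^k$; the binomial expansion $(b-a+2t)^k=\sum_{l=0}^k \binom{k}{l}(b-a)^l(2t)^{k-l}$ distributes into the tube integral, and identifying each $\int_0^\delta t^{s-l-d-1}\nu_\infty^{\mu,t}(1)\,dt$ with the corresponding $\zeta_{\mu,\infty}(s-l,\nu)$ expression via Lemma~\ref{convergence_1} yields the stated identity. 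The main obstacle is the triple bookkeeping in Part~4, in which the Young weights and the sequencing of the two Part~3 conversions must be chosen so that the $t$-exponents and dimension-exponents simultaneously align with the claimed $(kd_i\sum_j d_j)^{-k(\Re(s_i)-d_i)/2}$ factor; a secondary difficulty is that the Minkowski-content sandwich in Part~5 holds only asymptotically, so one must shrink $\delta$ to enforce uniformity and absorb the residual contribution from $t$ near $\delta$ into the boundary term $\delta^{s-d}\nu(1)$ of the tube formula.
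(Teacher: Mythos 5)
Your proposal is correct and follows essentially the same route as the paper's own argument: norm equivalence $d_\infty\le d\le\sqrt{d}\,d_\infty$ for Parts 1 and 3, the exact product factorization of sup-norm tubes and liminf/limsup subadditivity for Part 2, the power-mean (Young) inequality $\prod a_i\le\frac1k\sum a_i^k$ chained with two applications of norm equivalence (at ambient dimensions $\sum d_j$ and $kd_i$, respectively) for Part 4, the Minkowski-content sandwich with the $\frac12/2$ slack for the shift property, and the elementary tube computation $[a-t,b+t]^k$ plus binomial expansion for the cube identity. Your handling of the sign in Part 3 (writing the exponent as $d^{|s-d|/2}$ and splitting on $s\ge d$ versus $s<d$) is in fact a touch more careful than the paper's own write-up, which glosses over the sign flip when $s<d$.
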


\begin{proof}
The first part and third part of the lemma come from an equivalence of norms in finite dimension, according to which
\[
d^{-\frac{1}{2}} d (x,\mu) \leq d_\infty (x,\mu) \leq d^{\frac{1}{2}}  d (x,\mu) \,,
\]
and therefore,
\[
\{ x \in \mathbb{R}^d : d^{-\frac{1}{2}} d (x, \mu )  \leq  \delta \} \subset \{ x \in \mathbb{R}^d : d_\infty (x, \mu )  \leq \delta \} \subset \{ x \in \mathbb{R}^d :  d^{\frac{1}{2}}  d (x, \mu )  \leq \delta \}
\]
and hence
\[
\nu^{\mu, d^{-\frac{1}{2}} \delta} (1) \leq \nu^{\mu, \delta}_\infty (1) \leq \nu^{\mu, d^{\frac{1}{2}} \delta} (1) \quad \text{ and } \quad d^{-\frac{s-d}{2}}  \zeta_{\mu} (s, \nu) \leq \zeta_{\mu,\infty} (s, \nu)   \leq  d^{\frac{s-d}{2}}  \zeta_{\mu} (s, \nu) \,.
\]

The second part and the fourth parts of the lemma come from
\[
\left \{ x \in \mathbb{R}^d : d_\infty \left((x_1,...,x_k) , \otimes_{i=1}^k \mu_i \right)  \leq \delta \right\}
= \prod_{i=1}^k  \left \{ x \in \mathbb{R}^d : d_\infty \left(x_i, \mu_i \right)  \leq \delta \right \} \,,
\]
and therefore,
\[
(\otimes_{i=1}^k \nu_i )_\infty ^{ \otimes_{i=1}^k \mu_i , \delta} (1)  =   \prod_{i=1}^k  ( \nu_i )_\infty^{ \mu_i ,  \delta} (1)   \,,
\]
as well as the following equality (which comes from a similar proof to \eqref{zeta_communtative_tube})
\[
\zeta_{\mu,\infty} (s, \nu ) =   \delta^{s-d}   \nu (1)  -  (s-d) \int_{0}^{\delta}  t^{s-d-1} \, \nu_\infty^{\mu,t}(1)  dt  \,.
\]
In fact, since for all real $a_i \geq 0 $, $1 \leq i \leq k $, we have
\begin{eqnarray}
\prod_{i=1}^k a_i \leq \left( \frac{ \sum_{i=1}^k a_i  }{k} \right)^{k} \leq  \frac{ \sum_{i=1}^k a_i^k  }{k} 
\label{elementary_inequality}
\end{eqnarray}
we now have, for all $s_i \in \mathbb{C} $ such that $ \Re(s_i) > \overline{\text{dim}}_B (\mu_i,  \nu_i)$,
\begin{eqnarray*}
&  &\left| \frac{ \zeta_{\otimes_{i=1}^k \mu_i ,\infty} ( \sum_{i=1}^k s_i  , \otimes_{i=1}^k \nu_i  ) -  \delta^{(  \sum_{i=1}^k s_i -  \sum_{i=1}^k d_i )}    \prod_{i=1}^k \nu_{i} (1) }{ ( \sum_{i=1}^k s_i -  \sum_{i=1}^k d_i )} \right|  \\
& = &  \left|   \int_{0}^{\delta}  t^{\sum_{i=1}^k s_i  - \sum_{i=1}^k d_i -1} \,  \prod_{i=1}^k ( \nu_i )_\infty^{ \mu_i ,  t} (1)     dt \right|  \\
& \leq &   \sum_{i=1}^k  \frac{1}{k} \int_{0}^{\delta}  t^{k \Re(s_i) -k d_i -1} \, [  ( \nu_i )_\infty^{ \mu_i ,  t} (1)  ]^k   dt   \\
&  =  & -  \sum_{i=1}^k \frac{ \zeta_{\mu_i^{\otimes^k} ,\infty} (k \Re(s_i) , \nu_i^{\otimes^k}  ) -  \delta^{ (k \Re(s_i) - kd_i )}  [ \nu_{i} (1) ]^k }{ k^2 (\Re(s_i) -  d_i )}  
\end{eqnarray*}
and
\begin{eqnarray*}
&  &\left| \frac{ \zeta_{ \otimes_{i=1}^k \mu_i } ( \sum_{i=1}^k s_i , \otimes_{i=1}^k \nu_i ) -  \delta^{( \sum_{i=1}^k s_i - \sum_{i=1}^k d_i  ) }  \prod_{i=1}^k \nu_{i} (1)  }{ \sum_{i=1}^k s_i - \sum_{i=1}^k d_i } \right|  \\
& = &  \left|   \int_{0}^{\delta}  t^{\sum_{i=1}^k s_i  - \sum_{i=1}^k d_i  -1} \,   (\otimes_{i=1}^k  \nu_i  )^{\otimes_{i=1}^k  \mu_i,  t} (1)      dt \right|  \\
& \leq &  \int_{0}^{\delta}  t^{ \sum_{i=1}^k \Re(s_i) - \sum_{i=1}^k d_i -1} \,   ( \otimes_{i=1}^k \nu_i  )^{ \otimes_{i=1}^k \mu_i ,  t} (1)      dt  \\
& \leq &  \int_{0}^{\delta}  t^{\sum_{i=1}^k \Re(s_i) - \sum_{i=1}^k d_i -1} \,   ( \otimes_{i=1}^k \nu_i  )_\infty^{ \otimes_{i=1}^k  \mu_i ,  ( \sum_{i=1}^k d_i )^{\frac{1}{2}} t} (1)      dt  \\
& = &  \int_{0}^{\delta}  t^{ \sum_{i=1}^k \Re(s_i)  - \sum_{i=1}^k d_i -1}  \,   \prod_{i=1}^k  ( \nu_i )_\infty^{ \mu_i , ( \sum_{i=1}^k d_i )^{\frac{1}{2}}  t} (1)      dt  \\
& \leq & \sum_{i=1}^k  \frac{1}{k} \int_{0}^{\delta}  t^{k \Re(s_i) - k d_i -1} \, [  ( \nu_i )_\infty^{ \mu_i , ( \sum_{i=1}^k d_i )^{\frac{1}{2}} t} (1)  ]^k   dt   \\
& \leq & \sum_{i=1}^k  \frac{1}{k} \int_{0}^{\delta}  t^{k \Re(s_i) -k d_i -1} \,   ( \nu_i^{\otimes^k} )^{ \mu_i^{\otimes^k} ,  (k d_i )^{\frac{1}{2}}   ( \sum_{i=1}^k d_i )^{\frac{1}{2}} t} (1)   dt   \\
& \leq & \sum_{i=1}^k  \left(  k d_i  \left( \sum_{i=1}^k d_i \right) \right)^{-   \frac{ k ( \Re(s_i) - d_i ) }{2} }    \frac{1}{k} \int_{0}^{  k^{\frac{1}{2}} d_i^{\frac{1}{2}}  \left( \sum_{i=1}^k d_i \right)^{\frac{1}{2}}  \delta}   t^{k \Re(s_i) -k d_i -1} \,   ( \nu_i^{\otimes^k} )^{ \mu_i^{\otimes^k} ,   t} (1) \,  d   t   \\
&  =  & -  \sum_{i=1}^k    \frac{  \left(  k d_i  \left( \sum_{i=1}^k d_i \right) \right)^{-   \frac{ k ( \Re(s_i) - d_i ) }{2} }   \zeta_{\mu_i^{\otimes^k} } (k \Re(s_i) , \nu_i^{\otimes^k}  ) - \delta^{ (k \Re(s_i) - k d_i )}  [ \nu_{i} (1) ]^k }{ k^2 (\Re(s_i) -  d_i )}   \,.
\end{eqnarray*}

The fifth statement comes from the fact that,
whenever the relative box counting dimension $ \text{dim}_B (\mu_2, \nu_2) $  
as well as the  $\text{dim}_B (\mu_2, \nu_2)$-dimensional relative Minkowski $\mathcal{M}^{\text{dim}_B (\mu_2, \nu_2)}(\mu_2, \nu_2 ) $ are well-defined, we have
\begin{eqnarray*}
& &  d_2^{ - \frac{d_2 -  \text{dim}_B (\mu_2, \nu_2)}{2} } \mathcal{M}^{\text{dim}_B (\mu_2, \nu_2 )} (\mu_2, \nu_2 )    \delta^{ d_2 -  \text{dim}_B (\mu_2, \nu_2) } - o( \delta^{ d_2 - \text{dim}_B (\mu_2, \nu_2) } ) \\
& \leq & \,( \nu_2 )_{\infty}^{\mu_2,\delta} (1) \\
& \leq & d_2^{  \frac{d_2 -  \text{dim}_B (\mu_2, \nu_2)}{2} } \mathcal{M}^{\text{dim}_B (\mu_2, \nu_2 )} (\mu_2, \nu_2 )    \delta^{ d_2 -  \text{dim}_B (\mu_2, \nu_2) } + o( \delta^{ d_2 - \text{dim}_B (\mu_2, \nu_2) } ) \,.
\end{eqnarray*}
Therefore we have for all $s \in \mathbb{R}$ such that $s - \text{dim}_B (\mu_2, \nu_2) > \overline{\text{dim}}_B (\mu_1,  \nu_1 )$,
\begin{eqnarray*}
&    &  \frac{1}{2}  d_2^{ - \frac{d_2 -  \text{dim}_B (\mu_2, \nu_2)}{2} } \mathcal{M}^{\text{dim}_B (\mu_2, \nu_2 )} (\mu_2, \nu_2 ) \int_{0}^{\delta}  t^{s -  \overline{\text{dim}}_B (\mu_2, \nu_2) -d_1 -1} \,   ( \nu_1 )_\infty^{ \mu_1 ,  t} (1)     dt  \\
& \leq &  \int_{0}^{\delta}  t^{s -d_1 - d_2 -1} \,   ( \nu_1 )_\infty^{ \mu_1 ,  t} (1)    ( \nu_2 )_\infty^{ \mu_2 ,  t} (1)    dt \\
& \leq & 2 d_2^{  \frac{d_2 -  \text{dim}_B (\mu_2, \nu_2)}{2} } \mathcal{M}^{\text{dim}_B (\mu_2, \nu_2 )} (\mu_2, \nu_2 )  \int_{0}^{\delta}  t^{s -  \overline{\text{dim}}_B (\mu_2, \nu_2) -d_1 -1} \,   ( \nu_1 )_\infty^{ \mu_1 ,  t} (1)     dt 
\end{eqnarray*}
and thus we have
\begin{eqnarray*}  
&  & - \frac{1}{2}  d_2^{ - \frac{d_2 -  \text{dim}_B (\mu_2, \nu_2)}{2} }    \left(  \frac{ \zeta_{\mu_1,\infty} \left(s  - \text{dim}_B (\mu_2, \nu_2 ) , \nu_1 \right) -  \delta^{(s  - \text{dim}_B (\mu_2, \nu_2 ) - d_1 )}   \nu_{1} (1)  }{ (s - \text{dim}_B (\mu_2, \nu_2 )  - d_1 )}  \right)  \\
&\leq  & - \frac{ \zeta_{\mu_1 \otimes \mu_2 ,\infty} \left( \Re(s)  , \nu_1 \otimes \nu_2 \right) -  \delta^{( \Re(s) - d_1 - d_2)}   \nu_{1} (1) \nu_2 (1) }{ \mathcal{M}^{\text{dim}_B (\mu_2, \nu_2 )} (\mu_2, \nu_2 ) ( \Re(s)  - d_1 - d_2)} \\
& \leq &  - 2  d_2^{  \frac{d_2 -  \text{dim}_B (\mu_2, \nu_2)}{2} }  \left( \frac{ \zeta_{\mu_1,\infty} \left(s  - \text{dim}_B (\mu_2, \nu_2 ) , \nu_1 \right) -  \delta^{(s  - \text{dim}_B (\mu_2, \nu_2 ) - d_1 )}   \nu_{1} (1)  }{ (s - \text{dim}_B (\mu_2, \nu_2 )  - d_1 )}  \right)
\end{eqnarray*}
and hence our conclusion, combining the above with the third part of the lemma.
In particular, for $ p<a<b<q$,  for all $s - k > \overline{\text{dim}}_B (\mu,  \nu )$, $s \in \mathbb{R}$,
\begin{eqnarray*}
&  & \frac{ \zeta_{\mu \otimes \left( \frac{1}{(b-a)^k} \chi_{[a,b]^k} \mathcal{L}^k \right) ,\infty} \left(s , \nu \otimes \left( \frac{1}{(q-p)^k} \chi_{[q,p]^k} \mathcal{L}^k \right) \right) -  (s  - d - k)   \nu (1)  }{ (s - d - k )} \\
& = &     \int_{0}^{\delta}  t^{s - d - k} \,    \nu _\infty^{ \mu ,  t} (1)   \left[ \left( \frac{1}{(b-a)} \chi_{[a,b]} \mathcal{L} \right)_\infty^{ \frac{1}{(q-p)} \chi_{[q,p]} \mathcal{L} ,  t} (1)  \right]^k  dt   \\
& = &    \left( q-p  \right)^{-k}    \int_{0}^{\delta}  t^{s - d -k } \,   \nu _\infty^{ \mu ,  t} (1)  \left(2 t  + (b-a)\right)^k  dt   \\
& = &    \left( q-p  \right)^{-k}   \sum_{l=0}^k C_{l}^k    2^{k-l}(b-a)^l  \int_{0}^{\delta}  t^{s - d  - l } \nu _\infty^{ \mu ,  t} (1)  \,    dt   \\
& = &    \left( q-p  \right)^{-k}   \sum_{l=0}^k C_{l}^k    2^{k-l}(b-a)^l  \frac{ \zeta_{\mu ,\infty} \left(s - l , \nu  \right) -  \delta^{(s  - l- d )}   \nu (1)  }{ (s - l - d  )}  \,.
\end{eqnarray*}
\end{proof}

\end{remark}

% \noindent It remains to check that the above definition of $\mathcal{D}_{\mathbb{C}} ( \mu, \nu) $ is well-defined and is independent of the choice of $\delta$.

\subsection{The noncommutative relative distance zeta functional} \label{sec_2_2}

Let $d < \infty$, and let us consider $\mathcal{A} \langle X \rangle :=  \mathbb{C} \, \langle X_1,...,X_d \rangle $ the universal unital algebra of noncommutative polynomials
in $d$ indeterminates $X_1,...,X_d \in \mathfrak{sa} \,  \mathcal{L} (H) $ where $H$ 
% $H \cong \mathbb{R}^n$ 
is a (complex)
Hilbert space, $\mathcal{L} (H)$ is the space of bounded linear operators on $H$, and $\mathfrak{sa} \,  \mathcal{L} (H) $ are the self-adjoint elements in $\mathcal{L} (H)$.  
(In here, $H$ is not necessarily finite dimensional, nor is it required to be separable.  This flexibility on the choice of $H$ is to include e.g. the universal representation of a $C^*$-algebra.  Meanwhile, please notice that the space $\mathcal{L} (H)$ is already a von Neumann algebra with its predual as the space of trace class operators, and hence we might always restrict ourselves to a closed subspace in the choice of $X_i$.)  
Let us equip $\mathcal{A} \langle X \rangle$ with the family of norms $\| \cdot \|_R$, $R > 0$, as follows. For any monomial $p$ and arbitrary $f \in \mathcal{A} \langle X \rangle $, let $\lambda_p (f)$ be the coefficient of $p$ in the decomposition of $f$, i.e.
\[
f = \sum_{p} \lambda_p(f) \, p
\]
and 
\[
\| f \|_R := \sum_{\text{deg} \, p \, \geq \, 0} |\lambda_p(f)| \, R^{\text{deg} \, p} \,.
\]
Here, we denote $X_i^0 := 1$, the identity element in $\mathcal{A} \langle X \rangle$.
Write $\overline{ \mathcal{A} \langle X \rangle } := \overline{ \mathcal{A} \langle X \rangle }^{\|\cdot \|_R} $ for a choice of large $R > 0$, which we may not emphasize from now on except when necessary.
% (as well as $\mathcal{A} \langle X \rangle^{\text{op}}$ its opposite algebra and $\overline{ \mathcal{A} \langle X \rangle^{\text{op}} } := \overline{ \mathcal{A} \langle X \rangle^{\text{op}} }^{\|\cdot \|_R} $.)  
Notice that if $T \in B_R^{  [\mathfrak{sa} \mathcal{L}(H) ]^d } (0) := \left \{  (T_i )_{i=1}^d \in [ \mathfrak{sa} \, \mathcal{L}(H) ]^d \,:\, \|T_i\|_{\mathcal{L}(H)} \leq R  \right \} $, then it is immediate to see that
\[
\| f(T) \|_{\mathcal{L}(H)} \leq \| f \|_R \,,
\]
and therefore $ \overline{ \mathcal{A} \langle X \rangle }  \subset L^{\infty} \left( B_R^{  [\mathfrak{sa} \mathcal{L}(H) ]^d } (0) \right) $ via the following simple observation
\[
\| f \|_{L^{\infty} \left( B_R^{  [\mathfrak{sa} \mathcal{L}(H) ]^d } (0) \right)} :=  \sup_{T \in B_R^{  [\mathfrak{sa} \mathcal{L}(H) ]^d } (0) } \| f(T) \|_{\mathcal{L}(H)}  \leq \| f \|_R  < \infty \,.
\]

Consider a state $\tau$ of the $C^*$-algebra, $\mathcal{L}(H)$. When $n := \text{dim}_{\mathbb{R}}(H) = 2 \, \text{dim}_{\mathbb{C}}(H)  < \infty$, from the fact that all irreducible states are vector states, we can directly check the following representation of $\tau$ via the Choquet theorem:
\begin{eqnarray*}
\tau :  \mathcal{L}(H) &\rightarrow& \mathbb{C} \\
\tau (X) &=& \int_{  P(H) } \langle w,  X \, w \rangle_{H} \, \mu_\tau ( d w)  \,,
\end{eqnarray*}
where $\mu_\tau \in \mathcal{M} ( P(H) ) $ is a Borel probability measure on $ P(H)  := \partial B_1^{H} (0) / \{ \pm 1 \}$; and we denote  $ \langle w,  X \, w \rangle_{H}  :=  \langle r,  X \, r \rangle_{\tilde{H}} $ whenever $r \in w$.
When $n = \text{dim}_{\mathbb{R}}(H) $ can be possibly $ \infty$, we may on the other hand consider the GNS representation of $\mathcal{L}(H)$ with $\tilde{H} := \oplus_{l \in \mathcal{PS}(\mathcal{L} (H))} \, \mathcal{L}(H) / ( l(a^*a)  = 0)  $ (which is not separable), where we now denote $ \mathcal{PS}(\mathcal{L} (H))$ the set of pure states of $\mathcal{L} (H)$.  With that, we obtain the following representation of $\tau$ via the Choquet theorem:
\begin{eqnarray*}
\tau :  \mathcal{L}(H) &\rightarrow& \mathbb{C} \\
\tau (X) &=& \int_{  P(\tilde{H}) } \langle \xi,  X \, \xi \rangle_{\tilde{H}} \, \mu_\tau( d \xi)  \,,
\end{eqnarray*}
where $\mu_\tau \in \mathcal{M} ( P(\tilde{H}) ) $ is a Radon probability measure on $ P(\tilde{H})  := \partial B_1^{\tilde{H}} (0) / \{ \pm 1 \}$ (which is not a complete separable metric space) with the support $\text{spt} (\mu_\tau)$ containing only the coordinate vectors; and we again denote  $ \langle \xi,  X \, \xi \rangle_{\tilde{H}}  :=  \langle v,  X \, v \rangle_{\tilde{H}} $ whenever $v \in \xi$.

We notice that this representation of $\tau$ coincide with the previous representation of $\tau$ when $n < \infty$ in a canonical manner:  $v := \oplus_{l \in \mathcal{PS}(\mathcal{L} (H))} \, a_l \, \xi_{l} \in \xi \in P(\tilde{H}) $ is a $l_0$-th coordinate vector (i.e. $a_l = 0$ for all $l \neq l_0$) if and only if $v = \pm \xi_{l_0}  = \pm ( r \otimes r ) $ for some $r \in w \in P(H) $, and in a trivial manner,
\begin{eqnarray*}
\langle \xi,  X \, \xi \rangle_{\tilde{H}} = \langle v,  X \, v \rangle_{\tilde{H}} 
& = &  \sum_{ l \in \mathcal{PS}(\mathcal{L} (H)) }  a_l^2 \langle \xi_l,  X \, \xi_l \rangle_{\mathcal{L}(H) / ( l(a^*a)  = 0)  } \\
& = & \langle \xi_{l_0},  X \, \xi_{l_0} \rangle_{\mathcal{L}(H) / ( l_0(a^*a)  = 0)  } \\
& = & \langle (r \otimes r) ,  X \, (r \otimes r) \rangle_{\mathcal{L}(H) / ( l_0(a^*a)  = 0)  } \\
& = & \langle r , X \, r \rangle_{H}    \langle r , r \rangle_{H} \\
& = & \langle r , X \, r \rangle_{H}  \\
& = & \langle w , X \, w \rangle_{H}  \,.
\end{eqnarray*}

Hence,  when $n = \text{dim}_{\mathbb{R}}(H) = 2 \, \text{dim}_{\mathbb{C}}(H) < \infty$, there is a canonical map between the coordinate vectors of $P(\tilde{H})$ and the elements on $P(H)$, and the construction of $\tilde{H}$ is unnecessary (especially noticing that $\tilde{H}$ is way bigger).
However, when $n = \text{dim}_{\mathbb{R}}(H) = \infty$, we cannot construct this map, as it is well-known that there are states $l$ such that $ l( \mathcal{K}(H)) = 0$, where $\mathcal{K}(H)$ is the space of compact operators.  
(Meanwhile, as a possibly slightly confusing remark,
if we start with an arbitrary $C^*$-algebra, we may take $H$ itself as the GNS representation of the $C^*$-algebra right from the start, and a state over that $C^*$-algebra can already be represented as an integral previously described (noting $n = \text{dim}_{\mathbb{R}}(H) = 2 \, \text{dim}_{\mathbb{C}}(H) = \infty$), i.e.,
\[ \tau (X) = \int_{  P(H) } \langle w,  X \, w \rangle_{H} \, \mu_\tau ( d w) \,,  \]
without the need to appeal to the construction of another representation $\tilde{H}$ from $\mathcal{L}(H)$.)

Motivated by the previous discussion as well as by the commutative case, let us \textbf{define} the following:

\begin{definition}
The space $\tilde{\mathcal{M}} ( \overline{ \mathcal{A} \langle X \rangle } ) $ contains all the linear functionals  $\nu$ over $ \overline{ \mathcal{A} \langle X \rangle } =  \overline{ \mathcal{A} \langle X \rangle }^{\|\cdot \|_R} $ such that they can be represented as
\begin{eqnarray*}
\nu : \overline{ \mathcal{A} \langle X \rangle } & \rightarrow& \mathbb{C} \\
\nu(f) &=& \int_{ [ \mathfrak{sa} \, \mathcal{L} (H) ]^d  \times  P (\tilde{H}) }   \, \langle \xi,  f(X) \, \xi \rangle_{\tilde{H}} \, \mu_{\nu} (d \xi, d X) \,,
\end{eqnarray*}
for some $\mu_{\nu} \in \mathcal{M} \left(   B_R^{  [\mathfrak{sa} \mathcal{L}(H) ]^d } (0) \times P(\tilde{H})  \right) $ as a finite Radon measure.   The space $\tilde{\mathcal{P}} ( \overline{ \mathcal{A} \langle X \rangle } ) \subset \tilde{\mathcal{M}} ( \overline{ \mathcal{A} \langle X \rangle } ) $ contains all the states $\tau$ over $ \overline{ \mathcal{A} \langle X \rangle }  $ with this integral representation with $\mu_{\tau} \in \mathcal{P} \left(   B_R^{  [\mathfrak{sa} \mathcal{L}(H) ]^d } (0) \times P(\tilde{H})  \right) $ being a Radon probability measure.
\end{definition}

We note our assumption of $\mu_{\nu} $ being Radon is stronger than that being Borel as it is in general not true that $ P (\tilde{H}) $ is separable.
We also notice that any $\tau$ in $\tilde{\mathcal{P}} ( \overline{ \mathcal{A} \langle X \rangle } )$ is indeed a state, because for any $f \in \mathfrak{sa} \, \overline{ \mathcal{A} \langle X \rangle }  $ such that $f \geq 0$, we have $\tau (f) \geq 0$ and $\tau (1) = 1$ whenever $\mu_{\tau} $ is a probability measure.
Moreover, we notice that while it is clear that, whenever $T \in  [ \mathfrak{sa} \, \mathcal{L} (H) ]^d$ and $w \in P(H) $, the state $f \mapsto   \langle w,  f (T)  \, w \rangle_{H} $ is a pure state of $\overline{ \mathcal{A} \langle X \rangle }$, it is unclear to the author that these are the only pure states in general without further assumption on $\tilde{H}$, and therefore it is unknown that the elements of $\tilde{\mathcal{P}} ( \overline{ \mathcal{A} \langle X \rangle } ) $ are the only states.
In this work, we only restrict our attention to this subset of states $\tilde{\mathcal{P}} ( \overline{ \mathcal{A} \langle X \rangle } ) $.

\begin{remark}
Recall that for any (continuous) linear functional $\nu \in \overline{ \mathcal{A} \langle X \rangle }^*$, 
\[
  \left \|  \nu  \right \|   :=   \sup_{g \in  \overline{ \mathcal{A} \langle X \rangle }  } \frac{\left|  \nu (g) \right|}{ \| g \|_{R}} < \infty \,.
\]
We then have the following elementary lemma:

\begin{lemma}
If $\nu \in \tilde{\mathcal{M}} ( \overline{ \mathcal{A} \langle X \rangle } ) $, then $\| \nu \| = \nu(1) < \infty $ (thus $ \nu \in \overline{ \mathcal{A} \langle X \rangle }^*$).
\end{lemma}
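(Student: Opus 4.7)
The plan is to chain together two inequalities, one using the bound $\|f(T)\|_{\mathcal{L}(H)} \leq \|f\|_R$ already recorded in the excerpt, and the other by plugging in $g = 1$. No deep machinery should be needed; the finiteness of the Radon measure $\mu_\nu$ and the fact that representatives $\xi \in P(\tilde H)$ have $\|\xi\|_{\tilde H} = 1$ do all the work.

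First I would establish the upper bound $\|\nu\| \leq \nu(1)$. Fix $g \in \overline{\mathcal{A}\langle X\rangle}$. Since $\mu_\nu$ is a (positive) finite Radon measure on $B_R^{[\mathfrak{sa}\mathcal{L}(H)]^d}(0)\times P(\tilde H)$, and since any representative of $\xi\in P(\tilde H)$ satisfies $\|\xi\|_{\tilde H}=1$, Cauchy--Schwarz gives
\[
|\langle \xi, g(X)\xi\rangle_{\tilde H}| \leq \|g(X)\|_{\mathcal{L}(\tilde H)}\|\xi\|_{\tilde H}^2 = \|g(X)\|_{\mathcal{L}(\tilde H)}.
\]
Using the elementary inequality already recorded in the excerpt, namely $\|g(X)\|_{\mathcal{L}(H)} \leq \|g\|_R$ whenever $X \in B_R^{[\mathfrak{sa}\mathcal{L}(H)]^d}(0)$ (which extends in the obvious way when the GNS construction $\tilde H$ is used), integrating against $\mu_\nu$ yields
\[
|\nu(g)| \leq \int \|g(X)\|_{\mathcal{L}(\tilde H)}\, d\mu_\nu(\xi, X) \leq \|g\|_R \, \mu_\nu\!\left( B_R^{[\mathfrak{sa}\mathcal{L}(H)]^d}(0)\times P(\tilde H)\right).
\]
But evaluating the integral representation at $f = 1$ shows that this total mass is exactly $\nu(1)$, so $|\nu(g)| \leq \nu(1)\|g\|_R$, hence $\|\nu\| \leq \nu(1) < \infty$.

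For the reverse bound, I would simply plug $g = 1$ into the definition of $\|\nu\|$. Since $\|1\|_R = 1$ and $\nu(1) \geq 0$ by the above computation, we get
\[
\|\nu\| \;\geq\; \frac{|\nu(1)|}{\|1\|_R} \;=\; \nu(1).
\]
Combined with the previous step this gives $\|\nu\| = \nu(1) < \infty$, which in particular places $\nu$ in the continuous dual $\overline{\mathcal{A}\langle X\rangle}^{*}$.

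There is no genuine obstacle here; the only mildly subtle point is being careful about the non-separable setting, namely ensuring that $\langle \xi, g(X)\xi\rangle_{\tilde H}$ is jointly measurable in $(\xi, X)$ on the product so that the integrals make sense. This is already implicit in the definition of $\tilde{\mathcal{M}}(\overline{\mathcal{A}\langle X\rangle})$ (since the defining formula is taken to be meaningful for all $f \in \overline{\mathcal{A}\langle X\rangle}$), so there is nothing new to verify; the estimates above are then purely pointwise in $(\xi, X)$ and integrate without incident.
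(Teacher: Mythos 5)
Your proof is correct and follows essentially the same route as the paper: bound $|\nu(g)|$ by $\sup_X \|g(X)\|_{\mathcal{L}(H)} \cdot \nu(1) \leq \|g\|_R\,\nu(1)$ using the pointwise estimate $\|g(X)\|_{\mathcal{L}(H)}\leq\|g\|_R$ and the finiteness of $\mu_\nu$, then observe that $g=1$ saturates the bound. The only difference is purely presentational---you separate the lower bound into its own step and name Cauchy--Schwarz for the inner-product estimate, whereas the paper simply states that equality holds at $g=1$.
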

\begin{proof}
We first realize that, by definition,
\[
\nu(1) = \int_{ B_R^{  [\mathfrak{sa} \mathcal{L}(H) ]^d } \times  P (\tilde{H})  }  \,   \langle \xi,  1 \, \xi \rangle_{\tilde{H}}  \, \mu_{\nu} (d \xi, d X)  = \int_{ B_R^{  [\mathfrak{sa} \mathcal{L}(H) ]^d } \times  P (\tilde{H})  } \, \mu_{\nu} (d \xi, d X)  < \infty\,.
\]
Therefore, we have
\begin{eqnarray*}
 | \nu (g) | &=&  \left| \int_{ B_R^{  [\mathfrak{sa} \mathcal{L}(H) ]^d } \times  P (\tilde{H})  } \, \langle \xi,  g(X) \, \xi \rangle_{\tilde{H}} \, \mu_{\nu} (d \xi, d X) \right| \\
&\leq& \left| \int_{ B_R^{  [\mathfrak{sa} \mathcal{L}(H) ]^d } \times  P (\tilde{H})  } \, \mu_{\nu} (d \xi, d X) \right|  \sup_{(X,\xi) \in B_R^{  [\mathfrak{sa} \mathcal{L}(H) ]^d } \times  P (\tilde{H})  } | \langle \xi,  g(X) \, \xi \rangle_{\tilde{H}} | \\
& \leq & \nu (1) \,   \sup_{X \in B_R^{  [\mathfrak{sa} \mathcal{L}(H) ]^d } } \| g(X) \|_{\mathcal{L}(H)} \\
& \leq & \nu (1) \,  \| g \|_R
\end{eqnarray*}
and the equality holds if $g (X) = 1$ .  Therefore, $ \| \nu \|  = \nu (1)  < \infty$, and the result follows.
\end{proof}
\end{remark}

Now for any $\tau \in\tilde{\mathcal{P}} ( \overline{ \mathcal{A} \langle X \rangle } ) $, with the above representation, we can extend the domain of definition of $\tau$ to
\begin{eqnarray*}
\tau :  C \left( B_R^{  [\mathfrak{sa} \mathcal{L}(H) ]^{2d} } (0)  ; [ \mathcal{L}(H) ]^d  \right) & \rightarrow&  C \left( B_R^{  [\mathfrak{sa} \mathcal{L}(H) ]^{d} } (0)  ; \mathbb{C} \right) \\
\left[ \tau(f) \right](Y) &:=& 
 \int_{ [ \mathfrak{sa} \, \mathcal{L} (H) ]^d  \times  P (\tilde{H}) }   \, \langle \xi,  f(X,Y) \, \xi \rangle_{\tilde{H}} \, \mu_{\tau} (d \xi, d X)  \,.
\end{eqnarray*}
Now, for $X$ an indeterminate in $ [ \mathfrak{sa} \, \mathcal{L} (H) ]^d $, let us define $T^*_{X} : \overline{ \mathcal{A} \langle Y \rangle }  \rightarrow \overline{ \mathcal{A} \langle X, Y \rangle }  $  as 
\[
(T^*_X g)(Y) = g (X-Y)  \,.
\]
Let us also write
\[
\mathcal{E} (X) :=  \sum_{i=1}^d  X_i^2 \,.
\]
With these settings in mind, we may now define, for any $\tau \in\tilde{\mathcal{P}} ( \overline{ \mathcal{A} \langle X \rangle } ) $, $X \in  [ \mathfrak{sa} \, \mathcal{L} (H) ]^d  $,
\begin{eqnarray*}
d (X, \tau ) &:= & \lim_{p \rightarrow \infty}  \left( \tau \left(   \left[   (T^*_X \mathcal{E} )( \cdot )   \right]^{-p}  \right) \right)^{- \frac{1}{2p}}   \\
&=& \lim_{p \rightarrow \infty}  \left[  \int_{ [ \mathfrak{sa} \, \mathcal{L} (H) ]^d  \times  P (\tilde{H}) }   \int_{\sigma(\mathcal{E}(X-Y))}  \lambda^{-p} \,  E_{\mathcal{E}(X-Y)}  \left( \xi, \xi \right) (d \lambda )  \, \mu_{\tau} (d \xi, d Y) \right]^{-\frac{1}{2p}} \\
&=&\inf_{ ( \xi, Y ) \in \, \text{spt} \, \mu_\tau }  \left \{  \sqrt{  \inf \left\{  \lambda \in  \text{spt} \left ( E_{\mathcal{E}(X-Y)}  \left( \xi, \xi \right) \right)  \right \} } \right \} \,,
\end{eqnarray*}
where we denote $E_{T}  \left(\cdot,\cdot \right) (d \lambda)$ the projection-valued measure given by the spectral decomposition of $T \in \mathfrak{sa} \, \mathcal{L}(\tilde{H})$ and we denote
$ \text{spt} \left ( E_{\mathcal{E}(X-Y)}  \left( \xi, \xi \right) \right) \subset  \sigma( \mathcal{E}(X-Y))  $ 
the support of the spectral measure for any choice of $\xi$, where for any bounded operator $T \in  \mathcal{L} (H)$, $\sigma(T)$ denotes the spectrum of $T$:
\[
\{ \lambda \in \mathbb{C} \,:\, T- \lambda I \text{ does not have a bounded inverse.}\}
\]
and we recall $\sigma(T) \subset \mathbb{R}$ whenever $T$ is self-adjoint.
%and $H_{T,\lambda}$ is the ``$\lambda$-eigenspace" of $T$ in a sense that $\tilde{H} \cong \int_{R}^{\oplus} H_{T,\lambda} \, \mu_{\lambda} (d\lambda)$ for some $\mu_{\lambda}$ supported on $\sigma(T)$.
We notice also that with this definition, whenever $X - \bigcup \{ Y : (\xi,Y ) \in \text{spt}(\mu_\tau) \} \subset \mathcal{K}(H)$ the space of compact operators, we have $d (X, \tau ) = 0$. 
Moreover, for $\tau \in\tilde{\mathcal{P}} ( \overline{ \mathcal{A} \langle X \rangle } ) $, since $\text{spt}(\mu_\tau) \subset   B_R^{  [\mathfrak{sa} \mathcal{L}(H) ]^d } (0) \times P(\tilde{H}) $, then for all $X \in B_R^{  [\mathfrak{sa} \mathcal{L}(H) ]^d } (0) $, we have $ d (X, \tau )  \leq R $.

% For notational sake, we may also extend the function $d (\cdot, \tau ) $ by zero to the whole space $[ \mathcal{L} (H) ]^d $.

As an example, if $\text{dim}_{\mathbb{R}} \, H = n < \infty$, $A \subset  [ \mathfrak{sa} \, \mathcal{L} (H) ]^d   $ and $ \tau =  \int_{ A }    \frac{1}{n |A|}  \, \text{tr} (\cdot) \,  d X$, then 
\[
d (X,  \tau )   = \inf_{Y \in A} \left\{ \sqrt{ \inf \{ \lambda \in \sigma( \mathcal{E}(X-Y))  \}  } \right\}  \,,
\]
and if 
$ \tau =   \int_{ [ \mathfrak{sa} \, \mathcal{L} (H) ]^d  }  \frac{1}{n}  \, \text{tr} (\cdot) \,  \delta_{Y_0} (d X)$, then 
\[
d (X,  \tau )   =  \sqrt{ \inf \{ \lambda \in \sigma( \mathcal{E}(X-Y_0))  \}  }   \,.
\]
Notice that the level set $\{ d (X,  \tau ) = 0 \} $ is itself of dimension $ d \left( \frac{n (n+1)}{2} - 1 \right)$.  In fact, when $n < \infty$, for any pure state $\tau := \int_{ [ \mathfrak{sa} \, \mathcal{L} (H) ]^d  }  \langle v  ,   ( \cdot  ) \, v \rangle \,   \delta_{Y_0} (d X) $, we have 
\[
\text{dim}_{\mathbb{R}} \left(  \{ d (X,   \tau )  = 0 \} \right) =  d \left( \frac{n (n+1)}{2} - 1 \right) \,.
\]
%Moreover, when $n = \infty$, we notice that 
%\[
%d (X, \tau_Y ) = 0
%\]
%if there exists $Y_0 $ with a compact $X- Y_0$ such that $ \overline{  \text{span} \{ \xi :  ( \xi, Y_0 ) \in \, \text{spt} \, \mu_{Y} \} } = H $, hence you may expect this distance does not tell the difference among compact operators.

With this in mind, given another positive linear functional $\nu$, we may define the following relative distance zeta functional $\zeta_{\tau} (s, \nu)$ for $s \in \mathbb{C}$ and $\Re (s) $ sufficiently large:

\begin{definition}
Given $\tau \in \tilde{\mathcal{P}} ( \overline{ \mathcal{A} \langle X \rangle } ) $ and $\nu \in\tilde{\mathcal{M}} ( \overline{ \mathcal{A} \langle X \rangle } ) $, the relative distance zeta functional $\zeta_{\tau} (s, \nu)$ for $s \in \mathbb{C}$ and $\Re (s) $ sufficiently large is given by
\begin{eqnarray}
\zeta_{\tau} (s, \nu)  :  \overline{ \mathcal{A} \langle X \rangle } & \rightarrow&  \mathbb{C}  \notag  \\
\left[ \zeta_{\tau} (s, \nu) \right] (g) &:=&  \nu \left( d (\cdot, \tau )^{s - d} \, g(\cdot) \right) \, . \label{zeta_non_communtative} 
\end{eqnarray}
For a given (fixed) $g$, whenever possible, we then meromorphically continue $\left[ \zeta_{\tau} (s, \nu) \right] (g)$ to a domain of definition $\Omega_{\left[ \zeta_{\tau} ( \, \cdot \, , \nu ) \right] (g)} $ such that $\{ s \in \mathbb{C} \, : \,  \Re (s) > \overline{\text{dim}}_B (\tau, \nu) \} \subset \Omega_{\left[ \zeta_{\tau} ( \, \cdot \, , \nu ) \right] (g)} \subset \mathbb{C}$, and still denote the continued function as $\left[ \zeta_{\tau} (s, \nu ) \right] (g)$.   (The fact that the functional is well defined for $ \Re (s) > \overline{\text{dim}}_B (\tau, \nu) $ is given in Lemma \ref{convergence_2}. )
\end{definition}

We note that this definition is not restricted to the case when $n := \text{dim}_{\mathbb{R}} (H) < \infty$, and the relative distance zeta functional can be explicitly written as
\begin{eqnarray*}
&  &\nu \left( d (\cdot, \tau )^{s- d } g(\cdot) \right) \\
&=&  \int_{ [ \mathfrak{sa} \, \mathcal{L} (H) ]^d  \times   P (\tilde{H}) }    \left(  \inf_{ ( \xi, Y ) \in \, \text{spt} \, \mu_{\tau} }  \bigg \{  \inf \left\{  \lambda \in  \text{spt} E_{\mathcal{E}(X-Y)}  \left( \xi, \xi \right)  \right \}  \bigg \} \right)^{\frac{s - d }{2}} \langle \xi, g(X) \xi \rangle_{\tilde{H}} \, \mu_{\nu} ( d \xi, d X)
\end{eqnarray*}
The choice of the normalization $s-d$ in the power comes from the suggestion that, when $n := \text{dim}_{\mathbb{R}} (H) < \infty$, we notice
\[
s - d = \left[ s + d \left( \frac{n (n+1)}{2} - 1 \right) \right] - \frac{d (n) (n+1)}{2} 
\]
and it cancels the dimesnional effect from
$\left\{ d (X,   \int_{ [ \mathfrak{sa} \, \mathcal{L} (H) ]^d  }  \langle \xi  ,   ( \cdot  ) \, \xi \rangle \,   \delta_{Y_0} (d X)  )  = 0 \right \} $ in an ambient space of dimension $\frac{d (n) (n+1)}{2} $.

\begin{remark}
We notice that even if $\text{dim}_{\mathbb{R}} \, H = n < \infty$, $d=1$ and $ \tau =  \int_{ [ \mathfrak{sa} \, \mathcal{L} (H) ]^d  }  \frac{1}{n}  \, \text{tr} (\cdot) \, \tilde{\mu}_\tau (d X)$,  $ \nu =  \int_{ [ \mathfrak{sa} \, \mathcal{L} (H) ]^d  }  \frac{1}{n}  \, \text{tr}  (\cdot) \, \tilde{\mu}_\nu (d X)$ for some measures $\tilde{\mu}_\tau,  \tilde{\mu}_\nu $ over $\mathcal{L}(H)$, as long as $n > 1$, we see that in general, for any choice of $g$, 
\[
\left[ \zeta_{\tau } (s, \nu) \right] (g) \neq \zeta_{\tilde{\mu}_\tau} (s, \tilde{\mu}_\nu ) \,.
\]
Therefore our definition of \eqref{zeta_non_communtative} is not (interpreted in any sense as) a distance zeta function treating $\mathcal{L}(H)$ as a metric space.
In fact, when $n$ is finite, our choice of having an infrimum over $\sigma( \mathcal{E}(X-Y)) $ instead of a supremum compensates for the dimensionality effect within one variable $X_i$ from the aforementioned dimensionality argument.  We postulate that we always have $\overline{\text{dim}}_B (\tau, \nu)$ on the strip $\{  0 \leq   \Re (s)  \leq d \}$, irrespective of whether $n$ is finite or infinite.

% Notice that this is also different from the spectral zeta function of an operator even if we take $\tilde{\mu}_\tau = \delta_{X_0}$ for obvious reason.
\end{remark}

From now on, let us always write a $\delta$-tubular-neighborhood linear functional of $\tau \in \tilde{\mathcal{P}} ( \overline{ \mathcal{A} \langle X \rangle } ) $ with respect to $\nu \in \tilde{\mathcal{M}} ( \overline{ \mathcal{A} \langle X \rangle } ) $ as
\[
\nu^{\tau,\delta} :=  \nu \circ  \text{mul}_{\chi_{ \{ X \in [ \mathfrak{sa} \, \mathcal{L} (H) ]^d : d (X, \tau )  \leq t  \}  } } 
\]
We can define the relative lower $s$-dimensional Minkowski content of $\tau$ with respect to $\nu$ for $s \geq 0 $:
\[
\mathcal{M}^s_*(\tau, \nu ) := \liminf_{ \delta \rightarrow 0^+} \,  \delta^{s-d}  \,  \| \nu^{\tau,\delta}  \|      \,,
\]
the relative upper $s$-dimensional Minkowski content of $\tau$ with respect to $\nu$ for $s \geq 0$:
\[
\mathcal{M}^{*\,s}(\tau, \nu ) := \limsup_{\delta \rightarrow 0^+} \, \delta^{s-d}  \,  \|  \nu^{\tau,\delta}   \|  \,,
\]
the relative lower box counting dimension of $\tau$ with respect to $\nu$:
\[
\underline{\text{dim}}_B (\tau, \nu) = \inf \{ s \geq 0 \, , \,  \mathcal{M}^s_*(\tau, \nu )  = 0 \} = \sup \{ s \geq 0 \, , \,  \mathcal{M}^s_*(\tau, \nu )  = \infty \}.
\]
the relative upper box counting dimension of $\tau$ with respect to $\nu$:
\[
\overline{\text{dim}}_B (\tau, \nu) = \inf \{ s \geq 0 \, , \, \mathcal{M}^{* \, s}(\tau, \nu )  = 0 \} = \sup \{ s \geq 0 \, , \, \mathcal{M}^{* \, s}(\tau, \nu )  = \infty \}.
\]
If $\underline{\text{dim}}_B (\tau, \nu) = \overline{\text{dim}}_B (\tau, \nu)  $, we write $\text{dim}_B (\tau, \nu) $ the relative box counting dimension of $\tau $ with respect to $\nu$.    $\tau$ is relatively Minkowski nondegenerate with respect to $\nu$ if 
\[
0 <  \mathcal{M}^{\text{dim}_B (\tau, \nu)}_*(\tau, \nu )  \leq  \mathcal{M}^{*\,\text{dim}_B (\tau, \nu)}(\tau, \nu )  < \infty \,,
\]
and the $\text{dim}_B (\tau, \nu)$-dimensional relative Minkowski content of $\tau$ with respect to $\nu$ is 
\[
\mathcal{M}^{\text{dim}_B (\tau, \nu)}(\tau, \nu )  := \mathcal{M}^{\text{dim}_B (\tau, \nu)}_*(\tau, \nu )  = \mathcal{M}^{*\,\text{dim}_B (\tau, \nu)}(\tau, \nu ) 
\]
if they are so equal. 

With these, we would like to consider the relative distance zeta functional of $\tau \in  \tilde{\mathcal{P}} ( \overline{ \mathcal{A} \langle X \rangle } ) $ by a chosen $\nu \in  \tilde{\mathcal{M}} ( \overline{ \mathcal{A} \langle X \rangle } ) $,  i.e.
\[
s \mapsto [ \zeta_{\tau} (s, \nu) ]  \,.
\]
and check whether the above is well-defined and holomorphic with respect to $s$ whenever $\Re (s) >  \overline{\text{dim}}_B (\tau, \nu)$, i.e. for all $g \in \overline{ \mathcal{A} \langle X \rangle }  $, we have
\[
\{ s \in \mathbb{C} \, : \,  \Re (s) > \overline{\text{dim}}_B (\tau, \nu) \} \subset \Omega_{\left[ \zeta_{\tau} ( \, \cdot \, , \nu ) \right] (g)} \,.
\]

In fact, similar to  \cite{LRZ_book,LRZ,LRZ3,LRZ5,LRZ6a}, we have the following lemma.

\begin{lemma} \label{convergence_2}
If  $\tau \in \tilde{\mathcal{P}} ( \overline{ \mathcal{A} \langle X \rangle } ) $ and $\nu \in \tilde{\mathcal{M}} ( \overline{ \mathcal{A} \langle X \rangle } )  $, then
$\zeta_{\tau} (s, \nu ) $ is a well-defined bounded linear functional and is holomorphic with respect to $s$ whenever $\Re (s) > \overline{\text{dim}}_B (\tau, \nu)$, with some $\delta > 0$ such that
\begin{eqnarray}
[ \zeta_{\tau} (s, \nu) ] (g) =   \delta^{s-d}   \nu (g)  -  (s-d) \int_{0}^{\delta}  t^{s-d-1} \,   \nu^{\tau,t}  (g) \, dt
\label{zeta_non_communtative_tube} 
\end{eqnarray}
for any  $g \in \overline{ \mathcal{A} \langle X \rangle }  $, and 
\begin{eqnarray}
[ \partial_s  \zeta_{\tau} (s, \nu) ] (g) =  [ \zeta_{\tau} (s, \nu) ] \left(  \log(d (\cdot, \tau ))  g (\cdot ) \right)  \,.
\label{zeta_non_communtative_derivative} 
\end{eqnarray}
Furthermore, if the relative box counting dimension of $\tau$ with respect to $\nu$, i.e. $\text{dim}_B (\tau, \nu) $, exists and $\text{dim}_B (\tau, \nu) < d$, and moreover, if $ \mathcal{M}^{\text{dim}_B (\tau, \nu)}_*(\tau, \nu )  > 0  $, then $ \| \zeta_{\tau} (s, \nu ) \|  \rightarrow \infty $ when $s \rightarrow \text{dim}_B (\tau, \nu)^+$ for $s \in \mathbb{R}$.
\end{lemma}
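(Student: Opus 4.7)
The plan is to mirror the proof of Lemma \ref{convergence_1}, reducing the noncommutative functional identity to a scalar Fubini/layer-cake argument applied to the underlying Radon measure $\mu_\nu$ in the representation of $\nu \in \tilde{\mathcal{M}} ( \overline{ \mathcal{A} \langle X \rangle } )$. Specifically, for any fixed $g \in \overline{\mathcal{A}\langle X\rangle}$,
\[
[\zeta_\tau(s,\nu)](g) \;=\; \int_{B_R^{[\mathfrak{sa}\mathcal{L}(H)]^d}(0) \,\times\, P(\tilde H)} d(X,\tau)^{s-d}\,\langle\xi,g(X)\xi\rangle_{\tilde H}\,\mu_\nu(d\xi,dX),
\]
where the only noncommutative ingredient is the pointwise bounded scalar factor $(\xi,X)\mapsto\langle\xi,g(X)\xi\rangle_{\tilde H}$, which is dominated by $\|g(X)\|_{\mathcal{L}(H)}\le\|g\|_R$ uniformly on $B_R^{[\mathfrak{sa}\mathcal{L}(H)]^d}(0)\times P(\tilde H)$. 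Thus, once convergence of the $d(X,\tau)^{s-d}$-factor against $|\mu_\nu|$ is known, boundedness of $\zeta_\tau(s,\nu)$ as a linear functional follows at once from $|[\zeta_\tau(s,\nu)](g)|\le\|g\|_R\cdot C(s,\delta)$, with $C(s,\delta)$ read off from the tube formula below.

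To establish \eqref{zeta_non_communtative_tube}, note that since $\mathrm{spt}\,\mu_\nu\subset B_R\times P(\tilde H)$ and $\mathrm{spt}\,\mu_\tau\subset B_R\times P(\tilde H)$ we have $d(X,\tau)\le 2R$ on $\mathrm{spt}\,\mu_\nu$, so for any $\delta\ge 2R$ the pointwise layer-cake identity
\[
d(X,\tau)^{s-d} \;=\; \delta^{s-d} - (s-d)\int_0^\delta t^{s-d-1}\,\chi_{\{d(X,\tau)\le t\}}\,dt
\]
holds. Inserting this into the integral above and interchanging the order of integration (which requires only Borel measurability of $X\mapsto d(X,\tau)$, valid because it is an infimum of continuous functions over $\mathrm{spt}\,\mu_\tau$ and hence upper semicontinuous) recovers \eqref{zeta_non_communtative_tube} with the tube functional $\nu^{\tau,t}(g)=\int\chi_{\{d(X,\tau)\le t\}}\langle\xi,g(X)\xi\rangle_{\tilde H}\,\mu_\nu(d\xi,dX)$. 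Convergence of the $t$-integral for $\Re(s)>\overline{\mathrm{dim}}_B(\tau,\nu)$ follows exactly as in Lemma \ref{convergence_1}: by the definition of the upper Minkowski content there is $\delta_0>0$ with $\|\nu^{\tau,t}\|\le t^{d-(\Re(s)+\overline{\mathrm{dim}}_B(\tau,\nu))/2}$ for $t<\delta_0$, making the integrand dominated by $t^{(\Re(s)-\overline{\mathrm{dim}}_B(\tau,\nu))/2-1}$, which is integrable near zero.

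For holomorphicity in $s$ and the derivative formula \eqref{zeta_non_communtative_derivative}, I apply the same series-expansion bound used in the commutative proof pointwise in $(X,\xi)$: for $|h|$ small enough that $\Re(s)-2|h|>\overline{\mathrm{dim}}_B(\tau,\nu)$,
\[
\Bigl|\tfrac{d(X,\tau)^{s+h-d}-d(X,\tau)^{s-d}}{h}-d(X,\tau)^{s-d}\log d(X,\tau)\Bigr| \;\le\; C|h|\,\log^2\!\bigl(d(X,\tau)\bigr)\,d(X,\tau)^{\Re(s)-d-2|h|},
\]
which, multiplied by the uniformly bounded $\langle\xi,g(X)\xi\rangle_{\tilde H}$, is $\mu_\nu$-integrable by the previous paragraph applied at $\Re(s)-2|h|$. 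Dominated convergence then yields both holomorphicity and \eqref{zeta_non_communtative_derivative}.

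Finally, for the blow-up, I evaluate the functional at the fixed test element $g\equiv 1\in\overline{\mathcal{A}\langle X\rangle}$, for which $\nu^{\tau,t}(1)=\|\nu^{\tau,t}\|$ by the elementary lemma preceding \eqref{zeta_non_communtative_tube}. Positivity of the lower Minkowski content furnishes $C>0$ with $\|\nu^{\tau,t}\|\ge C\,t^{d-\mathrm{dim}_B(\tau,\nu)}$ for small $t$, so \eqref{zeta_non_communtative_tube} applied at $g=1$ and $\mathrm{dim}_B(\tau,\nu)<s\le d$ gives
\[
[\zeta_\tau(s,\nu)](1) \;\ge\; \delta^{s-d}\nu(1) - C(s-d)\,\frac{\delta^{s-\mathrm{dim}_B(\tau,\nu)}}{s-\mathrm{dim}_B(\tau,\nu)},
\]
whose second term dominates and tends to $+\infty$ as $s\to\mathrm{dim}_B(\tau,\nu)^+$ since $-(s-d)\ge 0$. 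Because $\|\zeta_\tau(s,\nu)\|\ge|[\zeta_\tau(s,\nu)](1)|/\|1\|_R$, the norm of the functional necessarily diverges. The main obstacle I anticipate is the measurability/Fubini step for $d(X,\tau)$; however, this reduces to the observation that $d(\cdot,\tau)$ is an infimum of continuous functions of $X$, hence upper semicontinuous with closed sublevel sets, which is more than enough to justify the interchange.
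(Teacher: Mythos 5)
Your proposal follows essentially the same route as the paper's proof: reduce to the scalar layer-cake identity against the underlying Radon measure $\mu_\nu$, obtain boundedness from $|\langle\xi,g(X)\xi\rangle_{\tilde H}|\le\|g\|_R$, prove convergence of the $t$-integral for $\Re(s)>\overline{\text{dim}}_B(\tau,\nu)$ from the upper Minkowski content bound, get holomorphicity and the $\log$-derivative formula from the same pointwise series-expansion estimate, and establish the blow-up by testing at $g=1$ and invoking positivity of the lower Minkowski content. One small caution: your measurability justification --- that $X\mapsto d(X,\tau)$ is an infimum of continuous functions and hence upper semicontinuous --- is not obviously correct, since for fixed $(\xi,Y)$ the map $X\mapsto\inf\operatorname{spt} E_{\mathcal{E}(X-Y)}(\xi,\xi)$ need not be continuous (the mass seen by $\xi$ near the bottom of the spectrum can appear or disappear as $X$ varies); the paper does not address this point either, so it is not a gap relative to the paper, but it would deserve a more careful argument in a fully rigorous treatment.
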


\begin{remark}
In the spirit of  \cite{LRZ_book,LRZ,LRZ3,LRZ5,LRZ6a}, the map $ (s, \nu) \mapsto \left(  g \mapsto  \int_{0}^{\delta}  t^{s-d-1} \,   \nu^{\tau,t}  (g) \, dt \right) $ in  \eqref{zeta_non_communtative_tube} can also be referred to as the relative tube zeta functional of $\tau $.
\end{remark}

\begin{proof}
First notice that since $\tau \in\tilde{\mathcal{P}} ( \overline{ \mathcal{A} \langle X \rangle } ) $, for all $X \in B_R^{  [\mathfrak{sa} \mathcal{L}(H) ]^d } (0) $, we have $ d (X, \tau )  \leq R $. Therefore $\text{spt} (\mu_\nu) \subset   B_R^{  [\mathfrak{sa} \mathcal{L}(H) ]^d } (0) \times P(\tilde{H})  \subset  \{ (X, \xi) \in [ \mathfrak{sa} \, \mathcal{L} (H) ]^d  \times   P (\tilde{H}) : d (X, \tau )  \leq \delta  \} $ for some small $ \delta > 0$.
It is also handy to see that $ \| \nu^{\tau,t}  \|  \leq \left \|  \nu  \right \|  = \nu(1) < \infty$ for all $t > 0$ since $\nu \in \tilde{\mathcal{M}} ( \overline{ \mathcal{A} \langle X \rangle } )  \subset \overline{ \mathcal{A} \langle X \rangle }^* $.  Therefore, with the finiteness of $\mu_{\nu}$, we have
\begin{eqnarray*}
&   &  \int_{ [ \mathfrak{sa} \, \mathcal{L} (H) ]^d  \times   P (\tilde{H}) }    \left( d (X, \tau )\right)^{ s - d } \langle \xi, g(X) \xi \rangle_{\tilde{H}} \, \mu_{\nu} ( d \xi, d X)  \\
& =&  (s-d) \int_{0}^{\delta}  t^{s-d-1}  
\int_{ \{ d (X, \tau ) > t  \} }  \langle \xi, g(X) \xi \rangle_{\tilde{H}} \, \mu_{\nu} ( d \xi, d X)  dt \\
& =&   \delta^{s-d}   \nu (g)  -  (s-d) \int_{0}^{\delta}  t^{s-d-1} \,  \nu \left( \chi_{ \{ X \in [ \mathfrak{sa} \, \mathcal{L} (H) ]^d : d (X, \tau )  \leq t  \}  }(\cdot) \, g (\cdot) \right) \,  dt \\
& =&   \delta^{s-d}   \nu (g)  -  (s-d) \int_{0}^{\delta}  t^{s-d-1} \,   \nu^{\mu,t}  (g) \, dt
\end{eqnarray*}
where the last line is finite by definition whenever $\Re (s) > \overline{\text{dim}}_B (\tau, \nu)$, noticing that
\[ 
\mathcal{M}^{*\, \frac{\Re (s) + \overline{\text{dim}}_B (\tau, \nu) }{2} }(\tau, \nu ) =  \limsup_{ \delta \rightarrow 0^+} \,  \delta^{ \frac{\Re (s) + \overline{\text{dim}}_B (\tau, \nu) }{2} -d}  \,  \| \nu^{\tau,\delta}  \|  = 0 \,,
\]
implies the existence of $\delta_0 > 0$ such that for all $t < \delta_0$
\[
t^{  \frac{\Re (s) + \overline{\text{dim}}_B (\tau, \nu) }{2} -d}  \| \nu^{\tau,t}  \|  <  1
\]
and hence,
\begin{eqnarray*}
 \left | \int_{0}^{\delta_0}  t^{s-d-1} \,   \nu^{\tau,t}  (g) \, dt \right| \leq  \int_{0}^{\delta_0}  t^{  \frac{\Re (s) - \overline{\text{dim}}_B (\tau, \nu) }{2} -1} \,
\left( t^{  \frac{\Re (s) + \overline{\text{dim}}_B (\tau, \nu) }{2} -d}   \| \nu^{\tau,t}  \|  \| g \|_R \right )\, dt  < \infty \,,
\end{eqnarray*}
and
\[
\| \zeta_{\tau} (s, \nu) \| < \infty\,.
\]
Next, we may check via a direct series expansion that for any fixed $s$ such $\Re(s) > \overline{\text{dim}}_B (\tau, \nu)$, and for all $h \in \mathbb{C}$ such that $|h| < \frac{1}{2} \left( \Re(s) - \overline{\text{dim}}_B (\tau, \nu) \right) $,
\begin{eqnarray}
& & \left | \frac{ [\zeta_{\tau} (s + h, \nu ) ] (g) - [ \zeta_{\tau} (s , \nu ) ] (g) }{h}  -  [ \zeta_{\tau} (s, \nu) ] (  \log(d (\cdot, \tau ))  g (\cdot ))   \right |  \notag \\
&\leq& 
\int_{ [ \mathfrak{sa} \, \mathcal{L} (H) ]^d  \times   P (\tilde{H}) }  \left|  \frac{d (X, \tau )^h - 1  }{h} - \log \left( d (X, \tau ) \right)  \right|   d (X, \tau ) ^{ \Re(s) - d } \langle \xi, g(X) \xi \rangle_{\tilde{H}} \, \mu_{\nu} ( d \xi, d X)  \notag  \\
&\leq&  C
\int_{ [ \mathfrak{sa} \, \mathcal{L} (H) ]^d  \times   P (\tilde{H}) }  
|h| \left( \log \left( d (X, \tau ) \right) \right)^2 \left( \sum_{k=0}^{\infty} \frac{|\log(d (X, \tau ) ) h|^k}{k!} \right) d (X, \tau ) ^{ \Re(s) - d } \langle \xi, g(X) \xi \rangle_{\tilde{H}} \, \mu_{\nu} ( d \xi, d X)  \notag  \notag  \\
&\leq&  C |h|
\int_{ [ \mathfrak{sa} \, \mathcal{L} (H) ]^d  \times   P (\tilde{H}) }  
 \left( \log \left( d (X, \tau )  \right) \right)^2  d (X, \tau ) ^{ \Re(s) - d - |h| } \langle \xi, g(X) \xi \rangle_{\tilde{H}} \, \mu_{\nu} ( d \xi, d X)  \notag  \\
& \leq & C |h| [ \zeta_{\tau} (\Re(s) - 2 |h|, \nu) ] \left(  \log(d (\cdot, \tau ))^2  d (\cdot, \tau )^{|h|} g (\cdot ) \right)  \\
& \leq & C |h|  \|  \zeta_{\tau} (\Re(s) - 2 |h|, \nu) \|  \| g \|_{R}  \notag  \\
&=&  O( |h| \| g \|_R) \,. \label{holomorphic_argument}
\end{eqnarray}
Now we come to verify the last part of the theorem.  We first realize, again from the fact that  $ \| \nu^{\tau,t}  \|  \leq \left \|  \nu  \right \| < \infty$, as well as $\zeta_{\tau} (s, \nu ) $ is positive and $ \|  \zeta_{\tau} (s, \nu ) \| < \infty $ whenever  $s$ is a real number with $s > \text{dim}_B (\mu, \nu) $, we have, for all $\delta > 0$ and $s > \text{dim}_B (\mu, \nu) $,
\[
\|\zeta_{\tau} (s, \nu )  \| = [\zeta_{\tau} (s, \nu )] (1) \, , \,  \| \nu \|  = \nu(1)  \, \text{ and } \,  \|  \nu^{\tau,\delta}   \| =  \nu^{\tau,\delta} (1) \,.
\]
Hence, from the definition of
\[
0 < \mathcal{M}^{\text{dim}_B (\tau, \nu)}_*(\tau, \nu ):= \liminf_{\delta \rightarrow 0^+}  \, \delta^{ \text{dim}_B (\tau, \nu)- d} \,  \|  \nu^{\tau,\delta}   \|  =  \liminf_{\delta \rightarrow 0^+}  \, \delta^{ \text{dim}_B (\tau, \nu)- d} \, \nu^{\tau,\delta} (1)   \,,
\]
we have the existence of $C > 0$ such that 
\[
 \nu^{\tau,\delta} (1) \geq C \delta^{ d- \text{dim}_B (\tau, \nu)} \,
\]
and that leads to the observation that if $d \geq s > \text{dim}_B (\tau, \nu) $, we have
\begin{eqnarray*}
\infty > \|  \zeta_{\tau} (s, \nu )  \| =  [ \zeta_{\tau} (s, \nu ) ](1) &=& \delta^{s-d}   \nu (1)  -  (s-d) \int_{0}^{\delta}  t^{s-d-1} \, \nu^{\tau,t}(1)  dt \\
&\geq &  - C (s-d)  \int_{0}^{\delta}  t^{s- \text{dim}_B (\tau, \nu)-1}   dt \\
&=&  - C (s-d)  \frac{ \delta^{s- \text{dim}_B (\tau, \nu)} }{  s- \text{dim}_B (\tau, \nu)   }  \,.
\end{eqnarray*}
This implies that $ \|  \zeta_{\tau} (s, \nu )  \| =  [ \zeta_{\tau} (s, \nu ) ](1)  \rightarrow \infty $ when $s \rightarrow \text{dim}_B (\tau, \nu)^+$.
\end{proof}

\begin{remark}
Given an arbitrary  $\nu \in \tilde{\mathcal{M}} ( \overline{ \mathcal{A} \langle X \rangle } ) $, and for a fixed $\delta > 0$, the $\delta$-tubular-neighborhood linear functional of $\tau \in  \tilde{\mathcal{P}} ( \overline{ \mathcal{A} \langle X \rangle } ) $ with respect to $\nu$ always satisfies
$\text{spt} ( \mu_{\nu^{\tau,\delta} } ) \subset  \{ (X, \xi) \in [ \mathfrak{sa} \, \mathcal{L} (H) ]^d  \times   P (\tilde{H}) : d (X, \tau )  \leq t  \} $, and therefore $\zeta_{\tau} (s, \nu^{\tau,\delta}  ) $ is well-defined and holomorphic whenever $\Re (s) > \overline{\text{dim}}_B (\mu, \nu^{\tau,\delta} )$.  Also, the above representation of $\zeta_{\tau} (s, \nu^{\tau,\delta}) $ in the lemma shows that, given  $g \in \overline{ \mathcal{A} \langle X \rangle }  $,
\[
 h_{\delta_1, \delta_2}(s) := [ \zeta_{\tau} (s, \nu^{\tau,\delta_2}) - \zeta_{\tau} (s, \tau^{\mu,\delta_1}) ] (g)
\]
can be extended to an entire function and hence
\[
\overline{\text{dim}}_B (\tau, \nu^{\tau,\delta_1} ) = \overline{\text{dim}}_B (\tau, \nu^{\tau,\delta_2} )
\]
for any small $\delta_1 \neq \delta_2$ larger than $0$.  This is again coming from the fact that
\[
h_{\delta_1, \delta_2} (s) =  ( \delta_2^{s-d} -  \delta_1^{s-d}  )   \nu (g) -   (s-d) \int_{\delta_1}^{\delta_2}  t^{s-d-1} \, \nu^{\tau,t}(g)  dt 
\]
and hence
\[
| h_{\delta_1, \delta_2} (s) -  ( \delta_2^{s-d} -  \delta_1^{s-d}  )   \nu (g)   | \leq 
\begin{cases}
 |s-d| \nu^{\tau,\delta_2} (g)   \left( \frac{ \delta_2^{\Re(s)-d} -  \delta_1^{ \Re(s) -d}  }{ \Re(s) - d } \right)    &  \text{ when } \Re(s)  \neq d \,, \\
 |s-d| \nu^{\tau,\delta_2} (g)   \left( \log(\delta_2) - \log(\delta_1)   \right)  &  \text{ when } \Re(s) = d \,,
\end{cases} 
\]
which leads to the integral being an entire function for $s \in \mathbb{C}$, again following the same holomorphicity argument as in  Lemma \ref{convergence_2}
\end{remark}

We would also like to define the relative complex dimensions of $\tau$ with respect to $\nu$ and fractality of $\tau$:

\begin{definition}
The (set-valued) relative complex dimension function of $\tau \in  \tilde{\mathcal{P}} ( \overline{ \mathcal{A} \langle X \rangle } ) $  with respect to $\nu \in \tilde{\mathcal{M}} ( \overline{ \mathcal{A} \langle X \rangle } ) $  is defined as a map that brings $g \in \overline{ \mathcal{A} \langle X \rangle }  $ to the set of poles of $[ \zeta_{\tau} (s,  \nu ) ](g) $, i.e.
\begin{eqnarray*}
\mathcal{D}_{\mathbb{C}} ( \tau, \nu) : \overline{ \mathcal{A} \langle X \rangle } & \rightarrow& 2^{\mathbb{C}}  \\
\mathcal{D}_{\mathbb{C}} ( \tau, \nu) [g]  &=& \left \{ s \in \Omega_{\left[ \zeta_{\tau} ( \, \cdot \, , \nu ) \right] (g)}  \, : \,   [ \zeta_{\tau} (s, \nu )](g)  = \infty  \right \} \,.
\end{eqnarray*}
The relative complex dimension of $\tau$ with respect to $\nu$ is then given by
\[
 \mathcal{D}_{\mathbb{C}} ( \tau, \nu) := \bigcup_{g \in \overline{ \mathcal{A} \langle X \rangle } } \mathcal{D}_{\mathbb{C}} ( \tau, \nu) [g] \,. 
\]
$\tau$ is a fractal (state) with respect to $\nu$ if 
\[
 \mathcal{D}_{\mathbb{C}} ( \tau, \nu)  \backslash \mathbb{R} \neq \emptyset \,.  
\]
\end{definition}

\noindent We point out that it is now unclear if the relative complex dimension  $\mathcal{D}_{\mathbb{C}} ( \tau, \nu)$, as a union of $\mathcal{D}_{\mathbb{C}} ( \tau, \nu) [g] $ over all $g \in  \overline{ \mathcal{A} \langle X \rangle } $,  is in general a discrete set, unlike in the usual theory \cite{LRZ_book}.

%Again, it remains to check that the above definition is well-defined and is independent of the choice of $\delta$.

\begin{remark}
Given an arbitrary  $\nu \in \tilde{\mathcal{M}} ( \overline{ \mathcal{A} \langle X \rangle } )  $  and for any small $\delta_1 \neq \delta_2$ larger than $0$, from the previous remark, we can quickly check that
\[
\mathcal{D}_{\mathbb{C}} ( \mu, \nu^{\tau,\delta_1} ) [\cdot] = \mathcal{D}_{\mathbb{C}} ( \mu, \nu^{\tau,\delta_2} ) [\cdot]  \, ,
\]
and therefore,
\[
\mathcal{D}_{\mathbb{C}} ( \mu, \nu^{\tau,\delta_1} )  = \mathcal{D}_{\mathbb{C}} ( \mu, \nu^{\tau,\delta_2} )  \,.
\]
\end{remark}

\section{Geometric and symmetry properties and decomposition rules for the noncommutative relative distance zeta functional} \label{sec_3}

\subsection{Geometric and symmetry properties}  \label{sec_3_1}

Let us consider a Radon map $F : [\mathcal{L} (H) ]^d  \rightarrow  [\mathcal{L} (H) ]^d  $ such that
\begin{eqnarray}
 F ( [\mathfrak{sa} \, \mathcal{L} (H) ]^d ) =  [ (\mathfrak{sa} \, \mathcal{L} (H) ]^d \quad \text{ and } \quad F ( B_R^{  [\mathfrak{sa} \mathcal{L}(H) ]^d } (0) ) \subset B_{ C_F  R}^{  [\mathfrak{sa} \mathcal{L}(H) ]^d } (0)
\label{condition0}
\end{eqnarray}
for some $ C_F < \infty$.  With this, we may denote the pullback of a function $g \in \overline{ \mathcal{A} \langle X \rangle }^{\|\cdot \|_{C_F R}} $ by $F$, $F^* g   \in \overline{ \mathcal{A} \langle X \rangle }^{\|\cdot \|_{R}}  $, as 
\[
[F^*g ](X)  := g ( F(X)) \quad \text{  for all  } X \in [\mathcal{L} (H) ]^d  \,,
\]
as well as the pushforward of a linear functional $\nu \in  \tilde{\mathcal{M}} \left( \overline{ \mathcal{A} \langle X \rangle }^{\|\cdot \|_{R}} \right) $ by $F$,  $F_* \nu \in  \tilde{\mathcal{M}} \left(   \overline{ \mathcal{A} \langle X \rangle }^{\|\cdot \|_{C_F R}} \right) $  , as 
\[
[F_* \nu ] (g) := \nu \left ([F^*(g) ] \right)  \text{  for all  } g  \in \overline{ \mathcal{A} \langle X \rangle }^{\|\cdot \|_{L_F R}} \,,
\]
which we now notice is well defined as $F^*(g)$ is a Radon function and $\mu_{\nu}$ is a Radon measure.  Whenever the context is clear, we again do not specify whether the closure is taken over $\|\cdot \|_{R}$ or $\|\cdot \|_{C_F R}$.
With the above definitions, we readily obtain the following lemma about the transformation rules of the noncommutative relative distance zeta functional.

\begin{lemma}
The following two statements hold:
\begin{enumerate}
\item
Let $\tau  \in  \tilde{\mathcal{P}} ( \overline{ \mathcal{A} \langle X \rangle } ) $ and  $\nu \in \tilde{\mathcal{M}} ( \overline{ \mathcal{A} \langle X \rangle } )  $ be given.  Assume $F$ satisfies \eqref{condition0} and is such that there exists $L_F^2 > 0$ with
\begin{eqnarray}
\mathcal{E} (F(X) - F(Y))  = L_F^2 \, \mathcal{E} (X - Y)  \quad \text{ for all } (X,Y) \in  [\mathfrak{sa} \, \mathcal{L} (H) ]^{2d} \,,
\label{condition1}
\end{eqnarray}
then the following transformation rules hold
\begin{eqnarray}
F_* \left[ d ( \cdot , F_* \tau ) \right]  = L_F \, d ( \cdot , \tau ) \quad \text{ and } \quad  \zeta_{F_* \tau} (s, F_* \nu)  =  L_F^{s - d} \, F_* [ \zeta_{\tau} (s, \nu) ]  \,.
\label{transformation_rule}
\end{eqnarray}
\item
Suppose  $\tau  \in  \tilde{\mathcal{P}} ( \overline{ \mathcal{A} \langle X \rangle } ) $ is furthermore tracial. Assume $F$ satisfies \eqref{condition0} and is such that there exist $L_F^2 > 0$ and a $\mu_\tau^{\otimes^2}$-measurable map $G: [\mathcal{L} (H) ]^{2d} \rightarrow \mathcal{L} (H) $ satisfying
\begin{eqnarray}
\mathcal{E} (F(X) - F(Y)) = L_F^2 \, G(X,Y) \, \mathcal{E} (X - Y)\,  [ G(X,Y) ]^{-1} \quad  \text{ for all } (X,Y) \in  [\mathfrak{sa} \, \mathcal{L} (H) ]^{2d}  \,,
\label{condition2}
\end{eqnarray}
then \eqref{transformation_rule} holds.
\end{enumerate}

\end{lemma}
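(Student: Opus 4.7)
The plan is to reduce both parts to the single distance-transformation identity $d(F(X),F_*\tau) = L_F\,d(X,\tau)$, since the zeta functional identity then follows by direct substitution into the defining integral.

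For Part 1, I would use the intrinsic spectral representation
\[
d(X,\tau) = \inf_{(\xi,Y)\in\text{spt}\,\mu_\tau}\sqrt{\inf\bigl\{\lambda\in\text{spt}\,E_{\mathcal{E}(X-Y)}(\xi,\xi)\bigr\}}.
\]
Because $\mu_{F_*\tau}$ is the image of $\mu_\tau$ under $(F,\mathrm{id})$, evaluating at $F(X)$ replaces $\mathcal{E}(X-Y)$ by $\mathcal{E}(F(X)-F(Y))$. Condition \eqref{condition1} turns this into $L_F^2\mathcal{E}(X-Y)$, and Borel functional calculus gives $\text{spt}\,E_{L_F^2 T}(\xi,\xi) = L_F^2\cdot\text{spt}\,E_T(\xi,\xi)$, so taking the square root of the infimum picks up exactly a factor of $L_F$.

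For Part 2, the spectral-support representation is no longer directly useful, since $\mathcal{E}(F(X)-F(Y))$ and $L_F^2\mathcal{E}(X-Y)$ are only similar (not equal) and their spectral measures against $(\xi,\xi)$ need not agree. Instead I would work with the resolvent-type representation $d(X,\tau) = \lim_{p\to\infty}\bigl(\tau([(T^*_X\mathcal{E})(\cdot)]^{-p})\bigr)^{-1/(2p)}$. Pushing forward and applying \eqref{condition2} together with the identity $(GAG^{-1})^{-p}=GA^{-p}G^{-1}$ yields
\[
[F_*\tau]\bigl([(T^*_{F(X)}\mathcal{E})(\cdot)]^{-p}\bigr) = L_F^{-2p}\,\tau\bigl(G(X,\cdot)[\mathcal{E}(X-\cdot)]^{-p}G(X,\cdot)^{-1}\bigr),
\]
and the tracial hypothesis collapses this via $\tau(GBG^{-1})=\tau(G^{-1}GB)=\tau(B)$ into $L_F^{-2p}\,\tau\bigl([(T^*_X\mathcal{E})(\cdot)]^{-p}\bigr)$; letting $p\to\infty$ then yields $d(F(X),F_*\tau)=L_F\,d(X,\tau)$.

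With the distance identity in hand in either case, substituting into the zeta functional gives
\[
[\zeta_{F_*\tau}(s,F_*\nu)](g) = \nu\bigl(d(F(\cdot),F_*\tau)^{s-d}\,F^*g\bigr) = L_F^{s-d}\,\nu\bigl(d(\cdot,\tau)^{s-d}\,F^*g\bigr) = L_F^{s-d}\,F_*[\zeta_\tau(s,\nu)](g),
\]
which is \eqref{transformation_rule}. The main obstacle will be the tracial manipulation in Part 2: a priori $G(X,Y)$ need not lie in $\overline{\mathcal{A}\langle Y\rangle}$, so the cyclic identity $\tau(GBG^{-1})=\tau(B)$ must be extended to $Y$-dependent $G(X,\cdot)$ and $G(X,\cdot)^{-1}$ valued in $\mathcal{L}(H)$. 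The cleanest route is to interpret $\tau$ through its vector-state integral representation and invoke cyclicity at the level of $\mu_\tau$-almost-every integrand, or equivalently to pass to a GNS picture where cyclicity of the tracial state is built in. One must also justify exchanging the $p\to\infty$ limit with the integral defining $\tau$, but this follows from the spectral estimates already developed in the proof of Lemma \ref{convergence_2}.
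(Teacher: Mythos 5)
Your proposal is essentially correct and follows the same route as the paper: establish the pointwise identity $d(F(X),F_*\tau)=L_F\,d(X,\tau)$ and then obtain the zeta-functional identity by substitution into the defining integral, noting $\nu(d(F(\cdot),F_*\tau)^{s-d}\,F^*g)=L_F^{s-d}\,\nu(d(\cdot,\tau)^{s-d}\,F^*g)$. For Part 2 you reproduce the paper's argument almost verbatim, including the conjugation trick $(GAG^{-1})^{-p}=GA^{-p}G^{-1}$ and the use of traciality to collapse the conjugation.

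The only genuine difference is in Part 1: the paper works throughout with the limit representation $d(X,\tau)=\lim_{p\to\infty}\bigl(\tau([(T^*_X\mathcal{E})(\cdot)]^{-p})\bigr)^{-1/(2p)}$, whereas you pass to the equivalent closed-form spectral expression $\inf_{(\xi,Y)\in\mathrm{spt}\,\mu_\tau}\sqrt{\inf\,\mathrm{spt}\,E_{\mathcal{E}(X-Y)}(\xi,\xi)}$ and use the scaling of spectral supports under $T\mapsto L_F^2 T$. Since the paper has already established the equality of these two expressions when defining $d(X,\tau)$, the two arguments are interchangeable and neither buys more than the other; the paper's version keeps the two parts in a uniform notation.

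Your remark about the tracial step is a fair observation and is actually more careful than the paper. The paper invokes ``$\tau$ is tracial'' to justify $\tau\bigl([G(X,\cdot)]^{-1}\,[(T^*_X\mathcal{E})(\cdot)]^{-p}\,[G(X,\cdot)]\bigr)=\tau\bigl([(T^*_X\mathcal{E})(\cdot)]^{-p}\bigr)$ without addressing the fact that $G(X,\cdot)$ is an arbitrary $\mathcal{L}(H)$-valued Radon map, a priori outside $\overline{\mathcal{A}\langle X\rangle}$, so the cyclic identity must be taken to hold on the extended domain $C(B_R^{[\mathfrak{sa}\,\mathcal{L}(H)]^d}(0);\mathcal{L}(H))$ rather than just on the algebra. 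Your proposed remedy---invoke cyclicity $\mu_\tau$-a.e.\ at the level of the integrand in the vector-state representation, or pass to a GNS picture---is a reasonable way to close this gap, though neither you nor the paper carries it out. The other caveat you raise, about exchanging the $p\to\infty$ limit with the $\mu_\tau$-integral, does not arise as a new obstacle here: the paper has already asserted that the limit expression and the closed-form spectral infimum agree when defining $d(X,\tau)$, so no additional interchange is needed beyond what was built into that definition.
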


\begin{proof}
We notice that, from definition, we have
\begin{eqnarray*}
d (F(X), F_* \tau ) &=&  \lim_{p \rightarrow \infty}  \left( F_* \tau \left(   \left[   (T^*_{F(X)} \mathcal{E} )( \cdot )   \right]^{-p}  \right) \right)^{- \frac{1}{2p}}  \\
& = &  \lim_{p \rightarrow \infty}  \left(  \tau \left( F^*  \left[   (T^*_{F(X)} \mathcal{E} )( \cdot )   \right]^{-p}  \right) \right)^{- \frac{1}{2p}}  \\
& = &   \lim_{p \rightarrow \infty}  \left(  \tau \left(  \left[   (T^*_{F(X)} \mathcal{E} )( F ( \cdot ) )   \right]^{-p}  \right) \right)^{- \frac{1}{2p}}  \,.
\end{eqnarray*}
\begin{enumerate}
\item
Whenever \eqref{condition1} holds, we have $(T^*_{F(X)} \mathcal{E} )( F ( Y ) ) = L_F^2 \,(T^*_{X} \mathcal{E} )(Y ) $ for $ X,Y \in  [\mathfrak{sa} \, \mathcal{L} (H) ]^{2d}$. Hence for all $ X\in  [\mathfrak{sa} \, \mathcal{L} (H) ]^{2d}$, we have
\begin{eqnarray*}
d (F(X), F_* \tau ) =  \lim_{p \rightarrow \infty}  \left(  \tau \left(  \left[ L_F^2  (T^*_{X} \mathcal{E} )( \cdot )   \right]^{-p}  \right) \right)^{- \frac{1}{2p}}  = L_F \, d (X, \tau )  \,,
\end{eqnarray*}
and therefore by definition, we have, for all $g \in \overline{ \mathcal{A} \langle X \rangle }^{\|\cdot \|_{C_F R}} $, that
\begin{eqnarray*}
\left[ \zeta_{F_* \tau} (s, F_* \nu) \right]  (g)
= \left[ \zeta_{F_* \tau} (s, F_* \nu) \right]  (g) 
&=&  F_*\nu \left( d ( \cdot, F_* \tau )^{s - d} \, g( \cdot ) \right)  \\
&=&  \nu \left( d ( F( \cdot) , F_* \tau )^{s - d} \, g( F( \cdot) ) \right)  \\
&=&  L_F^{s - d} \, \nu \left(  d (X, \tau )^{s - d}   \, g( F( \cdot) ) \right)  \\
&=&  L_F^{s - d} \, [ \zeta_{\tau} (s, \nu) ] ( F^* g ) \\
&=&  L_F^{s - d} \, \left( F_* [ \zeta_{\tau} (s, \nu) ] \right) (  g  ) 
\end{eqnarray*}
which results in  \eqref{transformation_rule}, proving the first part of the lemma.
\item
Suppose $\tau$ is tracial and \eqref{condition2} holds, we have $(T^*_{F(X)} \mathcal{E} )( F ( Y ) ) = L_F^2 \,    G(X,Y) \, \mathcal{E} (X - Y) \, [ G(X,Y) ]^{-1}$ for $ X,Y \in  [\mathfrak{sa} \, \mathcal{L} (H) ]^{2d}$. Hence for all $ X\in  [\mathfrak{sa} \, \mathcal{L} (H) ]^{2d}$, we have
\begin{eqnarray*}
d (F(X), F_* \tau ) &=&  \lim_{p \rightarrow \infty}  \left(  \tau \left(  \left[ L_F^2  (T^*_{X} \mathcal{E} )( \cdot )   \right]^{-p}  \right) \right)^{- \frac{1}{2p}}   \\
&=&  \lim_{p \rightarrow \infty}  \left(  \tau \left(  \left[ L_F^2 \, G(X,\cdot) \, (T^*_{X} \mathcal{E} )( \cdot ) \, [ G(X,\cdot) ]^{-1}  \right]^{-p}  \right) \right)^{- \frac{1}{2p}}   \\
&=& L_F \lim_{p \rightarrow \infty}  \left(  \tau \left(   \, [G(X,\cdot)]^{-1} \,\left[  (T^*_{X} \mathcal{E} )( \cdot ) \right]^{-p}  \, [ G(X,\cdot) ]   \right) \right)^{- \frac{1}{2p}} \\
&=& L_F \lim_{p \rightarrow \infty}  \left(  \tau \left(   \,\left[  (T^*_{X} \mathcal{E} )( \cdot ) \right]^{-p} \right) \right)^{- \frac{1}{2p}} \\
&=& L_F \, d (X, \tau )  \,,
\end{eqnarray*}
where the second last equality comes from the fact that $\tau$ is tracial.  The conclusion \eqref{transformation_rule} then follows from the same argument as above.
\end{enumerate}
\end{proof}

\begin{corollary}  \label{transformation_lemma}
Given $\tau  \in  \tilde{\mathcal{P}} ( \overline{ \mathcal{A} \langle X \rangle } ) $ and  $\nu \in \tilde{\mathcal{M}} ( \overline{ \mathcal{A} \langle X \rangle } ) $, the following transformation rules hold:
\begin{enumerate}
\item
Let $X_0 \in [\mathfrak{sa} \, \mathcal{L} (H) ]^d $ be given. Write (with an abuse of notation) $T_{X_0} :  [\mathcal{L} (H) ]^d  \rightarrow  [\mathcal{L} (H) ]^d  $, $X \mapsto  X - X_0$, then
\[
\zeta_{(T_{X_0})_* \tau} (s, (T_{X_0})_* \nu)  =  (T_{X_0})_* [ \zeta_{\tau} (s, \nu) ]  \,.
\]
\item
Let $k > 0$ be given. Write (with an abuse of notation) $ k :  [\mathcal{L} (H) ]^d  \rightarrow  [\mathcal{L} (H) ]^d  $, $X \mapsto k X$, then
\[
\zeta_{ k_* \tau} (s, k_* \nu)  = k^{s - d}  \,  k_* [ \zeta_{\tau} (s, \nu) ] \,.
\]
\item
Let $O = (O_{ij}) \in SO(d) $ be given. Write (with an abuse of notation) $ O :  [\mathcal{L} (H) ]^d  \rightarrow  [\mathcal{L} (H) ]^d  $, $(X_i )_{i=1}^d \mapsto ( \sum_{j=1}^d O_{ij} X_j  )_{i=1}^d$, then
\[
\zeta_{ O_* \tau} (s, O_* \nu)  =  \,  O_* [ \zeta_{\tau} (s, \nu) ]  \,.
\]
\item
Let $U \in U(H) $ be given. Write $ \text{inn}_{U} :  [\mathcal{L} (H) ]^d  \rightarrow  [\mathcal{L} (H) ]^d  $, $(X_i )_{i=1}^d \mapsto ( U X_i U^* )_{i=1}^d$, then whenever $\tau$ is tracial, we have
\[
\zeta_{ ( \text{inn}_{U} )_* \tau} (s, (\text{inn}_{U})_* \nu)  =  \,  (\text{inn}_{U})_* [ \zeta_{\tau} (s, \nu) ]  \,.
\]
\end{enumerate}
\end{corollary}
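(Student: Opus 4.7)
The plan is to verify, case by case, that each of the four maps $F$ in the corollary satisfies the hypotheses of the preceding lemma, and then simply apply \eqref{transformation_rule}. Cases (1)--(3) will be handled by part (1) of the lemma (i.e.\ checking condition \eqref{condition1}), while case (4) requires part (2) (i.e.\ checking \eqref{condition2}) together with the assumed traciality of $\tau$. Condition \eqref{condition0} is routine in every case: translation gives $\|X - X_0\|_{\mathcal{L}(H)} \leq R + \|X_0\|$ on $B_R$; scaling by $k$ has $C_F = k$; for $O \in SO(d)$ one uses $\|\sum_j O_{ij} X_j\| \leq \sqrt{d}\,R$; and $\|U X_i U^*\| = \|X_i\|$ handles the inner automorphism.

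For case (1), with $F = T_{X_0}$, one has $F(X) - F(Y) = X - Y$ so $\mathcal{E}(F(X) - F(Y)) = \mathcal{E}(X-Y)$, giving $L_F = 1$ and hence $L_F^{s-d} = 1$, so the stated identity falls directly out of \eqref{transformation_rule}. For case (2), with $F(X) = kX$, we compute $F(X) - F(Y) = k(X - Y)$, so $\mathcal{E}(F(X) - F(Y)) = k^2\, \mathcal{E}(X - Y)$, giving $L_F = k$ and thus the $k^{s-d}$ prefactor appearing in the statement. For case (3), with $F = O \in SO(d)$, the key algebraic step is
\[
\mathcal{E}(F(X) - F(Y)) = \sum_{i=1}^{d} \Bigl(\sum_{j=1}^d O_{ij}(X_j - Y_j)\Bigr)^{\!2} = \sum_{j,k} \Bigl(\sum_{i=1}^d O_{ij} O_{ik}\Bigr) (X_j - Y_j)(X_k - Y_k) = \mathcal{E}(X - Y),
\]
using only orthogonality $\sum_i O_{ij} O_{ik} = \delta_{jk}$. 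Crucially, this computation does not require the entries $X_j - Y_j$ to commute, which is what makes condition \eqref{condition1} (rather than \eqref{condition2}) sufficient. So again $L_F = 1$.

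For case (4), with $F = \mathrm{inn}_U$, we have $F(X)_i - F(Y)_i = U(X_i - Y_i) U^*$, and since $U^* U = 1$,
\[
\bigl(F(X)_i - F(Y)_i\bigr)^2 = U (X_i - Y_i) U^* U (X_i - Y_i) U^* = U (X_i - Y_i)^2 U^*,
\]
so $\mathcal{E}(F(X) - F(Y)) = U\,\mathcal{E}(X-Y)\,U^*$. This is exactly \eqref{condition2} with $L_F = 1$ and $G(X,Y) \equiv U$. The conclusion then follows from part (2) of the lemma, whose proof at this point invoked traciality of $\tau$ precisely to cancel the conjugation by $G$ inside $\tau(\,\cdot\,)$; that is the step where the tracial hypothesis is unavoidable, and it is the only genuine obstruction in the proof --- without it, one would be stuck with a conjugated expression inside $\tau$ that cannot in general be reduced to $d(X,\tau)$.
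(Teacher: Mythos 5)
Your proof is correct and follows essentially the same route as the paper's: verify \eqref{condition0} for all four maps, verify \eqref{condition1} (with $L_F = 1, k, 1$ respectively) for the first three, and verify \eqref{condition2} (with $L_F = 1$, $G \equiv U$) plus traciality for the fourth, then apply \eqref{transformation_rule}. Your extra explicit computation for the $SO(d)$ case — noting that the orthogonality cancellation $\sum_i O_{ij}O_{ik} = \delta_{jk}$ works without needing the $X_j - Y_j$ to commute — is a worthwhile clarification of a point the paper leaves implicit.
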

\begin{proof}
It is direct to notice that all the maps given are Radon, and that \eqref{condition0} holds for all the aforementioned maps, with
\[
C_{T_{X_0}} = \|X_0\| \,, \, C_{k} = k \,, \, C_{O} \leq \sqrt{d} \,, \, C_{\text{inn}_{U}} = 1.
\]
It is also strict forward to check \eqref{condition1} holds for $T_{X_0}$, $k$ and $C_O$ with
\[
L_{T_{X_0}} = 1  \,, \, L_{k} = k \,, \, L_{O} = 1
\]
and \eqref{condition2} holds for $\text{inn}_{U}$ with
\[
L_{\text{inn}_{U}}  = 1  \, , \, G(X,Y) = U
\]
\end{proof}

\begin{remark}
From the fact that $F^*(1) = 1$ for any $F$, the above gives that 
\begin{eqnarray*}
[\zeta_{(T_{X_0})_* \tau} (s, (T_{X_0})_* \nu) ] (1)  &=& [ \zeta_{\tau} (s, \nu) ](1)  \\ 
\quad [\zeta_{ k_* \tau} (s, k_* \nu) ] (1)  &=& k^{s - d}  \, [\zeta_{\tau} (s, \nu) ](1)  \\
\quad [\zeta_{ O_* \tau} (s, O_* \nu)] (1) & =&  [ \zeta_{\tau} (s, \nu) ] (1) \\
\quad [ \zeta_{ ( \text{inn}_{U} )_* \tau} (s, (\text{inn}_{U})_* \nu) ](1)  &=&   [ \zeta_{\tau} (s, \nu) ] (1) \,.
\end{eqnarray*}
which corresponds to and generalizes the more familiar transformation rules and symmetry properties in the commutative cases obtained in \cite{LRZ_book}.  They are helpful for us to obtain the functional equations in Section \ref{sec_4}.
\end{remark}

\begin{remark}
The notation of $T^*_{X} : \overline{ \mathcal{A} \langle Y \rangle }  \rightarrow \overline{ \mathcal{A} \langle X, Y \rangle }  $ with $X$ as an indeterminate in $ [ \mathfrak{sa} \, \mathcal{L} (H) ]^d$ in the previous section aligns with the notation of the pullback of a function $g$ by $T_{X_0} :  [\mathcal{L} (H) ]^d  \rightarrow  [\mathcal{L} (H) ]^d  $ for a given self-adjoint element $X_0 \in [\mathfrak{sa} \, \mathcal{L} (H) ]^d$, as we realize that $\text{eva}_{X_0} [T^*_{X} (g)] =T^*_{X_0} (g) $.
\end{remark}

\subsection{Decomposition rules}  \label{sec_3_2}

In this subsection, we would like to consider decomposition rules which help explicit computation of the noncommutative relatives distance zeta functional, where we note the second part of the lemma is inspired by Theorem 4.9 in \cite{Hoffer}.  They are helpful for us to obtain the functional equations in Section \ref{sec_4}.

\begin{lemma} \label{decomposition_lemma}
The following decomposition rules are in force.
\begin{enumerate}
\item
For all $ \tau \in \tilde{\mathcal{P}} ( \overline{ \mathcal{A} \langle X \rangle } ) $, $\nu_1, \nu_2 \in \tilde{\mathcal{M}} ( \overline{ \mathcal{A} \langle X \rangle } ) $ and $a_1, a_2 \geq 0$, it holds that
\begin{eqnarray}
\zeta_{\tau} (s, a_1 \nu_1 + a_2 \nu_2) = a_1  \zeta_{\tau} (s,  \nu_1 )  + a_2 \zeta_{\tau} (s, \nu_2) \,.
\label{decomposition1}
\end{eqnarray}

\item
Suppose $ \tau \in \tilde{\mathcal{P}} ( \overline{ \mathcal{A} \langle X \rangle } ) $ can be decomposed into $ \tau = a_1 \tau_1 + a_2 \tau_2$ with $ \tau_1, \tau_2 \in \tilde{\mathcal{P}} ( \overline{ \mathcal{A} \langle X \rangle } ) $ and $a_1 + a_2 = 1 $.  Then, for $\nu \in \tilde{\mathcal{M}} ( \overline{ \mathcal{A} \langle X \rangle } ) $, then whenever $\overline{\text{dim}}_B (\tau_2, \nu)  <  \overline{\text{dim}}_B (\tau, \nu)  $, and for all $ s \in \Omega_{\left[ \zeta_{\tau_1} ( \, \cdot \, , \nu ) \right] (g)} \bigcap \{ s \in \mathbb{C} \, : \,  \Re (s) > \overline{\text{dim}}_B (\tau_2, \nu) \} $,
\begin{eqnarray}
[\zeta_{\tau} (s, \nu)](g) =  [\zeta_{\tau_1} (s,  \nu)](g) + [h(s)](g) \,,
\label{decomposition2_1}
\end{eqnarray}
where $s \mapsto [h(s)](g)$ is a holomorphic function on $ \{ s \in \mathbb{C} \, : \,  \Re (s) > \overline{\text{dim}}_B (\tau_2, \nu) \} $.

\item
Suppose  $\nu \in \tilde{\mathcal{M}} ( \overline{ \mathcal{A} \langle X \rangle } )  $ is given, and $ \tau \in \tilde{\mathcal{P}} ( \overline{ \mathcal{A} \langle X \rangle } ) $ can be decomposed into $ \tau = a_1 \tau_1 + a_2 \tau_2$ with $ \tau_1, \tau_2 \in \tilde{\mathcal{P}} ( \overline{ \mathcal{A} \langle X \rangle } ) $ and $a_1 + a_2 = 1 $ such that for some $\epsilon > 0$,
\begin{eqnarray}
\{ X \in   [\mathfrak{sa} \, \mathcal{L} (H) ]^d \, : \, d (X , \tau_2) \leq \epsilon \}  \bigcap \text{spt} \left(  \int_{  P (\tilde{H}) } \langle \xi, g(X) \xi \rangle_{\tilde{H}} \, \mu_{\nu} ( d \xi, d X) \right) = \emptyset \,.
\label{assumption_strong}
\end{eqnarray}
Then we have, for $ s \in \Omega_{\left[ \zeta_{\tau_1} ( \, \cdot \, , \nu ) \right] (g)} $,
\begin{eqnarray}
[\zeta_{\tau} (s, \nu)](g) =  [\zeta_{\tau_1} (s,  \nu)](g) + [h(s)](g) \,.
\label{decomposition2_1}
\end{eqnarray}
where $s \mapsto [h(s)](g)$ is an entire function on $\mathbb{C}$.

\end{enumerate}

\end{lemma}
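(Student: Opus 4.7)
Part 1 is immediate: the integral representation $[\zeta_\tau(s,\nu)](g) = \int d(X,\tau)^{s-d}\langle\xi,g(X)\xi\rangle_{\tilde H}\,\mu_\nu(d\xi,dX)$ depends linearly on $\mu_\nu$, which in turn depends linearly on $\nu$ through $\mu_{a_1\nu_1 + a_2\nu_2} = a_1\mu_{\nu_1} + a_2\mu_{\nu_2}$, a valid positive Radon representing measure whenever $a_1,a_2\geq 0$.

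For Parts 2 and 3, the plan starts from the observation that the convex combination $\tau = a_1\tau_1 + a_2\tau_2$ admits the representing measure $\mu_\tau = a_1\mu_{\tau_1} + a_2\mu_{\tau_2}$, so $\text{spt}(\mu_\tau) = \text{spt}(\mu_{\tau_1})\cup \text{spt}(\mu_{\tau_2})$, and the infimum definition of the distance yields
\[
d(X,\tau) = \min\{d(X,\tau_1), d(X,\tau_2)\}\,.
\]
Partitioning $[\mathfrak{sa}\,\mathcal{L}(H)]^d$ into $A_1=\{d(X,\tau_1)\leq d(X,\tau_2)\}$ and $A_2 = \{d(X,\tau_2) < d(X,\tau_1)\}$, and splitting the integral defining $\zeta_\tau$ accordingly, I obtain $[\zeta_\tau(s,\nu)](g) = [\zeta_{\tau_1}(s,\nu)](g) + [h(s)](g)$ with
\[
[h(s)](g) := \int_{A_2\times P(\tilde H)}\bigl[d(X,\tau_2)^{s-d}-d(X,\tau_1)^{s-d}\bigr]\langle\xi,g(X)\xi\rangle_{\tilde H}\,\mu_\nu(d\xi,dX)\,,
\]
so it remains to establish the asserted holomorphic/entire behaviour of $h$.

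For Part 2, I would control $|[h(s)](g)|$ in two regimes. On $\{\Re(s)\geq d\}$, the uniform bound $d(X,\tau_i) \leq 2R$ on $\text{spt}(\mu_\nu)$ gives $|[h(s)](g)| \leq 2(2R)^{\Re(s)-d}\|g\|_R\,\nu(1)$, manifestly finite and holomorphic there. On $\{\Re(s) < d\}$, using $d(X,\tau_1)\geq d(X,\tau_2)$ on $A_2$ yields $d(X,\tau_1)^{\Re(s)-d} \leq d(X,\tau_2)^{\Re(s)-d}$, hence
\[
|[h(s)](g)| \leq 2\|g\|_R\cdot [\zeta_{\tau_2}(\Re(s),\nu)](1)\,,
\]
which is finite for $\Re(s)>\overline{\text{dim}}_B(\tau_2,\nu)$ by Lemma \ref{convergence_2} applied to $\tau_2$. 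Holomorphicity of $h$ on the strip $\{\Re(s) > \overline{\text{dim}}_B(\tau_2,\nu)\}$ then follows from the same series-expansion/dominated-convergence estimate used in \eqref{holomorphic_argument}, applied to both $d(\cdot,\tau_1)^{s-d}$ and $d(\cdot,\tau_2)^{s-d}$.

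For Part 3, assumption \eqref{assumption_strong} forces $d(X,\tau_2) > \epsilon$ on the $g$-weighted support of $\mu_\nu$, and on $A_2$ within this support one further has $d(X,\tau_1) > d(X,\tau_2) > \epsilon$. Both distances are thus pinched between $\epsilon$ and $2R$, so on any bounded $s$-region the integrand in $[h(s)](g)$ is uniformly dominated by an integrable function of $(X,\xi)$; dominated convergence together with the series-expansion estimate of Lemma \ref{convergence_2} then renders $s\mapsto[h(s)](g)$ entire on $\mathbb C$. The main point requiring care — which I do not anticipate as a serious obstacle — is the opening identity $d(X,\tau) = \min\{d(X,\tau_1),d(X,\tau_2)\}$: it relies on the specific choice of representing measure $\mu_\tau$ and on the elementary limit $\lim_{p\to\infty}(a_1 x_1^{-2p} + a_2 x_2^{-2p})^{-1/(2p)} = \min(x_1,x_2)$ commuting with the ambient integral over $[\mathfrak{sa}\,\mathcal{L}(H)]^d\times P(\tilde H)$.
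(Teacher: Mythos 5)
Your Part 1 matches the paper. For Parts 2 and 3 your proof is correct but follows a genuinely different route from the paper's. You identify the key reduction $d(X,\tau)=\min\{d(X,\tau_1),d(X,\tau_2)\}$ exactly as the paper does, and then you split the \emph{spatial} integral into $A_1=\{d(X,\tau_1)\leq d(X,\tau_2)\}$ and $A_2=\{d(X,\tau_2)<d(X,\tau_1)\}$, representing $h$ directly as $\int_{A_2\times P(\tilde H)}[d(X,\tau_2)^{s-d}-d(X,\tau_1)^{s-d}]\langle\xi,g(X)\xi\rangle\,\mu_\nu$ and bounding it (in two regimes) by $2\|g\|_R\,[\zeta_{\tau_2}(\Re s,\nu)](1)$ using the monotonicity of $t\mapsto t^{\Re s-d}$ on $A_2$. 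The paper instead works through the tube representation \eqref{zeta_non_communtative_tube}: from $\{d(\cdot,\tau)\leq t\}=\{d(\cdot,\tau_1)\leq t\}\cup\{d(\cdot,\tau_2)\leq t\}$ it derives the sandwich $0\leq\nu^{\tau,t}-\nu^{\tau_1,t}\leq\nu^{\tau_2,t}$ and writes $h(s)=-(s-d)\int_0^\delta t^{s-d-1}(\nu^{\tau,t}-\nu^{\tau_1,t})(g)\,dt$, obtaining the analogous bound. The two presentations of $h$ are of course the same function (related by the layer-cake integration that gives the tube formula), so your estimate and the paper's are essentially equivalent. Your version is perhaps the more geometric and self-contained — it never invokes the tube formula — whereas the paper's stays aligned with the tube-zeta machinery used throughout and makes the role of $\nu^{\tau_2,t}$ as the controlling object explicit, which is what feeds Part 3 cleanly: there the assumption \eqref{assumption_strong} kills $\nu^{\tau_2,t}(g)$ for $t<\epsilon$, so the tube integral is supported on $[\epsilon,\delta]$ and is manifestly entire; your argument gets the same conclusion by noting that on $A_2$ within the $g$-weighted support both distances are pinched in $[\epsilon,C]$. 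Both routes are sound; the point you flagged as needing care (justifying $d(X,\tau)=\min\{d(X,\tau_1),d(X,\tau_2)\}$ via the $\inf$ over $\mathrm{spt}\,\mu_\tau$ and the elementary limit) is precisely what the paper writes out, so your concern is well placed and is addressed in the same way.
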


\begin{proof}
The first property is trivial via linearity that
\[
 \nu \left( d (\cdot, \tau )^{s - d} \, g(\cdot) \right)  =  a_1 \nu_1 \left( d (\cdot, \tau )^{s - d} \, g(\cdot) \right)  + a_2 \nu_2 \left( d (\cdot, \tau )^{s - d} \, g(\cdot) \right)  \,.
\]

The second and third properties come from the fact that
\begin{eqnarray*}
d(X, \tau) &=&   \inf_{ ( \xi, Y ) \in \, \text{spt} \, \mu_{\tau} }  \bigg \{  \inf \left\{  \lambda \in  \text{spt} E_{\mathcal{E}(X-Y)}  \left( \xi, \xi \right)  \right \}  \bigg \} \\
& = &  \min_{i=1,2}  \left \{  \inf_{ ( \xi, Y ) \in \, \text{spt} \, \mu_{\tau_i} }  \bigg \{  \inf \left\{  \lambda \in  \text{spt} E_{\mathcal{E}(X-Y)}  \left( \xi, \xi \right)  \right \}  \bigg \}  \right \} \\
& = &  \min_{i=1,2}  \{ d(X, \tau_i) \} \,.
\end{eqnarray*}
Now simplifying the notation and writing
\[
 [d (\cdot , \tau) ]^{-1} ( [0,t] )  :=  \{ X \in [ \mathfrak{sa} \, \mathcal{L} (H) ]^d : d (X, \tau )  \leq t  \}\,,
\]
the above property gives
\[
 [d (\cdot , \tau) ]^{-1} ( [0,t] ) =  [d (\cdot , \tau_1) ]^{-1} ( [0,t] ) \bigcup  [d (\cdot , \tau_2) ]^{-1} ( [0,t] )
\]
and therefore
\[
0 \leq \chi_{  [d (\cdot , \tau) ]^{-1} ( [0,t] )  } - \chi_{  [d (\cdot , \tau_1) ]^{-1} ( [0,t] ) } \leq \chi_{  [d (\cdot , \tau_2) ]^{-1} ( [0,t] ) }
\]
or
\[
 0 \leq \nu^{\tau,t} - \nu^{\tau_1,t}  \leq  \nu^{\tau_2,t} \,.
\]

To get the second property, we now obtain the following identity from \eqref{zeta_non_communtative_tube} that, for some $\delta > 0$, 
\begin{eqnarray*}
&  &[ \zeta_{\tau} (s, \nu) ] (g) \\
&=&   \delta^{s-d}   \nu (g)  -  (s-d) \int_{0}^{\delta}  t^{s-d-1} \,   \nu^{\tau,t}  (g) \, dt \\
&=&   [ \zeta_{\tau_1} (s,  \nu) ](g) -  (s-d) \int_{0}^{\delta}  t^{s-d-1} \,    \left( \nu^{\tau,t} - \nu^{\tau_1,t} \right)  (g) \, dt \\
&:=&   [ \zeta_{\tau_1} (s,  \nu) ](g) + [ h(s) ] (g) 
\end{eqnarray*}
to get
\[
 | [ h(s) ] (g)  | \leq |(s-d)| \int_{0}^{\delta}  t^{\Re(s)-d-1} \,   \| \nu^{\tau_2,t}  \|  \| g \|_R  \, dt  \leq \infty \,,
\]
where the right hand side is finite whenever $\Re (s) > \overline{\text{dim}}_B (\tau_2, \nu) $, noticing that
\[
t^{  \frac{\Re (s) + \overline{\text{dim}}_B (\tau_2, \nu) }{2} -d}  \| \nu^{\tau_2,t}  \|  <  1
\]
whenever $t < \delta_0$ for some $\delta_0$.  The holomorphicity of $s \mapsto [h(s)](g)$ on $ \{ s \in \mathbb{C} \, : \,  \Re (s) > \overline{\text{dim}}_B (\tau_2, \nu) \} $ now follows from the same argument as in  Lemma \ref{convergence_2}.

Now the third property comes directly from the fact that, whenever $t < \epsilon$, we have
\begin{eqnarray*}
 | \nu^{\tau_2,t} (g) |  & \leq  & \left |   \int_{  [d (\cdot , \tau_2) ]^{-1} ( [0,\epsilon] ) \times  P (\tilde{H})  } \, | \langle \xi,  g(X) \, \xi \rangle_{\tilde{H}}  | \, \mu_{\nu} (d \xi, d X)  \right | \\
&  \leq  &   \int_{ \left(  [d (\cdot , \tau_2) ]^{-1} ( [0,\epsilon] )  \bigcap \text{spt} \left(  \int_{  P (\tilde{H}) } \langle \xi, g(X) \xi \rangle_{\tilde{H}} \, \mu_{\nu} ( d \xi, d X) \right) \right)  \times  P (\tilde{H}) } \,  \mu_{\nu} (d \xi, d X)  \,   \| g \|_R \\
&  \leq  &   \int_{ \emptyset \times  P (\tilde{H}) } \,  \mu_{\nu} (d \xi, d X)  \,   \| g \|_R \\
& = & 0
\end{eqnarray*}
where we know apriori that the right hand side of the second inequality is finite because $ \left \|  \nu  \right \| < \infty $.  Therefore, we have
\[
 | [ h(s) ] (g)  | \leq |(s-d)| \int_{\epsilon}^{\delta}  t^{\Re(s)-d-1} \,   \| \nu^{\tau_2,t}  \|  \| g \|_R  \, dt  \leq 
\begin{cases}
 |s-d| \nu^{\tau,\delta_2} (g)   \left( \frac{ \delta^{\Re(s)-d} -  \epsilon^{ \Re(s) -d}  }{ \Re(s) - d } \right)    &  \text{ when } \Re(s)  \neq d \,, \\
 |s-d| \nu^{\tau,\delta_2} (g)   \left( \log(\delta) - \log(\epsilon)   \right)  &  \text{ when } \Re(s) = d \,,
\end{cases} 
\]
which again leads to the fact that $s \mapsto [h(s)](g)$ extends to an entire function for $s \in \mathbb{C}$ following the same holomorphicity argument in  Lemma \ref{convergence_2}.
Therefore the result follows.
\end{proof}

\subsection{Zeta functional over tensor states and decomposition rules}  \label{sec_3_3}

In this subsection, we first consider $\overline{\mathcal{A}(X_i)}$ where $X_i \in [\mathfrak{sa} \,  \mathcal{L} (H)]^{d_i}$ are the indeterminates for $i = 1,2$.
% We realize it is not necessary to consider $\mathcal{L} (H_i)$ for different $H_i$, as $\mathcal{L}(H_i) \subset \mathcal{L}(H_1 \oplus H_2) $, and hence we can always represent the $C^*$-algebras with the same (possibly larger) Hilbert space $H$.
Now let us recall the spatial/minimal tensor product $\overline{\mathcal{A}(X_1)} \otimes_{\min} \overline{\mathcal{A}(X_2)} $ as the norm closure of the finite sums of $f_1 \otimes f_2 \in \overline{\mathcal{A}(X_1)} \otimes \overline{\mathcal{A}(X_2)}$ in the algebraic tensor products under the norm 
\[
\| f_1 \otimes f_2 \|_{R} := \| f_1 \|_R \| f_2 \|_R \,.
\]
We notice that for any chosen $T_i \in   B_R^{  [\mathfrak{sa} \,  \mathcal{L} (H)]^{d_i} } (0)  \,, i = 1,2 $, we have
\[
\| f_1(T_1) \otimes  f_2(T_2)  \|_{\mathcal{L}(H \otimes H )}  = \| f_1(T_1)   \|_{\mathcal{L}(H )} \| f_2(T_2)  \|_{\mathcal{L}( H )}   \leq \| f_1 \|_R \| f_2 \|_R   = \| f_1 \otimes f_2 \|_{R}  \,,
\]
where the first line comes the definition of the tensor product of operators,
and therefore again
\[
\| f_1 \otimes f_2 \|_{L^{\infty} \left( B_R^{  [\mathfrak{sa} \mathcal{L}(H) ]^{d_1+ d_2} } (0) ; H \otimes H  \right)}  \leq  \| f_1 \otimes f_2 \|_{R}    < \infty \,.
\]
and hence,
\[
\overline{\mathcal{A}(X_1)} \otimes_{\min} \overline{\mathcal{A}(X_2)} \subset  L^{\infty} \left( B_R^{  [\mathfrak{sa} \mathcal{L}(H) ]^{d_1 + d_2}  } (0) ; H \otimes H \right) \,.
\]
However, we would like to emphasize that $ \overline{\mathcal{A}(X_1)} \otimes_{\min} \overline{\mathcal{A}(X_2)} $ has no relation with $ \overline{ \mathcal{A}(X_1, X_2)} $. 

Similarly, we have 
\[
{ \otimes_{\min} }_{i=1}^k  \overline{\mathcal{A}(X_i)}   \subset  L^{\infty} \left( B_R^{  [\mathfrak{sa} \mathcal{L}(H) ]^{\sum_{i=1}^k d_i}  } (0) ; H^{\otimes^k}\right) \,,
\]
However, again we would like to emphasize that $ { \otimes_{\min} }_{i=1}^k  \overline{\mathcal{A}(X_i)}  $ has no relation with $ \overline{ \mathcal{A}(X_1, ..., X_k)} $. 

Now, consider linear functionals $\nu_i \in \overline{\mathcal{A}(X_i)}^*$ where $X_i \in [\mathfrak{sa} \,  \mathcal{L} (Hi)]^{d_i}$ for $i=1,2$.
Then the tensor functional $\nu_1 \otimes \nu_2$ is a unique functional over $\overline{\mathcal{A}(X_1)} \otimes_{\min} \overline{\mathcal{A}(X_2)} $ such that for any $f_1 \otimes f_2 \in \overline{\mathcal{A}(X_1)} \otimes \overline{\mathcal{A}(X_2)}$, we have
\[
( \nu_1 \otimes  \nu_2 ) (f_1 \otimes f_2 ) :=  \nu_1 (f_1) \, \nu_2 ( f_2 )
\]
If it is furthermore true that $\tau_i \in  \tilde{\mathcal{M}} (  \overline{\mathcal{A}(X_i)} )$, then 
\begin{eqnarray*}
&  & ( \nu_1 \otimes  \nu_2 ) (f_1 \otimes f_2 ) \\
&=& \int_{ \prod_{i=1}^2 \left( [ \mathfrak{sa} \, \mathcal{L} (H) ]^{d_i}  \times  P (\tilde{H})\right) }   \,  \prod_{i=1}^2 \langle \xi_i,  f_i(X_i) \, \xi_i \rangle_{\tilde{H}} \,  \otimes_{i=1}^2 \mu_{\nu_i} (d \xi_1,d X_1, d \xi_2 , d X_2)  \\
&=& \int_{  [ \mathfrak{sa} \, \mathcal{L} (H) ]^{d_1 + d_2}  \times  [ P (\tilde{H}) ]^2 }   \,   \left \langle (f_1 \otimes f_2) (X_1, X_2) \otimes_{i=1}^2  \xi_i  \, \, , \, \otimes_{i=1}^2  \xi_i  \right\rangle_{ \tilde{H}^{\otimes^2} } \, ( \otimes_{i=1}^2 \mu_{\nu_i} ) (d \xi_1, d \xi_2 , d X_1, d X_2) \,,
\end{eqnarray*}
and hence we have for all $f \in \overline{\mathcal{A}(X_1)} \otimes_{\min} \overline{\mathcal{A}(X_2)} $ 
\begin{eqnarray*}
 ( \tau_1 \otimes  \tau_2 ) (f )  = \int_{  [ \mathfrak{sa} \, \mathcal{L} (H) ]^{d_1 + d_2}  \times  [ P (\tilde{H}) ]^2 }   \,    \left \langle f (X_1, X_2) \otimes_{i=1}^2  \xi_i  \, \, , \, \otimes_{i=1}^2  \xi_i  \right\rangle_{ \tilde{H}^{\otimes^2} }  \,  (\otimes_{i=1}^2 \mu_{\nu_i} ) (d \xi_1, d \xi_2 , d X_1, d X_2) \,,
\end{eqnarray*}
where $ \otimes_{i=1}^2 \mu_{\nu_i}  \in \mathcal{M} \left(   B_R^{  [\mathfrak{sa} \mathcal{L}(H) ]^{d_1 + d_2} } (0) \times [ P(\tilde{H}) ]^2  \right) $ is a finite Radon measure. 
Likewise, if $\nu_i \in  \tilde{\mathcal{M}} (  \overline{\mathcal{A}(X_i)} )$ for $i = 1,...,k$, then the tensor functional   $\otimes_{i=1}^k \nu_i $ can be represented as
\begin{eqnarray*}
&  & ( \otimes_{i=1}^k \tau_i ) (f )  \\
&=&\int_{  [ \mathfrak{sa} \, \mathcal{L} (H) ]^{\sum_{i=1}^k d_i}  \times  [ P (\tilde{H}) ]^k }   \,  
 \left \langle f (X_1,..., X_k)  \otimes_{i=1}^k  \xi_i  \, \, , \, \otimes_{i=1}^k  \xi_i  \right\rangle_{ \tilde{H}^{\otimes^k} } \,  ( \otimes_{i=1}^k \mu_{\nu_i} ) (d \xi_1, ... d \xi_k , d X_1, ..., d X_k) \,,
\end{eqnarray*}
where $ \otimes_{i=1}^k \mu_{\nu_i}  \in \mathcal{M} \left(   B_R^{  [\mathfrak{sa} \mathcal{L}(H) ]^{\sum_{i=1}^k d_i} } (0) \times [ P(\tilde{H}) ]^k  \right) $ is a finite Radon measure. 
This now motivates us to \textbf{define} the following:

\begin{definition}
The space
$\tilde{\mathcal{M}}  \left(   { \otimes_{\min} }_{i=1}^k  \overline{\mathcal{A}(X_i)} \right) \subset  { \left(  { \otimes_{\min} }_{i=1}^k  \overline{\mathcal{A}(X_i)} \right)   }^{*}  $ contains all the linear functionals $\nu$ over $ { \otimes_{\min} }_{i=1}^k  \overline{\mathcal{A}(X_i)} $ such that they can be represented as
\begin{eqnarray*}
 \nu (f )  = \int_{  [ \mathfrak{sa} \, \mathcal{L} (H) ]^{\sum_{i=1}^k d_i}  \times  [ P (\tilde{H}) ]^k }   \,  
 \left \langle f (X_1,..., X_k) \, \otimes_{i=1}^k  \xi_i  \, \, , \, \otimes_{i=1}^k  \xi_i  \right\rangle_{ \tilde{H}^{\otimes^k} } \,  \mu_{\nu} (d \xi_1, ... d \xi_k , d X_1, ..., d X_k) \,,
\end{eqnarray*}
where $ \mu_{\nu}  \in \mathcal{M} \left(   B_R^{  [\mathfrak{sa} \mathcal{L}(H) ]^{\sum_{i=1}^k d_i} } (0) \times [ P(\tilde{H}) ]^k  \right)  $ is a finite Radon measure.  
Meanwhile, the space $\tilde{\mathcal{P}}  \left(   { \otimes_{\min} }_{i=1}^k  \overline{\mathcal{A}(X_i)} \right) \subset \tilde{\mathcal{M}}  \left(   { \otimes_{\min} }_{i=1}^k  \overline{\mathcal{A}(X_i)} \right) $ contains all the states $\tau$ over $  { \otimes_{\min} }_{i=1}^k  \overline{\mathcal{A}(X_i)} $ with the above integral representation with $\mu_{\tau} \in \mathcal{P} \left(   B_R^{  [\mathfrak{sa} \mathcal{L}(H) ]^{\sum_{i=1}^k d_i} } (0) \times [ P(\tilde{H}) ]^k  \right)   $ being a Radon probability measure.

\end{definition}

It is clear that whenever $\mu_i \in  \tilde{\mathcal{M}} (  \overline{\mathcal{A}(X_i)} )$ for $i = 1,...,k$, we have $\otimes_{i=1}^k \nu_i   \in \tilde{\mathcal{M}}  \left(   { \otimes_{\min} }_{i=1}^k  \overline{\mathcal{A}(X_i)} \right) $; and whenever $\tau_i \in  \tilde{\mathcal{P}} (  \overline{\mathcal{A}(X_i)} )$ for $i = 1,...,k$, we have $\otimes_{i=1}^k \tau_i   \in \tilde{\mathcal{P}}  \left(   { \otimes_{\min} }_{i=1}^k  \overline{\mathcal{A}(X_i)} \right) $.

Now for any $\tau \in  \tilde{\mathcal{P}}  \left(   { \otimes_{\min} }_{i=1}^k  \overline{\mathcal{A}(X_i)} \right) $, with the above representation, we can again extend the domain of definition of $\tau$ to
\begin{eqnarray*}
\tau :  C \left( B_R^{  [\mathfrak{sa} \mathcal{L}(H) ]^{2 \sum_{i=1}^k d_i} } (0)   ; [ \mathcal{L}(H) ]^{ \sum_{i=1}^k d_i}   \right) & \rightarrow&  C \left( B_R^{  [\mathfrak{sa} \mathcal{L}(H) ]^{ \sum_{i=1}^k d_i } } (0)  ; \mathbb{C} \right) 
\end{eqnarray*}
where whenever $( Y_1,..., Y_k ) \in B_R^{  [\mathfrak{sa} \mathcal{L}(H) ]^{ \sum_{i=1}^k d_i } } (0)$, we have
\begin{eqnarray*}
&    & \left[ \tau(f) \right]( Y_1,..., Y_k ) \\
& := &  \int_{  [ \mathfrak{sa} \, \mathcal{L} (H) ]^{\sum_{i=1}^k d_i}  \times  [ P (\tilde{H}) ]^k }   \,  \left \langle f (X_1,..., X_k, Y_1,..., Y_k ) \, \otimes_{i=1}^k  \xi_i  \, \, , \, \otimes_{i=1}^k  \xi_i  \right\rangle_{ \tilde{H}^{\otimes^k} } \,  \mu_{\tau} (d \xi_1, ... d \xi_k , d X_1, ..., d X_k)  \,.
\end{eqnarray*}
Now whenever $\tau \in \tilde{\mathcal{P}}  \left(   { \otimes_{\min} }_{i=1}^k  \overline{\mathcal{A}(X_i)} \right)$ and $(X_1,...,X_k) \in B_R^{  [\mathfrak{sa} \mathcal{L}(H) ]^{ \sum_{i=1}^k d_i } } (0)$, we may write
\begin{eqnarray*}
d \left((X_1,...,X_k), \tau  \right) :=  \lim_{p \rightarrow \infty}  \left( \tau \left(   \left[  \sum_{i=1}^k  \left( \otimes_{ j = 1}^{i-1} 1_j \right) \otimes (T^*_{X_i} \mathcal{E} )( \cdot )  \otimes \left( \otimes_{ j = i+1}^{k} 1_j \right)  \right]^{-p}  \right) \right)^{- \frac{1}{2p}}   \\
\end{eqnarray*}
where $1_i$ is the identity element in $\mathcal{A} \langle X_i \rangle$.
With these notations at hand, we may define the following:

\begin{definition}
Given $ \tau \in   \tilde{\mathcal{P}}  \left(   { \otimes_{\min} }_{i=1}^k  \overline{\mathcal{A}(X_i)} \right)  $ and $\nu \in   \tilde{\mathcal{M}}  \left(   { \otimes_{\min} }_{i=1}^k  \overline{\mathcal{A}(X_i)} \right) $, the relative distance zeta functional  $\zeta_{ \tau } (s,   \nu ) $ for $s \in \mathbb{C}$ with $\Re (s) $ sufficiently large is given by
\begin{eqnarray*}
\zeta_{ \tau } (s, \nu )  :   { \otimes_{\min} }_{i=1}^k  \overline{\mathcal{A}(X_i)}   
%  { \color{blue} \overline{ \mathcal{A}(X_1, ..., X_k)}   }  
 & \rightarrow&  \mathbb{C}  \notag  \\
\zeta_{ \tau } (s,   \nu ) ( g ) &:=&   \nu \left( d (\cdot,  \tau )^{s - \sum_{i=1}^k d_i  } \, g (\cdot ) \right) \, . 
\end{eqnarray*}
For a given (fixed) $g$, we then meromorphically continue $\left[ \zeta_{\tau} (s, \nu) \right] (g)$ to a domain of definition $\Omega_{\left[ \zeta_{\tau} ( \, \cdot \, , \nu ) \right] (g)}$ such that  $\{\Re (s) > \overline{\text{dim}}_B (\tau, \nu\} \subset \Omega_{\left[ \zeta_{\tau} ( \, \cdot \, , \nu ) \right] (g)} \subset \mathbb{C}$, and still denote the continued function as $\left[ \zeta_{\tau} (s, \nu ) \right] (g)$. (The fact that the functional is well defined for $ \Re (s) > \overline{\text{dim}}_B (\tau, \nu) $ is stated in Lemma \ref{convergence_3}.)
\end{definition}

And similarly, we again have the following definition.
\begin{definition}
The (set-valued) relative complex dimension function of $\tau $ with respect to $\nu$ is defined as a map that brings $g \in  { \otimes_{\min} }_{i=1}^k  \overline{\mathcal{A}(X_i)}   $ to the set of poles of $[ \zeta_{\tau} (s,  \nu ) ](g) $, i.e.
\begin{eqnarray*}
\mathcal{D}_{\mathbb{C}} ( \tau, \nu) :  { \otimes_{\min} }_{i=1}^k  \overline{\mathcal{A}(X_i)}  & \rightarrow& 2^{\mathbb{C}}  \\
\mathcal{D}_{\mathbb{C}} ( \tau, \nu) [g]  &=& \left \{ s \in \Omega_{\left[ \zeta_{\tau} ( \, \cdot \, , \nu ) \right] (g)}  \, : \,   [ \zeta_{\tau} (s, \nu )](g)  = \infty  \right \} \,.
\end{eqnarray*}
The set of relative complex dimensions of $\tau$ with respect to $\nu$ is then given by
\[
 \mathcal{D}_{\mathbb{C}} ( \tau, \nu) := \bigcup_{g \in  { \otimes_{\min} }_{i=1}^k  \overline{\mathcal{A}(X_i)}  } \mathcal{D}_{\mathbb{C}} ( \tau, \nu) [g] \,.
\]
A state $\tau$ is a fractal (state) with respect to $\nu$ if 
\[
 \mathcal{D}_{\mathbb{C}} ( \tau, \nu)  \backslash \mathbb{R} \neq \emptyset \,.
\]
\end{definition}

We also generalize and define, for $ \tau \in   \tilde{\mathcal{P}}  \left(   { \otimes_{\min} }_{i=1}^k  \overline{\mathcal{A}(X_i)} \right)  $ and $\nu \in   \tilde{\mathcal{M}}  \left(   { \otimes_{\min} }_{i=1}^k  \overline{\mathcal{A}(X_i)} \right) $, that
\begin{eqnarray*}
 \nu^{ \tau ,\delta}  :=  \nu \circ  \text{mul}_{\chi_{ \left \{ X \in [ \mathfrak{sa} \, \mathcal{L} (H) ]^{\sum_{i=1}^k d_i} : d \left(  (X_1,...,X_k), \tau  \right)  \leq \delta  \right \}  } } 
\end{eqnarray*}
as well as
\begin{eqnarray*}
 \mathcal{M}^s_{*}  ( \tau ,  \nu )   := \liminf_{\delta \rightarrow 0^+}  \, \delta^{ s - \sum_{i=1}^k d_i  } \,  \| \nu^{ \tau ,\delta}  \|  \,, \quad  \mathcal{M}^{*\,s}   ( \tau,  \nu)   := \limsup_{\delta \rightarrow 0^+}  \, \delta^{ s - \sum_{i=1}^k d_i  } \,  \| \nu^{ \tau,\delta}  \|   \,,
\end{eqnarray*}
and
\begin{eqnarray*}
\underline{\text{dim}}_B (\tau, \nu) &=& \inf \{ s \geq 0 \, , \,  \mathcal{M}^s_*(\tau, \nu )  = 0 \} = \sup \{ s \geq 0 \, , \,  \mathcal{M}^s_*(\tau, \nu )  = \infty \} \\
\overline{\text{dim}}_B (\tau, \nu) & =& \inf \{ s \geq 0 \, , \, \mathcal{M}^{* \, s}(\tau, \nu )  = 0 \} = \sup \{ s \geq 0 \, , \, \mathcal{M}^{* \, s}(\tau, \nu )  = \infty \}.
\end{eqnarray*}
and $\underline{\text{dim}}_B (\tau, \nu) = \overline{\text{dim}}_B (\tau, \nu)  $, we write $\text{dim}_B (\tau, \nu) $ the relative box counting dimension of $\tau $ with respect to $\nu$.  The state  $\tau$ is relatively Minkowski nondegenerate with respect to $\nu$ if 
\[
0 <  \mathcal{M}^{\text{dim}_B (\tau, \nu)}_*(\tau, \nu )  \leq  \mathcal{M}^{*\,\text{dim}_B (\tau, \nu)}(\tau, \nu )  < \infty \,,
\]
and the $\text{dim}_B (\tau, \nu)$-dimensional relative Minkowski content of $\tau$ with respect to $\nu$ is 
\[
\mathcal{M}^{\text{dim}_B (\tau, \nu)}(\tau, \nu )  := \mathcal{M}^{\text{dim}_B (\tau, \nu)}_*(\tau, \nu )  = \mathcal{M}^{*\,\text{dim}_B (\tau, \nu)}(\tau, \nu ) 
\]
if they are equal.

We then again have the following lemma, that we now state without proof (as the proof is the same as Lemma \ref{convergence_2} in the previous subsection).

\begin{lemma} \label{convergence_3}
If  $\tau \in \tilde{\mathcal{P}}  \left(   { \otimes_{\min} }_{i=1}^k  \overline{\mathcal{A}(X_i)} \right)  $ and $\nu \in  \tilde{\mathcal{M}}  \left(   { \otimes_{\min} }_{i=1}^k  \overline{\mathcal{A}(X_i)} \right)  $, then
$\zeta_{\tau} (s, \nu ) $ is a well-defined bounded linear functional and is holomorphic with respect to $s$ whenever $\Re (s) > \overline{\text{dim}}_B (\tau, \nu)$, with some $\delta > 0$ such that
\begin{eqnarray}
[ \zeta_{\tau} (s, \nu) ] (g) =   \delta^{s-d}   \nu (g)  -  (s-d) \int_{0}^{\delta}  t^{s-d-1} \,   \nu^{\tau,t}  (g) \, dt
\label{zeta_non_communtative_tube_tensor} 
\end{eqnarray}
for any  $g \in \overline{ \mathcal{A} \langle X \rangle }  $, and 
\begin{eqnarray}
[ \partial_s  \zeta_{\tau} (s, \nu) ] (g) =  [ \zeta_{\tau} (s, \nu) ] \left(  \log(d (\cdot, \tau ))  g (\cdot ) \right)  \,.
\label{zeta_non_communtative_derivative_tensor} 
\end{eqnarray}
Furthermore, if the relative box counting dimension of $\tau$ with respect to $\nu$, i.e. $\text{dim}_B (\tau, \nu) $, exists and $\text{dim}_B (\tau, \nu) < d$, and moreover, if $ \mathcal{M}^{\text{dim}_B (\tau, \nu)}_*(\tau, \nu )  > 0  $, then $ \| \zeta_{\tau} (s, \nu ) \|  \rightarrow \infty $ when $s \rightarrow \text{dim}_B (\tau, \nu)^+$ for $s \in \mathbb{R}$.
\end{lemma}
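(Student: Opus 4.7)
The plan is to mirror the proof of Lemma \ref{convergence_2} essentially verbatim, with the role of the Euclidean dimension now played by $D := \sum_{i=1}^{k} d_i$ (which is what the statement writes as $d$) and the ambient space $[\mathfrak{sa}\,\mathcal{L}(H)]^{d} \times P(\tilde{H})$ replaced by $[\mathfrak{sa}\,\mathcal{L}(H)]^{D} \times [P(\tilde{H})]^{k}$. First I would observe that, since $\tau \in \tilde{\mathcal{P}}\bigl({\otimes_{\min}}_{i=1}^k \overline{\mathcal{A}(X_i)}\bigr)$ forces $\mathrm{spt}(\mu_\tau) \subset B_R^{[\mathfrak{sa}\,\mathcal{L}(H)]^{D}}(0) \times [P(\tilde{H})]^{k}$, the functional $d((X_1,\ldots,X_k),\tau)$ is bounded by $R$ on $\mathrm{spt}(\mu_\nu)$, so there exists $\delta > 0$ with $\mathrm{spt}(\mu_\nu) \subset \{d(\cdot,\tau) \leq \delta\}$ and hence $\|\nu^{\tau,t}\| \leq \|\nu\| = \nu(1) < \infty$ for every $t > 0$.

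Next I would derive the tube representation \eqref{zeta_non_communtative_tube_tensor} by applying a layer-cake/Fubini argument to the finite Radon measure $\mu_\nu$: writing $\chi_{\{d > t\}} = 1 - \chi_{\{d \leq t\}}$ converts the outer integral into the stated identity. Absolute convergence of the right-hand side for $\Re(s) > \overline{\mathrm{dim}}_B(\tau,\nu)$ follows from the definition of the relative upper box dimension; choosing $\sigma = \tfrac{1}{2}(\Re(s) + \overline{\mathrm{dim}}_B(\tau,\nu))$, one has $\mathcal{M}^{*\,\sigma}(\tau,\nu) = 0$, so for sufficiently small $t$ one gets $t^{\sigma - D}\|\nu^{\tau,t}\| < 1$, producing an integrable envelope proportional to $t^{(\Re(s) - \overline{\mathrm{dim}}_B(\tau,\nu))/2 - 1}$ near the origin.

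For holomorphicity and the derivative formula \eqref{zeta_non_communtative_derivative_tensor}, I would reuse the geometric-series estimate from \eqref{holomorphic_argument}: for fixed $s$ with $\Re(s) > \overline{\mathrm{dim}}_B(\tau,\nu)$ and $|h| < \tfrac{1}{2}(\Re(s) - \overline{\mathrm{dim}}_B(\tau,\nu))$, Taylor-expanding $d(X,\tau)^{h}$ around $h = 0$ and bounding the tail by $d(X,\tau)^{|h|}$ yields the difference-quotient estimate $\bigl|\tfrac{[\zeta_\tau(s+h,\nu)](g) - [\zeta_\tau(s,\nu)](g)}{h} - [\zeta_\tau(s,\nu)](\log d(\cdot,\tau)\, g(\cdot))\bigr| \leq C|h|\, \|\zeta_\tau(\Re(s) - 2|h|,\nu)\|\, \|g\|_R = O(|h|\|g\|_R)$ by the previous paragraph. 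The divergence $\|\zeta_\tau(s,\nu)\| \to \infty$ as $s \to \mathrm{dim}_B(\tau,\nu)^+$ then comes from specialising the tube formula to $g = 1$, using positivity to identify operator norms with values on the unit ($\|\zeta_\tau(s,\nu)\| = [\zeta_\tau(s,\nu)](1)$ and $\|\nu^{\tau,t}\| = \nu^{\tau,t}(1)$), and invoking the lower bound $\nu^{\tau,t}(1) \geq C t^{D - \mathrm{dim}_B(\tau,\nu)}$ obtained from $\mathcal{M}^{\mathrm{dim}_B(\tau,\nu)}_*(\tau,\nu) > 0$; plugging this into \eqref{zeta_non_communtative_tube_tensor} produces a blow-up of order $\delta^{s - \mathrm{dim}_B(\tau,\nu)}/(s - \mathrm{dim}_B(\tau,\nu))$. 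The only (mild) obstacle is bookkeeping: one must check that the spectral-infimum definition of $d((X_1,\ldots,X_k),\tau)$ on the tensored ambient space still yields a nonnegative Borel function and that $\chi_{\{d(\cdot,\tau) \leq t\}}$ remains $\mu_\nu$-integrable, but since the defining expression is built from positive self-adjoint summands $(\otimes_{j<i} 1_j) \otimes (T^*_{X_i}\mathcal{E})(\cdot) \otimes (\otimes_{j>i} 1_j)$ acting on distinct tensor factors, the spectral infimum is well-behaved and every estimate of Lemma \ref{convergence_2} transcribes without change.
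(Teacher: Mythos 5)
Your proposal is correct and mirrors exactly what the paper intends: the paper states Lemma \ref{convergence_3} ``without proof (as the proof is the same as Lemma \ref{convergence_2} in the previous subsection),'' and your write-up is precisely that transcription, with $D = \sum_{i=1}^k d_i$ replacing $d$ and the ambient space tensored accordingly. Your closing remark correctly identifies the only point that requires checking in the translation (that $d(\cdot,\tau)$ on the tensored space remains a nonnegative Borel function so the layer-cake and Fubini steps still apply), and your observation that the statement's ``$d$'' should be read as $\sum_{i=1}^k d_i$ is a fair reading of the paper's notational shortcut.
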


\begin{remark}
We may again name the map
$ (s, \nu) \mapsto \left(  g \mapsto  \int_{0}^{\delta}  t^{s-d-1} \,   \nu^{\tau,t}  (g) \, dt \right) $
in  \eqref{zeta_non_communtative_tube_tensor} in this case as the relative tube zeta functional of $\tau $ in the spirit of  \cite{LRZ_book,LRZ,LRZ3,LRZ5,LRZ6a}.
\end{remark}

Now we aim to better understand  $d \left((X_1,...,X_k), \tau  \right) $ when $\tau = \otimes_{i=1}^k \tau_i  $ with $\tau_i \in  \tilde{\mathcal{P}} (  \overline{\mathcal{A}(X_i)} )$ for all $i = 1,...,k $.
In fact, we first realize that, via functional calculus, we have
\begin{eqnarray*}
&  & 
\left \langle  \left[  \sum_{i=1}^k  \left( \otimes_{ j = 1}^{i-1} 1_j \right) \otimes (T^*_{X_i} \mathcal{E} )( \cdot )  \otimes \left( \otimes_{ j = i+1}^{k} 1_j \right)  \right]^{-p}   \left( \otimes_{i=1}^k \xi_i \right)  \, \, , \, \left( \otimes_{i=1}^k  \xi_i \right)  \right\rangle_{ \tilde{H}^{\otimes^k} }  \\ 
&=& \int_{\sigma\left(  \sum_{i=1}^k  \left( \otimes_{ j = 1}^{i-1} 1_j \right) \otimes (T^*_{X_i} \mathcal{E} )( \cdot )  \otimes \left( \otimes_{ j = i+1}^{k} 1_j \right)  \right) }  \lambda^{-p} \,  E_{   \sum_{i=1}^k  \left( \otimes_{ j = 1}^{i-1} 1_j \right) \otimes (T^*_{X_i} \mathcal{E} )( \cdot )  \otimes \left( \otimes_{ j = i+1}^{k} 1_j \right)  }  \left(  \otimes_{i=1}^k  \xi_i  , \otimes_{i=1}^k  \xi_i  \right) (d \lambda )  \\
&=& \int_{ \prod_{i=1}^k \sigma\left( (T^*_{X_i} \mathcal{E} )( \cdot )  \right) }  \left( \sum_{i=1}^k \lambda_i \right)^{-p} \, \left( \otimes_{i=1}^k E_{   (T^*_{X_i} \mathcal{E} )( \cdot ) }  \left( \xi_i, \xi_i \right)\right) \left(d \lambda_1,...d \lambda_k \right) \,.
\end{eqnarray*}
(With this, transformation rules that preserves $T^*_{X_i} \mathcal{E} (\cdot)$ or their spectrum always induce transformation rules over the zeta functionals as in the previous section.  We skip them for the sake of brevity.)
Moreover, we may directly compute
\begin{eqnarray*}
& &d \left((X_1,...,X_k), \otimes_{i=1}^k \tau_i  \right)\\
&= & \lim_{p \rightarrow \infty}  \left( \left( \otimes_{i=1}^k \tau_i  \right) \left(   \left[  \sum_{i=1}^k  \left( \otimes_{ j = 1}^{i-1} 1_j \right) \otimes (T^*_{X_i} \mathcal{E} )( \cdot )  \otimes \left( \otimes_{ j = i+1}^{k} 1_j \right)  \right]^{-p}  \right) \right)^{- \frac{1}{2p}}  \\
&=& \lim_{p \rightarrow \infty}  \bigg[  
\int \limits_{  [ \mathfrak{sa} \, \mathcal{L} (H) ]^{\sum_{i=1}^k d_i}  \times  [ P (\tilde{H}) ]^k }   \,  
 \int \limits_{ \prod_{i=1}^k \sigma\left( (T^*_{X_i} \mathcal{E} )( \cdot )  \right) } \\
&  & \quad \quad \quad \quad \left( \sum_{i=1}^k \lambda_i \right)^{-p} \,  \left( \otimes_{i=1}^k E_{   (T^*_{X_i} \mathcal{E} )( \cdot ) } \right)  \left( \xi_i, \xi_i \right) \left(d \lambda_1,...d \lambda_k \right) 
\,  ( \otimes_{i=1}^k \mu_{\tau_i} ) (d \xi_1, ... d \xi_k , d X_1, ..., d X_k) 
 \bigg]^{-\frac{1}{2p}} \\
&=&  \sqrt{  \inf_{ \scriptsize \begin{matrix} ( \xi_i, Y_i ) \in \, \text{spt} \, \mu_{\tau_i}, \\ i = 1,...,k \end{matrix} }  \left \{   \inf \left\{ \sum_{i=1}^k \lambda_i  \, : \,  \lambda_i \in  \text{spt} \left ( E_{\mathcal{E}(X_i-Y_i)}  \left( \xi_i, \xi_i \right) \right), i = 1,...,k \right \} \right \}  } \,.
\end{eqnarray*}
In particular, for any $s \in \mathbb{C}$, whenever $\tau_i \in  \tilde{\mathcal{P}} (  \overline{\mathcal{A}(X_i)} )$ and $\nu_i \in  \tilde{\mathcal{M}} (  \overline{\mathcal{A}(X_i)} )$ for all $i = 1,...,k $, we observe that $\zeta_{ \otimes_{i=1}^k \tau_i } (s,   \otimes_{i=1}^k \nu_i ) $ is the unique functional such that
\begin{eqnarray*}
\zeta_{ \otimes_{i=1}^k \tau_i } (s,  \otimes_{i=1}^k \nu_i  )  :   { \otimes_{\min} }_{i=1}^k  \overline{\mathcal{A}(X_i)}   
%  { \color{blue} \overline{ \mathcal{A}(X_1, ..., X_k)}   }  
 & \rightarrow&  \mathbb{C}  \notag  \\
\zeta_{ \otimes_{i=1}^k \tau_i } (s,   \otimes_{i=1}^k \nu_i ) (  \otimes_{i=1}^k g_i  ) &=& \left(  \otimes_{i=1}^k \nu_i  \right) \left( d (\cdot,  \otimes_{i=1}^k \tau_i )^{s - \sum_{i=1}^k d_i  } \, (  \otimes_{i=1}^k g_i  )(\cdot ) \right) \,,
\end{eqnarray*}
which we may explicitly evaluate as
{\small
\begin{eqnarray*}
&  & [ \zeta_{ \otimes_{i=1}^k \tau_i } (s,   \otimes_{i=1}^k \nu_i ) ] (  \otimes_{i=1}^k g_i  ) \\
& = & 
\int \limits_{  [ \mathfrak{sa} \, \mathcal{L} (H) ]^{\sum_{i=1}^k d_i}  \times  [ P (\tilde{H}) ]^k }    \left( \inf_{ \scriptsize \begin{matrix} ( \xi_i, Y_i ) \in \, \text{spt} \, \mu_{\tau_i}, \\ i = 1,...,k \end{matrix} }  \left \{   \inf \left\{ \sum_{i=1}^k \lambda_i  \, : \,  \lambda_i \in  \text{spt} \left ( E_{\mathcal{E}(X_i-Y_i)}  \left( \xi_i, \xi_i \right) \right), i = 1,...,k \right \} \right \}   \right)^{\frac{s - \sum_{i=1}^k d_i }{2}}  \,  \\
&  & \quad \quad \quad \quad \quad
 \prod_{i=1}^k \langle \xi_i,  g_i(X_i) \, \xi_i \rangle_{\tilde{H}}
\,  ( \otimes_{i=1}^k \mu_{\nu_i} )  (d \xi_1, ... d \xi_k , d X_1, ..., d X_k) \,.
\end{eqnarray*}
}To better understand the above functional $\zeta_{ \otimes_{i=1}^k \tau_i } (s,   \otimes_{i=1}^k \nu_i ) $, inspired by Remark \ref{Product}, we define the following auxillary function (which is not an infinity norm in any sense)
\begin{eqnarray*}
& & \widetilde{d_\infty} \left((X_1,...,X_k), \otimes_{i=1}^k \tau_i  \right) \\
&:=& \inf_{ \scriptsize \begin{matrix} ( \xi_i, Y_i ) \in \, \text{spt} \, \mu_{\tau_i}, \\ i = 1,...,k \end{matrix} }   \left \{   \inf \left\{ \left( \sup_i  \sqrt{ \lambda_i }  \right) \, : \,  \lambda_i \in  \text{spt} \left ( E_{\mathcal{E}(X_i-Y_i)}  \left( \xi_i, \xi_i \right) \right)  \text{ for } i = 1,...,k \right \} \right \} \,.
\end{eqnarray*}
We also denote
\begin{eqnarray*}
\widetilde{ (\otimes_{i=1}^k \nu_i  ) } ^{ \otimes_{i=1}^k \tau_i  ,\delta}_\infty  :=  (\otimes_{i=1}^k \nu_i  )  \circ  \text{mul}_{\chi_{ \left\{ X \in [ \mathfrak{sa} \, \mathcal{L} (H) ]^{\sum_{i=1}^k d_i} : \widetilde{d_\infty} \left(  (X_1,...,X_k), \otimes_{i=1}^k \tau_i  \right)  \leq \delta  \right\}  } } 
\end{eqnarray*}
and write
\begin{eqnarray*}
\widetilde{ \mathcal{M}^s_{*,\infty} } (  \otimes_{i=1}^k \tau_i ,  \otimes_{i=1}^k \nu_i )   &:=& \liminf_{\delta \rightarrow 0^+}  \, \delta^{ s - \sum_{i=1}^k d_i  } \,  \| \widetilde{(\otimes_{i=1}^k \nu_i)}_\infty^{\otimes_{i=1}^k \tau_i,\delta}  \|  \,,  \\
 \quad \widetilde{ \mathcal{M}^{*\,s}_\infty }  (  \otimes_{i=1}^k \tau_i ,  \otimes_{i=1}^k \nu_i )   &:=& \limsup_{\delta \rightarrow 0^+}  \, \delta^{ s - \sum_{i=1}^k d_i  } \,  \| \widetilde{(\otimes_{i=1}^k \nu_i)}_\infty^{\otimes_{i=1}^k \tau_i,\delta}  \|   
\end{eqnarray*}
for a given $s \geq 0$, and
\begin{eqnarray*}
 \widetilde{ \mathcal{M}^s_{\infty} } (  \otimes_{i=1}^k \tau_i ,  \otimes_{i=1}^k \nu_i )  :=
\widetilde{ \mathcal{M}^s_{*,\infty} } (  \otimes_{i=1}^k \tau_i ,  \otimes_{i=1}^k \nu_i )  =  \widetilde{ \mathcal{M}^{*\,s}_\infty }  (  \otimes_{i=1}^k \tau_i ,  \otimes_{i=1}^k \nu_i )  
\end{eqnarray*}
when they are so equal.
We moreover write the following auxilliary zeta(-like) functional:
\[
[ \widetilde{\zeta_{ \otimes_{i=1}^k \tau_i , \infty}} (s,   \otimes_{i=1}^k \nu_i ) ] (  \otimes_{i=1}^k g_i  )  := 
\left(  \otimes_{i=1}^k \nu_i  \right) \left( \widetilde{ d_\infty } (\cdot,  \otimes_{i=1}^k \tau_i )^{s  - \sum_{i=1}^k d_i } \, (  \otimes_{i=1}^k g_i  )(\cdot ) \right) \,.
\]
Then the following lemma holds.

\begin{lemma}
Whenever $\tau_i \in  \tilde{\mathcal{P}} (  \overline{\mathcal{A}(X_i)} )$ and $\nu_i \in  \tilde{\mathcal{M}} (  \overline{\mathcal{A}(X_i)} )$ for $i = 1,...,k$, we have
\begin{enumerate}
\item
It holds that
\begin{eqnarray*}
 k^{-\frac{s-d}{2}} \mathcal{M}^s_{*}( \otimes_{i=1}^k \tau_i  , \otimes_{i=1}^k \nu_i  )   \leq \widetilde{ \mathcal{M}^s_{*,\infty} }( \otimes_{i=1}^k \tau_i  ,  \otimes_{i=1}^k \nu_i  )   \leq k^{\frac{s-d}{2}} \mathcal{M}^s_{*}(\otimes_{i=1}^k \tau_i  , \otimes_{i=1}^k \nu_i  )   \,, \\
 k^{-\frac{s-d}{2}} \mathcal{M}^{*\,s}(\otimes_{i=1}^k \tau_i ,  \otimes_{i=1}^k \nu_i  )  \leq  \widetilde{ \mathcal{M}^{*\,s}_\infty }(\otimes_{i=1}^k \tau_i ,  \otimes_{i=1}^k \nu_i  )   \leq k^{\frac{s-d}{2}} \mathcal{M}^{*\,s}(\otimes_{i=1}^k \tau_i ,  \otimes_{i=1}^k \nu_i  ) \,.
\end{eqnarray*}
\item
It holds that
\begin{eqnarray*}
\widetilde{ \mathcal{M}^{\sum_{i=1}^k s_i}_{*,\infty} }( \otimes_{i=1}^k \tau_i, \otimes_{i=1}^k \nu_i  )  & \geq & \prod_{i=1}^k \mathcal{M}^{s_i}_{*}(\tau_i, \nu_i )  \text{ with a convention that } 0 \cdot \infty = 0 \,, \\
\widetilde{ \mathcal{M}^{*\, \sum_{i=1}^k s_i}_\infty } ( \otimes_{i=1}^k \tau_i, \otimes_{i=1}^k \nu_i  )  & \leq & \prod_{i=1}^k \mathcal{M}^{*\,s_i} (\tau_i, \nu_i )   \text{ with a convention that } 0 \cdot \infty = \infty \,,
\end{eqnarray*}
and
\begin{eqnarray*}
\underline{\text{dim}}_B ( \otimes_{i=1}^k \tau_i,   \otimes_{i=1}^k  \nu_i)& \geq &\sum_{i=1}^k \underline{\text{dim}}_B (\tau_i, \nu_i) \, , \\
\overline{\text{dim}}_B ( \otimes_{i=1}^k \tau_i,   \otimes_{i=1}^k  \nu_i) &  \leq & \sum_{i=1}^k \overline{\text{dim}}_B  (\tau_i, \nu_i) .
\end{eqnarray*}
Hence if $\underline{\text{dim}}_B (\tau_i, \nu_i) = \overline{\text{dim}}_B (\tau_i, \nu_i)  $ for all $i$, then
\begin{eqnarray*}
\text{dim}_B ( \otimes_{i=1}^k \tau_i,   \otimes_{i=1}^k  \nu_i) = \sum_{i=1}^k \text{dim}_B (\tau_i, \nu_i) \,,
\end{eqnarray*}
Furthermore, if for all $i$, $\tau_i$ is relatively Minkowski nondegenerate with respect to $\nu_i$ and the $\text{dim}_B (\tau_i, \nu_i)$-dimensional relative Minkowski content are all well-defined, then the auxillary variable satisfies
\[
 \widetilde{ \mathcal{M}^{  \text{dim}_B ( \otimes_{i=1}^k \tau_i,   \otimes_{i=1}^k  \nu_i)  }_{\infty} } ( \otimes_{i=1}^k \tau_i, \otimes_{i=1}^k \nu_i  )  = \prod_{i=1}^k \mathcal{M}^{\text{dim}_B (\tau_i, \nu_i)} (\tau_i, \nu_i )
\]
\item
For $s > \overline{\text{dim}}_B (  \otimes_{i=1}^k \tau_i , \otimes_{i=1}^k \nu_i  )$, $s \in \mathbb{R}$, and $g_i \geq 0$, we have
\[
 k^{-\frac{s-d}{2}}  [ \zeta_{ \otimes_{i=1}^k \tau_i } (s, \otimes_{i=1}^k \nu_i ) ] ( \otimes_{i=1}^k g_i ) \leq [ \widetilde{ \zeta_{\otimes_{i=1}^k \tau_i ,\infty} } (s, \otimes_{i=1}^k \nu_i)  ] ( \otimes_{i=1}^k g_i )  \leq  k^{\frac{s-d}{2}}  [ \zeta_{ \otimes_{i=1}^k \tau_i } (s, \otimes_{i=1}^k \nu_i ) ] ( \otimes_{i=1}^k g_i ) 
\]

\item
For all $s_i \in \mathbb{C} $ such that  $ \Re(s_i) > \overline{\text{dim}}_B (\tau_i,  \nu_i)$,  $g_i \geq 0$, we have
\begin{eqnarray*}
&  & \left| \frac{ [ \widetilde{ \zeta_{ \otimes_{i=1}^k \tau_i } } ( \sum_{i=1}^k s_i , \otimes_{i=1}^k \nu_i ) ] ( \otimes_{i=1}^k g_i ) -  \delta^{( \sum_{i=1}^k s_i - \sum_{i=1}^k d_i  ) }  \prod_{i=1}^k \nu_{i} (g_i)  }{ \sum_{i=1}^k s_i - \sum_{i=1}^k d_i } \right|    \\
&\leq  & -  \sum_{i=1}^k    \frac{  \left(  k d_i  \left( \sum_{i=1}^k d_i \right) \right)^{-   \frac{ k ( \Re(s_i) - d_i ) }{2} }    [ \widetilde{ \zeta_{\tau_i^{\otimes^k} } } (k \Re(s_i) , \nu_i^{\otimes^k}  ) ] (  g_i^{\otimes^k} )  - \delta^{ (k \Re(s_i) - k d_i )}  [ \nu_{i} (g_i) ]^k }{ k^2 (\Re(s_i) -  d_i )}    \,.
\end{eqnarray*}

\item
(Shift property) Whenever the relative box counting dimension $ \text{dim}_B (\tau_2, \nu_2) $  
and the  $\text{dim}_B (\tau_2, \nu_2)$-dimensional relative Minkowski content $\mathcal{M}^{\text{dim}_B (\tau_2, \nu_2)}(\tau_2, \nu_2 ) $ are well-defined,  then for all $s \in \mathbb{R}$ such that $s - \text{dim}_B (\tau_2, \nu_2) > \overline{\text{dim}}_B (\tau_1,  \nu_1 )$,  $g \geq 0$, we have
\begin{eqnarray*}  
&  & - \frac{1}{2}  d_2^{ - \frac{d_2 -  \text{dim}_B (\tau_2, \nu_2)}{2} }    \left(  \frac{ [ \zeta_{\tau_1} \left(s  - \text{dim}_B (\tau_2, \nu_2 ) , \nu_1 \right) ] (g) -  \delta^{(s  - \text{dim}_B (\tau_2, \nu_2 ) - d_1 )}   \nu_{1} (g)  }{ (s - \text{dim}_B (\tau_2, \nu_2 )  - d_1 )}  \right)  \\
&\leq  & - \frac{ [ \widetilde{ \zeta_{\mu_1 \otimes \mu_2 ,\infty} } \left( \Re(s)  , \nu_1 \otimes \nu_2 \right) ] ( g \otimes 1 )  -  \delta^{( \Re(s) - d_1 - d_2)}   \tau_{1} (g) \tau_2 (1) }{ \mathcal{M}^{\text{dim}_B (\tau_2, \nu_2 )} (\mu_2, \nu_2 ) ( \Re(s)  - d_1 - d_2)} \\
& \leq &  - 2  d_2^{ - \frac{d_2 -  \text{dim}_B (\tau_2, \nu_2)}{2} }    \left(  \frac{ [ \zeta_{\tau_1} \left(s  - \text{dim}_B (\tau_2, \nu_2 ) , \nu_1 \right) ] (g) -  \delta^{(s  - \text{dim}_B (\tau_2, \nu_2 ) - d_1 )}   \nu_{1} (g)  }{ (s - \text{dim}_B (\tau_2, \nu_2 )  - d_1 )}  \right) 
\end{eqnarray*}

\end{enumerate}
\end{lemma}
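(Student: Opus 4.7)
The plan is to mirror, step by step, the proof of the analogous commutative lemma recorded in Remark \ref{Product}, exploiting the fact that for a tensor state $\otimes_{i=1}^k \tau_i$ the associated Radon measure factorizes as $\mu_{\otimes_i \tau_i} = \otimes_i \mu_{\tau_i}$, so that the joint infimum defining the noncommutative distance $d$ decouples across the factors. The auxiliary quantity $\widetilde{d_\infty}$ is introduced precisely so that it plays the role of a tensor analog of the $\infty$-norm distance in the commutative section, while $d$ plays the role of the Euclidean distance.

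First I would establish the pointwise two-sided comparison
\[
\widetilde{d_\infty}\bigl((X_1,\ldots,X_k),\otimes_i \tau_i\bigr) \;\le\; d\bigl((X_1,\ldots,X_k),\otimes_i \tau_i\bigr) \;\le\; \sqrt{k}\,\widetilde{d_\infty}\bigl((X_1,\ldots,X_k),\otimes_i \tau_i\bigr),
\]
coming from the elementary inequality $\max_i \lambda_i \le \sum_i \lambda_i \le k \max_i \lambda_i$ applied inside the joint spectral infimum expression for $d$ derived just above the lemma. This inclusion of sublevel sets immediately yields $\nu^{\otimes_i \tau_i, k^{-1/2}\delta}\le \widetilde{(\otimes_i\nu_i)}_\infty^{\otimes_i\tau_i,\delta}\le \nu^{\otimes_i\tau_i,\delta}$ when tested against $1$, proving part (1), and integrating over $\nu$ against $g$ yields part (3). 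For part (2), the factorization $\mu_{\otimes_i \tau_i}=\otimes_i \mu_{\tau_i}$ forces
\[
\widetilde{d_\infty}\bigl((X_1,\ldots,X_k),\otimes_i \tau_i\bigr) \;=\; \max_{1\le i\le k} d(X_i,\tau_i),
\]
so the $\widetilde{d_\infty}$-tube is the exact Cartesian product of the individual tubes, giving $\widetilde{(\otimes_i\nu_i)}_\infty^{\otimes_i\tau_i,\delta}(1)=\prod_i \nu_i^{\tau_i,\delta}(1)$; the Minkowski content inequalities and the upper bound $\overline{\dim}_B(\otimes_i\tau_i,\otimes_i\nu_i)\le \sum_i \overline{\dim}_B(\tau_i,\nu_i)$, as well as the equality case under nondegeneracy, then follow by taking $\liminf$/$\limsup$ exactly as in Remark \ref{Product}.

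For part (4), I would use the tube representation \eqref{zeta_non_communtative_tube_tensor} of Lemma \ref{convergence_3} applied to $\widetilde{\zeta_{\otimes_i \tau_i,\infty}}$ (an identical derivation works since $\widetilde{d_\infty}$ is still bounded on the support), write the integrand as $\prod_i (\nu_i)^{\tau_i,t}(g_i)$, and then invoke the AM--GM inequality \eqref{elementary_inequality}: $\prod_i a_i \le k^{-1}\sum_i a_i^k$, with $a_i = (\nu_i)^{\tau_i,t}(g_i)$; bounding $[(\nu_i)^{\tau_i,t}(g_i)]^k\le (\nu_i^{\otimes^k})^{\tau_i^{\otimes^k},t}(g_i^{\otimes^k})$ and reassembling through the tube formula delivers the stated inequality, exactly parallel to the commutative product inequality. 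For part (5), the shift property, the hypothesis that $\mathcal{M}^{\dim_B(\tau_2,\nu_2)}(\tau_2,\nu_2)$ exists as a two-sided limit gives, for small $t$, $(\nu_2)^{\tau_2,t}(1) = \mathcal{M}^{\dim_B(\tau_2,\nu_2)}(\tau_2,\nu_2)\, t^{d_2-\dim_B(\tau_2,\nu_2)} + o(t^{d_2-\dim_B(\tau_2,\nu_2)})$; inserting this into $\int_0^\delta t^{s-d_1-d_2-1}(\nu_1)^{\tau_1,t}(g)\,(\nu_2)^{\tau_2,t}(1)\,dt$, which equals (up to the tube boundary term) the $\widetilde{\zeta}$-functional at $s$ evaluated on $g\otimes 1$, and comparing with $\int_0^\delta t^{s-\dim_B(\tau_2,\nu_2)-d_1-1}(\nu_1)^{\tau_1,t}(g)\,dt = $ the $\zeta_{\tau_1}$ tube integral at $s-\dim_B(\tau_2,\nu_2)$, gives the stated sandwich after combining with the norm equivalence from part (3).

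The main obstacle is bookkeeping rather than conceptual: the ``norm'' $\widetilde{d_\infty}$ is not a genuine supremum norm on any natural ambient space, so inequalities such as $\max_i\lambda_i\le \sum_i\lambda_i\le k\max_i\lambda_i$ must be invoked strictly inside the spectral infimum formula after factoring $\mu_{\otimes_i\tau_i}=\otimes_i\mu_{\tau_i}$, rather than by a direct triangle inequality; and in part (4) one must carefully track the constants $\bigl(kd_i\sum_j d_j\bigr)^{-k(\Re s_i-d_i)/2}$, which arise from iteratively reinjecting the scaling factor $\sqrt{k}$ from part (1) together with the $\sqrt{\sum_i d_i}$ scaling coming from the dimensional cancellation in the definition of the zeta functional, exactly as in the commutative computation. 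Once the product structure of $\widetilde{d_\infty}$-tubes is in hand, all five parts reduce to verbatim analogs of their commutative counterparts.
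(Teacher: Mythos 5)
Your proposal follows essentially the same route as the paper: the pointwise two-sided comparison between $d$ and $\widetilde{d_\infty}$ via the elementary inequality $\max_i\lambda_i \le \sum_i\lambda_i \le k\max_i\lambda_i$, the identification of $\widetilde{d_\infty}$-tubes as exact Cartesian products of factor tubes (equivalently $\widetilde{d_\infty}(\cdot,\otimes_i\tau_i)=\max_i d(X_i,\tau_i)$), the AM--GM inequality \eqref{elementary_inequality} for part (4), and the Minkowski content asymptotic $(\nu_2)^{\tau_2,t}(1)\sim \mathcal{M}^{\dim_B}t^{d_2-\dim_B}$ for the shift property. This is precisely the paper's argument.

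Two bookkeeping errors in your write-up, however. First, from your (tighter) pointwise bound $\widetilde{d_\infty}\le d\le\sqrt{k}\,\widetilde{d_\infty}$, the induced inclusion of sublevel sets is $\{d\le\delta\}\subset\{\widetilde{d_\infty}\le\delta\}\subset\{d\le\sqrt{k}\,\delta\}$, so the correct tube chain is
\[
\nu^{\otimes_i\tau_i,\delta}(g)\;\le\;\widetilde{(\otimes_i\nu_i)}_\infty^{\otimes_i\tau_i,\delta}(g)\;\le\;\nu^{\otimes_i\tau_i,\sqrt{k}\,\delta}(g),
\]
not the $\le\nu^{\otimes_i\tau_i,\delta}$ you wrote on the right; that right-hand inequality would require $\{\widetilde{d_\infty}\le\delta\}\subset\{d\le\delta\}$, which fails since $\widetilde{d_\infty}\le d$. (The conclusion is unaffected, since the correct chain still delivers the Minkowski-content and zeta comparisons after adjusting the $\delta$-scaling.) Second, in part (4) you write $[(\nu_i)^{\tau_i,t}(g_i)]^k\le(\nu_i^{\otimes^k})^{\tau_i^{\otimes^k},t}(g_i^{\otimes^k})$, but the direction is reversed: the product structure gives the identity $[(\nu_i)^{\tau_i,t}(g_i)]^k = \widetilde{(\nu_i^{\otimes^k})}_\infty^{\tau_i^{\otimes^k},t}(g_i^{\otimes^k})$, and since $\widetilde{d_\infty}\le d$ one has $\widetilde{(\cdot)}_\infty^{\,t}\ge(\cdot)^{t}$, not $\le$. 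The paper's proof in fact keeps the $\widetilde{\zeta}_\infty$ functional on the right-hand side at that step and only reintroduces the ordinary tube (which is where the constants $(kd_i\sum_j d_j)^{-k(\Re s_i - d_i)/2}$ appear) with the appropriate enlargement of the radius, so you need to track the $\sqrt{k}$ rescaling if you want to pass back to the $d$-tube.
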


\begin{proof}
The first part and third part of the lemma comes from an equivalence of norm in finite dimension
\[
k^{-\frac{1}{2}} \sqrt{ \sum_{i=1}^k \lambda_i } \leq \sup_i  \sqrt{ \lambda_i } \leq k^{\frac{1}{2}} \sqrt{ \sum_{i=1}^k \lambda_i }
\]
that
\[
k^{-\frac{1}{2}}  \widetilde{d_\infty} \left((X_1,...,X_k), \otimes_{i=1}^k \tau_i  \right)   \leq   d \left((X_1,...,X_k), \otimes_{i=1}^k \tau_i  \right)  \leq k^{\frac{1}{2}}   \widetilde{d_\infty} \left((X_1,...,X_k), \otimes_{i=1}^k \tau_i  \right)  \,,
\]
and therefore
\begin{eqnarray*}
&  &  \{ (X_1,...,X_k) \in  [\mathfrak{sa} \, \mathcal{L} (H) ]^{\sum_{i=1}^k d_i } : d \left((X_1,...,X_k), \otimes_{i=1}^k \tau_i  \right) \leq k^{-\frac{1}{2}} \delta \}  \\
&\subset & \{ (X_1,...,X_k) \in  [\mathfrak{sa} \, \mathcal{L} (H) ]^{\sum_{i=1}^k d_i }  : d_\infty  \left((X_1,...,X_k), \otimes_{i=1}^k \tau_i  \right) \leq \delta \} \\
&\subset&  \{ (X_1,...,X_k) \in  [\mathfrak{sa} \, \mathcal{L} (H) ]^{\sum_{i=1}^k d_i }  :   d \left((X_1,...,X_k), \otimes_{i=1}^k \tau_i  \right)  \leq k^{\frac{1}{2}}  \delta \}
\end{eqnarray*}
and therefore for all positive $g_i \geq 0$ \,
\[
( \otimes_{i=1}^k  \nu_i)^{\otimes_{i=1}^k \tau_i , k^{-\frac{1}{2}} \delta} ( \otimes_{i=1}^k g_i ) \leq  (  \otimes_{i=1}^k  \nu_i )^{ \otimes_{i=1}^k \tau_i  , \delta}_\infty ( \otimes_{i=1}^k g_i ) \leq ( \otimes_{i=1}^k  \nu_i )^{ \otimes_{i=1}^k \tau_i  , k^{\frac{1}{2}} \delta} ( \otimes_{i=1}^k g_i ) \,,
\]
and in particular we may choose $g_i = 1$.

The second part and the fourth of the lemma comes from
\begin{eqnarray*}
&    & \left \{(X_1,...,X_k) \in  [\mathfrak{sa} \, \mathcal{L} (H) ]^{\sum_{i=1}^k d_i }  : \widetilde{d_\infty} \left((X_1,...,X_k) , \otimes_{i=1}^k \tau_i \right)  \leq \delta \right\} \\
&= & \prod_{i=1}^k  \left \{ (X_1,...,X_k) \in  [\mathfrak{sa} \, \mathcal{L} (H) ]^{\sum_{i=1}^k d_i }  : d_\infty \left(x_i, \mu_i \right)  \leq \delta \right \} \,,
\end{eqnarray*}
and therefore
\[
[ \widetilde{ (\otimes_{i=1}^k \nu_i )_\infty}^{ \otimes_{i=1}^k \tau_i , \delta} ]  (   \otimes_{i=1}^k g_i  )  =   \prod_{i=1}^k  ( \nu_i )^{ \tau_i ,  \delta} (  g_i  )    \,. \,,
\]
Together with a similar proof to \eqref{zeta_non_communtative_tube_tensor}), we arrive at
\[
[ \widetilde{\zeta_{\otimes_{i=1}^k \tau_i ,\infty} } (s,  (\otimes_{i=1}^k \nu_i ) ]  (   \otimes_{i=1}^k g_i  ) =   \delta^{s- \sum_{i=1}^k d_i} \prod_{i=1}^k  \nu_i (g_i)  +  (s-\sum_{i=1}^k d_i ) \int_{0}^{\delta}  t^{s-  \sum_{i=1}^k d_i -1} \,   \prod_{i=1}^k  ( \nu_i )^{ \tau_i ,  \delta} (  g_i  )     dt 
\]
Moreover, together with \eqref{elementary_inequality}, we now have, for all $s_i \in \mathbb{C} $ such that $ \Re(s_i) > \overline{\text{dim}}_B (\tau_i,  \nu_i)$, we have
\begin{eqnarray*}
&  &\left| \frac{ [ \widetilde{ \zeta_{\otimes_{i=1}^k \mu_i ,\infty} } ( \sum_{i=1}^k s_i  , \otimes_{i=1}^k \nu_i  ) ] ( \otimes_{i=1}^k g_i  ) -  \delta^{(  \sum_{i=1}^k s_i -  \sum_{i=1}^k d_i )}    \prod_{i=1}^k \nu_{i} (g_i) }{ ( \sum_{i=1}^k s_i -  \sum_{i=1}^k d_i )} \right|  \\
& = &  \left|   \int_{0}^{\delta}  t^{\sum_{i=1}^k s_i  - \sum_{i=1}^k d_i -1} \,  \prod_{i=1}^k ( \nu_i )^{ \tau_i ,  t} (g_i)     dt \right|  \\
& \leq &   \sum_{i=1}^k  \frac{1}{k} \int_{0}^{\delta}  t^{k \Re(s_i) -k d_i -1} \, [  ( \nu_i )^{ \tau_i ,  t} (g_i)  ]^k   dt   \\
&  =  & -  \sum_{i=1}^k \frac{  [ \widetilde{ \zeta_{\tau_i^{\otimes^k} ,\infty} } (k \Re(s_i) , \nu_i^{\otimes^k}  ) ] (  g_i^{\otimes^k}  ) -  \delta^{ (k \Re(s_i) - kd_i )}  [ \nu_{i} (g_i) ]^k }{ k^2 (\Re(s_i) -  d_i )}  
\end{eqnarray*}

The fifth statement comes from the fact that,
whenever the relative box counting dimension $ \text{dim}_B (\tau_2, \nu_2) $  
as well as the  $\text{dim}_B (\tau_2, \nu_2)$-dimensional relative Minkowski $\mathcal{M}^{\text{dim}_B (\tau_2, \nu_2)}(\tau_2, \nu_2 ) $ are well-defined, we have
\begin{eqnarray*}
& &  d_2^{ - \frac{d_2 -  \text{dim}_B (\tau_2, \nu_2)}{2} } \mathcal{M}^{\text{dim}_B (\tau_2, \nu_2 )} (\tau_2, \nu_2 )    \delta^{ d_2 -  \text{dim}_B (\tau_2, \nu_2) } - o( \delta^{ d_2 - \text{dim}_B (\tau_2, \nu_2) } ) \\
& \leq & \,( \nu_2 )_{\infty}^{\tau_2,\delta} (1) \\
& \leq & d_2^{  \frac{d_2 -  \text{dim}_B (\tau_2, \nu_2)}{2} } \mathcal{M}^{\text{dim}_B (\tau_2, \nu_2 )} (\tau_2, \nu_2 )    \delta^{ d_2 -  \text{dim}_B (\tau_2, \nu_2) } + o( \delta^{ d_2 - \text{dim}_B (\tau_2, \nu_2) } ) \,.
\end{eqnarray*}
Therefore we have for all $s \in \mathbb{R}$ such that $s - \text{dim}_B (\tau_2, \nu_2) > \overline{\text{dim}}_B (\tau_1,  \nu_1 )$,
\begin{eqnarray*}
&    &  \frac{1}{2}  d_2^{ - \frac{d_2 -  \text{dim}_B (\tau_2, \nu_2)}{2} } \mathcal{M}^{\text{dim}_B (\tau_2, \nu_2 )} (\tau_2, \nu_2 ) \int_{0}^{\delta}  t^{s -  \overline{\text{dim}}_B (\tau_2, \nu_2) -d_1 -1} \,   \nu_1^{ \mu_1 ,  t} (g)     dt  \\
& \leq &  \int_{0}^{\delta}  t^{s -d_1 - d_2 -1} \,    \nu_1 ^{ \tau_1 ,  t} (g)     \nu_2^{ \tau_2 ,  t} (1)    dt \\
& \leq & 2 d_2^{  \frac{d_2 -  \text{dim}_B (\tau_2, \nu_2)}{2} } \mathcal{M}^{\text{dim}_B (\tau_2, \nu_2 )} (\tau_2, \nu_2 )  \int_{0}^{\delta}  t^{s -  \overline{\text{dim}}_B (\tau_2, \nu_2) -d_1 -1} \,   \nu_1^{ \tau_1 ,  t} (g)     dt 
\end{eqnarray*}
and thus we have
\begin{eqnarray*}  
&  & - \frac{1}{2}  d_2^{ - \frac{d_2 -  \text{dim}_B (\tau_2, \nu_2)}{2} }    \left(  \frac{ [ \zeta_{\tau_1} \left(s  - \text{dim}_B (\tau_2, \nu_2 ) , \nu_1 \right) ] (g) -  \delta^{(s  - \text{dim}_B (\tau_2, \nu_2 ) - d_1 )}   \nu_{1} (g)  }{ (s - \text{dim}_B (\tau_2, \nu_2 )  - d_1 )}  \right)  \\
&\leq  & - \frac{ [ \widetilde{ \zeta_{\mu_1 \otimes \mu_2 ,\infty} } \left( \Re(s)  , \nu_1 \otimes \nu_2 \right) ] ( g \otimes 1 )  -  \delta^{( \Re(s) - d_1 - d_2)}   \tau_{1} (g) \tau_2 (1) }{ \mathcal{M}^{\text{dim}_B (\tau_2, \nu_2 )} (\mu_2, \nu_2 ) ( \Re(s)  - d_1 - d_2)} \\
& \leq &  - 2  d_2^{ - \frac{d_2 -  \text{dim}_B (\tau_2, \nu_2)}{2} }    \left(  \frac{ [ \zeta_{\tau_1} \left(s  - \text{dim}_B (\tau_2, \nu_2 ) , \nu_1 \right) ] (g) -  \delta^{(s  - \text{dim}_B (\tau_2, \nu_2 ) - d_1 )}   \nu_{1} (g)  }{ (s - \text{dim}_B (\tau_2, \nu_2 )  - d_1 )}  \right) 
\end{eqnarray*}
and hence our conclusion.

\end{proof}

\section{Examples} \label{sec_4}
In this section, we would like to explore several examples in both the commutative and noncommutative case to have a better idea what properties the zeta functions are characterizing.

\begin{example} (A commutative example.)
Let $d=1$, $n=1$, and write $H =\mathbb{R} $, and hence $\mathfrak{sa}\,  \mathcal{L} (H)  = \mathbb{R}$.
Let $\mu = \mu_{\mathcal{C}} (dy)$, $\nu =\eta(x) dx$ for a compactly supported $\eta$ that covers $[-1,2]$.  We may compare this with the results of \cite{LRZ_book} about the Cantor set (\cite{LRZ_book}, Section 1.1, Example 1.1.2) and the graph of the Cantor function (\cite{LRZ_book}, Section 5.5.4, Example 5.5.14), and their associate relative fractal drums.

\begin{figurehere} \centering
\includegraphics[width=5cm,height=4cm]{./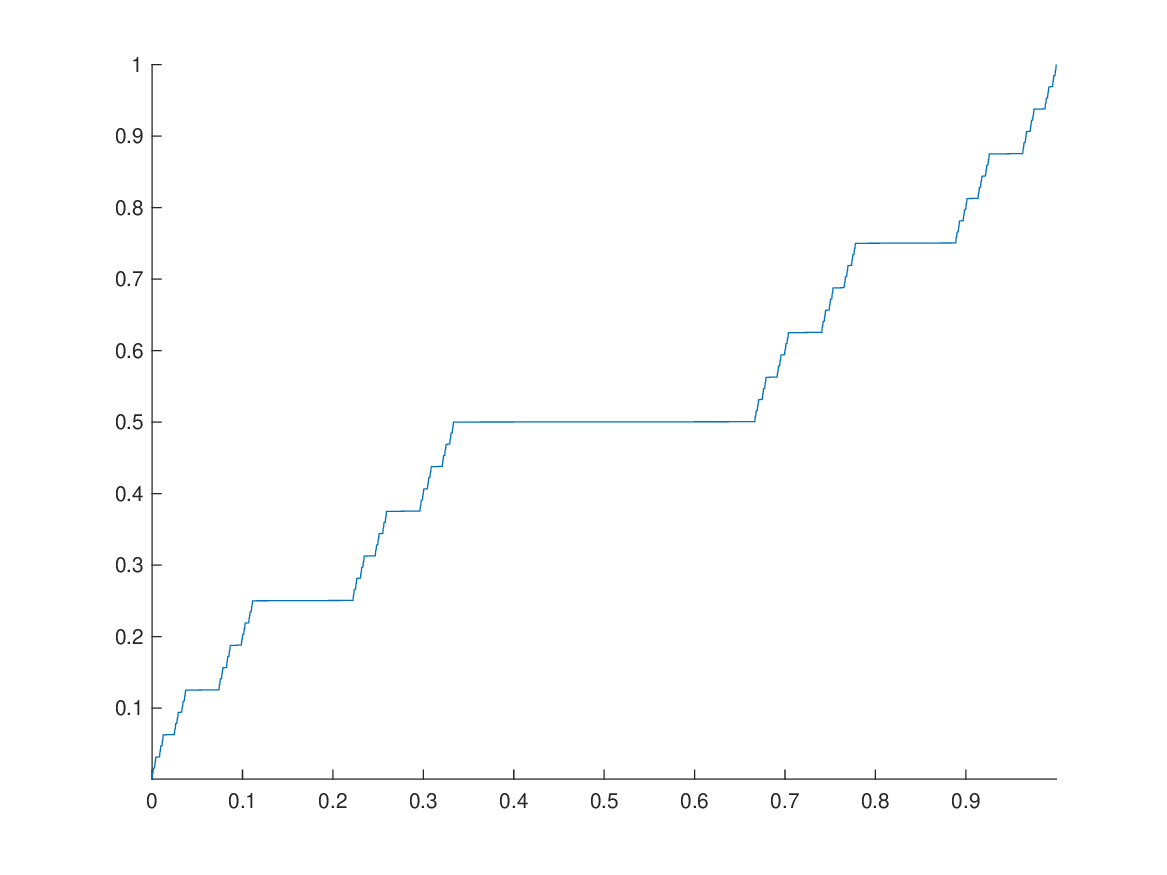}
\caption{The Cantor function, $F_{\mathcal{C}} (x) := \int_{(-\infty,x]} \mu_{\mathcal{C}} (dt)$, for illustrative purpose of the Cantor set} \label{test0A}
\end{figurehere}
\noindent Instantly, we have
\begin{eqnarray*}
 \zeta_{\mu} (s, \nu) =   \int_{ \mathbb{R} }  d (x,\mathcal{C})^{s - 1}  \, \eta(x) dx 
& = & 2 \int_{ \mathbb{R}_+ }  x^{s - 1}  \, \eta(x) dx + C  + \sum_{i = 0}^{\infty}  2^{i+1}  \int_0^{\frac{1}{(2)3^i} }  x^{s - 1}   dx    \\
&= &  C(s)  + \frac{ 2^{1-s} }{s} \sum_{i = 0}^{\infty}   \left( \frac{2}{3^{s}}   \right)^i   \\
&=& C(s) +  \frac{ 2^{1-s} 3^s }{s  \left(   3^{s} -2  \right) } 
\end{eqnarray*}
where $C$ is a constant and $C(s)$ only has a pole at $s = 0$.  Therefore, we have
\[
\mathcal{D}_{\mathbb{C}} ( \mu, \nu ) =  \left \{ \frac{\log(2)}{\log(3)} + i n \frac{2\pi}{\log(3)} \, : \, n \in \mathbb{N} \right \} \bigcup  \{ 0 \}  \,,
\]
noticing the obvious fact that the residues of $ \zeta_{\mu} ( \cdot , \nu)$ at the corresponding poles are non-zero.
For illustrative purpose, the norm of the values of the zeta function is shown in Figure \ref{test0B}.

\begin{figurehere} \centering
\includegraphics[width=5cm,height=4cm]{./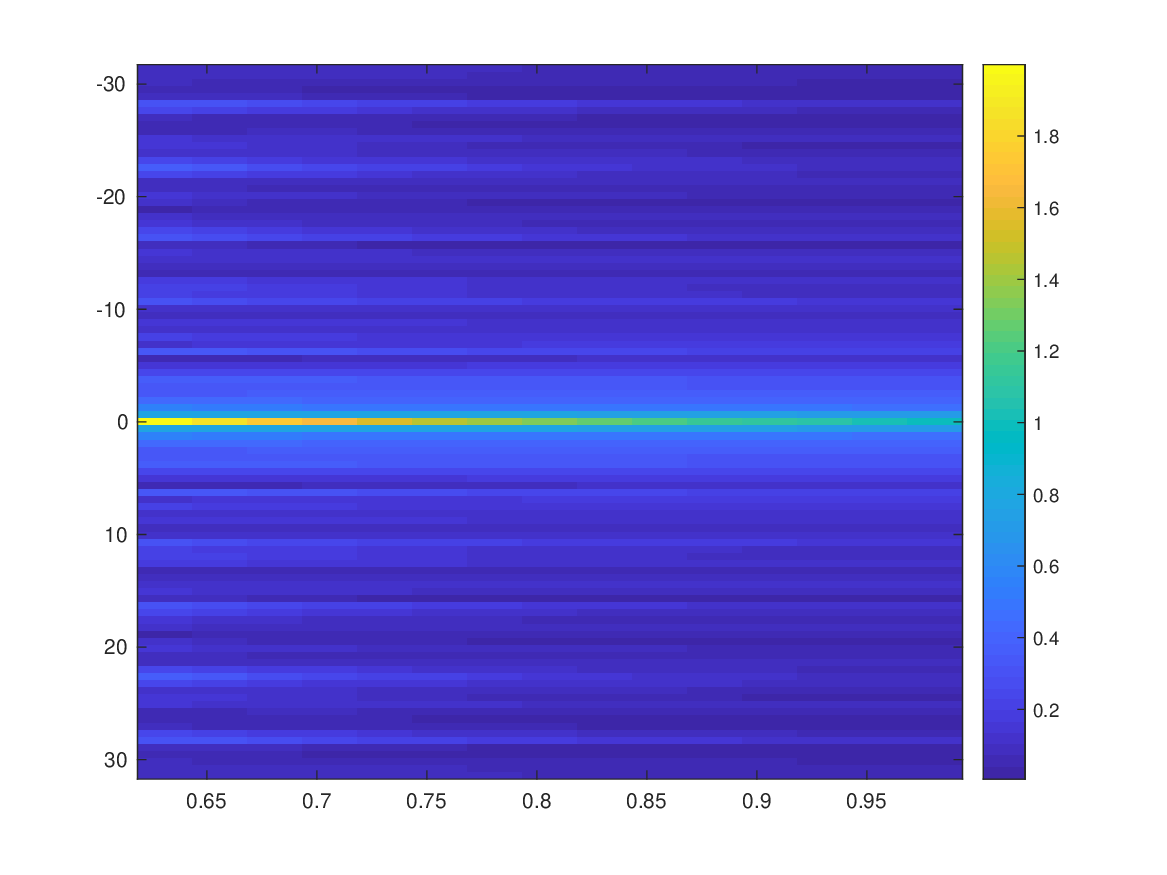}
\caption{$| \zeta_{\mu} (s, \nu) |$  for $\frac{\log(2)}{ \log(3)} \leq \Re (s) $ with $\eta(x) = e^{- 0.001 x^2}$.} \label{test0B}
\end{figurehere}

\end{example}

\begin{example} (A commutative example.)
Let $d=1$, $n=2$, and $H = \mathbb{R}^2$, then
\[
\mathfrak{sa} \mathcal{L} (H) = \left \{ \begin{pmatrix} a & b \\ b & c \end{pmatrix}  :  a, b, c \in \mathbb{R} \right \} \,.
\]
Let
$ \tau  :=  \int_{ [ \mathfrak{sa} \, \mathcal{L} (H) ]^d  }  \frac{1}{2}  \, \text{tr} (\cdot) \,  \tilde{\mu}_{\tau} (d X) :=
 \int_{\mathbb{R}^3}  \frac{1}{2}  \, \text{tr} (\cdot) \,  \mu_\mathcal{C} ( d b) \,  \delta_0( da ) \, \delta_0( dc ) $ where $\mu_\mathcal{C}$ is the Cantor measure, and $ \nu =  \int_{ [ \mathfrak{sa} \, \mathcal{L} (H) ]^d  }  \frac{1}{2}  \, \text{tr} (\cdot) \,  \tilde{\mu}_{\nu} (d X) = \int_{ \mathbb{R}^3  }  \frac{1}{2}  \, \text{tr} (\cdot) \,  \eta(a,b,c) \,  da \, d b \, dc  $
for a smooth $\eta$ with compact support to be chosen later.
\noindent Then it is clear that 
\[
 \text{spt} \, \tilde{\mu}_\tau = \left \{ \begin{pmatrix} 0 & b \\ b & 0 \end{pmatrix}  :  b \in \mathcal{C}  \right \} \,
\]
where $\mathcal{C}$ is the Cantor set.  Notice that for $\delta > 0$
\begin{eqnarray*}
[d (\cdot , \tau) ]^{-1} ( \delta ) 
& =  &  \left \{ (a,b,c) \in \mathbb{R}^3  \, :  \,   \inf_{p \in \mathcal{C} } \min_{s=0,1}  \left| \frac{a+c}{2} + (-1)^{s} \sqrt{  \left( \frac{a+c}{2}  \right)^2 + \bigg(  (b-p )^2 - ac \bigg)  } \right| < \delta \right \}  \\
& =  & \left \{ (a,b,c) \in \mathbb{R}^3 \, :  \,   \inf_{p \in \mathcal{C} }  \left|  \left| \frac{a+c}{2} \right|  - \sqrt{  \left( \frac{a-c}{2}  \right)^2 +  (b-p)^2  } \right| < \delta \right \} \,.
\end{eqnarray*}
To further understand this set $ [d (\cdot , \tau) ]^{-1} ( \delta )$ (from this example, and similar sets in the next two examples) we realize that, fixing any $p \in \mathbb{R}$ for all $\delta > 0 $ and $R_0 > 0 $, we have $ 0 < \underline{\delta}_{R_0} <  \overline{\delta}_{R_0} $ such that
\begin{eqnarray*}
& & B_{R_0} \left( (0, p, 0)  \right) \bigcap \bigcup_{   \left \{ (a,b,c) \in  \mathbb{R}^3 \, :  \, ( b - p )^2 = a c  \right \}   }   B_{  \underline{\delta}_{R_0} } \left( (a, b, c)  \right)  \\
& \subset &  \left \{ (a,b,c) \in  B_{R_0} \left( (0, p, 0)  \right) \, :  \,   \left|  \left| \frac{a+c}{2} \right|  - \sqrt{  \left( \frac{a-c}{2}  \right)^2 +  ( b-p )^2  } \right| < \delta \right \}  \\
& \subset & B_{R_0} \left( (0, p, 0)  \right) \bigcap \bigcup_{  \left \{ (a,b,c) \in  \mathbb{R}^3 \, :  \, ( b - p )^2 = a c  \right \}    }   B_{  \overline{\delta}_{R_0} } \left( (a, b, c)  \right) \,,
\end{eqnarray*}
and therefore, inside any Euclidean ball of finite radius $R$, the set in the middle is containing and is contained in two Euclidean neighborhoods of the elliptic cone $ \{  (b - p)^2 = a c  \}$.

Hence, if we only focus on the subset inside any Euclidean ball of finite radius $R>0$ (which we can, by making our choice of the smooth cutoff function $\eta$ with compact support,) $ [d (\cdot , \tau) ]^{-1} ( \delta )$ can be effectively regarded as a union over $p \in \mathcal{C} $ of Euclidean neighborhoods of the elliptic cones $ \{  ( b - p)^2 =  ac  \} =  \{  (b - p)^2 +  n^2 = m^2  \}$ after a change of variable $(a,c) \mapsto (m,n ) := (\frac{a+c}{2}, \frac{a-c}{2})$.
Moreover, we notice that $   [d (\cdot , \tau) ]^{-1} ( 0)   $ is the following union of the aforementioned elliptic cones 
\begin{eqnarray*}
[d (\cdot , \tau) ]^{-1} ( 0) 
& = &   \bigcup_{ p \in \mathcal{C} } \left \{ (a,b,c) \in  \mathbb{R}^3  \, :  \, (b - p)^2 = a c  \right \} \\
&= &  \bigcup_{ ac \geq 0}  \left \{ (a,b,c)  \, :  \,   b \in  \mathcal{C}  +  \partial B_{\sqrt{a c} }(0)  \right \} \,, \\
&=&  [d (\cdot , \tau_1) ]^{-1} ( 0)  \bigcup  [d (\cdot , \tau_2) ]^{-1} ( 0)  
\end{eqnarray*}
which is again a union of two non-disjoint sets, where we write
\[
\tau = \frac{1}{2} \tau_1 + \frac{1}{2} \tau_2  := \frac{1}{2} \left(\frac{1}{3}\right)_* \tau + \frac{1}{2} \left( T_{ - X_0 } \right)_*  \left(\frac{1}{3}\right)_* \tau
\]
with $X_0 = \begin{pmatrix} 0 & \frac{2}{3} \\ \frac{2}{3} & 0 \end{pmatrix} $.  

\begin{figurehere} \centering
\includegraphics[width=5cm,height=4cm]{./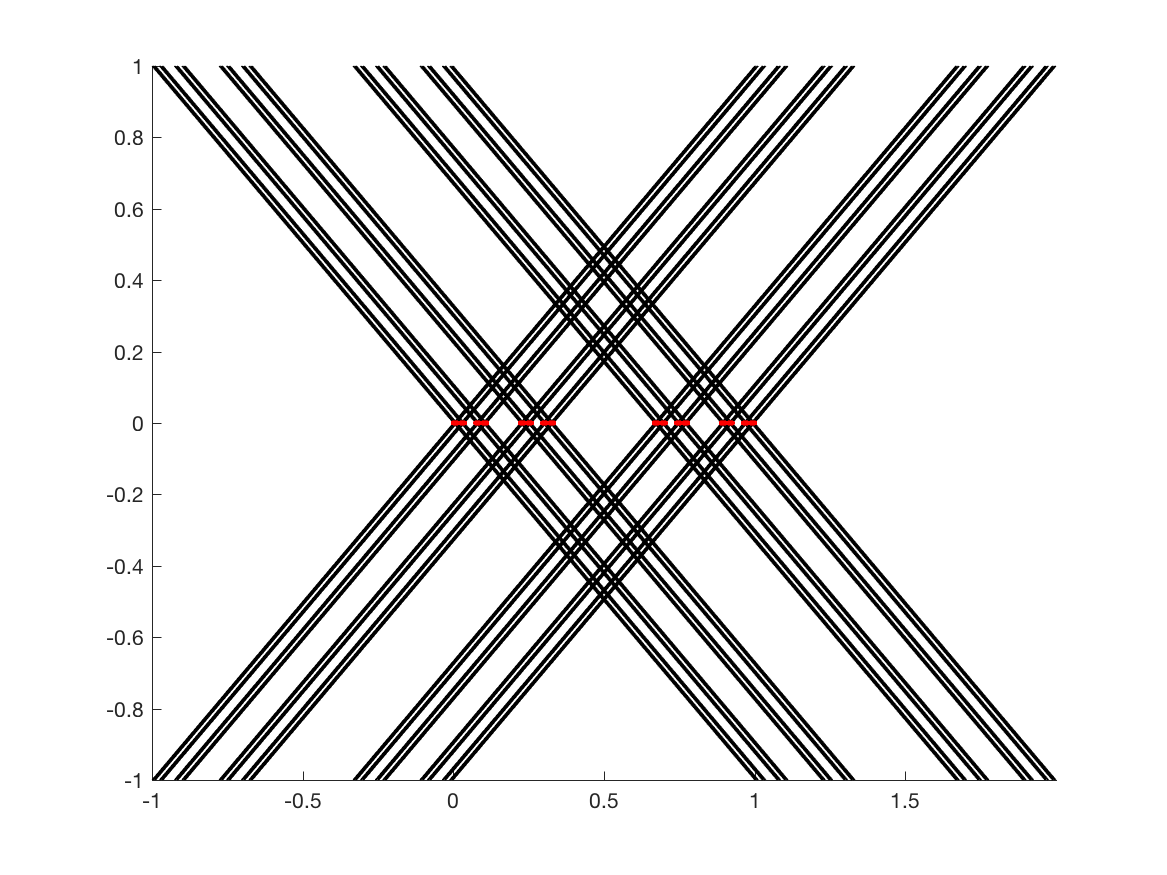}
\caption{The cross-section of $[d (\cdot , \tau) ]^{-1} ( 0) $ along the plane $a + c = 0$ on the variables $(a, b ,c) \in \mathbb{R}^3$, (the Cantor set is colored red.) } \label{test1A}
\end{figurehere}

Meanwhile, there exists a function $\eta$ of compact support such that
\[
\eta  =  \eta_1 + \eta_2 + \varphi := \left(\frac{1}{3}\right)^* \eta + T_{(0, \frac{2}{3},0)}^*  \left(\frac{1}{3}\right)^* \eta  + \varphi
\]
where $\varphi > 0$ is defined above and is also smooth and of compact support, and that for some $\epsilon > 0$
\[
\text{spt}\left( \eta_1 \right) \bigcap  [d (\cdot , \tau_2) ]^{-1} ( \epsilon)  = \text{spt}\left(  \eta_2 \right)  \bigcap  [d (\cdot , \tau_1) ]^{-1} ( \epsilon )  = \emptyset\,.
\]
Then
\[
 \nu = \nu_1 + \nu_2 + \nu_0 := 3^{-3} \left(\frac{1}{3}\right)_* \nu +   3^{-3}  \left( T_{ - X_0 } \right)_*  \left(\frac{1}{3}\right)_* \nu +  \int_{  \mathbb{R}^2 \times \mathbb{C} }   \frac{1}{2} \text{tr} (\cdot) \, \varphi \, da  \, db  \, dc  \,.
\]
Together with the transformation and decomposition rules Lemmas \ref{transformation_lemma} and \ref{decomposition_lemma}, writing $k = \frac{1}{3} $ we have
\begin{eqnarray*}
& & [\zeta_{ \tau } (s, \nu) ] (1)  \\
& = & [\zeta_{ \tau } (s, \nu_1) ] (1)  + [\zeta_{ \tau } (s, \nu_2) ] (1) + [\zeta_{ \tau } (s, \nu_0) ] (1)  \\
& = & [\zeta_{ \tau_1 } (s, \nu_1) ] (1)  + [\zeta_{ \tau_2 } (s, \nu_2) ] (1) + [\zeta_{ \tau } (s, \nu_0) ] (1) + [h(s)](1)  \\
& = & k^{3} [\zeta_{ k_* \tau} (s, k_* \nu) ] (1)  + k^{3} [\zeta_{ T_{- X_0}^* k_* \tau} (s,  T_{- X_0}^* k_* \nu) ] (1)  + [\zeta_{ \tau } (s, \nu_3) ] (1) + [h(s)](1)  \\
&=& 2 k^{s - 1 + 3 }  \, [\zeta_{\tau} (s, \nu) ](1)  + [\zeta_{ \tau } (s, \nu_0) ] (1) + [h(s)](1) \,,
\end{eqnarray*}
with $h$ holomorphic, which is a functional equation similar to the form introduced in \cite{Hoffer}.
Or
\begin{eqnarray*}
\left(1 -2 k^{s - 1 + 3 }  \right) [\zeta_{ \tau } (s, \nu) ] (1)  =  [\zeta_{ \tau } (s, \nu_0) ] (1) + [h(s)](1) \,,
\end{eqnarray*}
Or
\begin{eqnarray*}
[\zeta_{ \tau } (s, \nu) ] (1)  =  \frac{ [\zeta_{ \tau } (s, \nu_0) ] (1) + [h(s)](1) } {  1 -2  \cdot 3^{-s -2} } \,.
\end{eqnarray*}
and therefore we speculate that it will have poles satisfying
\[
[ \mathcal{D}_{\mathbb{C}} ( \tau, \nu) ](1) \bigcap \left \{ \Re(s) \geq  \frac{\log(2)}{\log(3)} - 2 \right\} \subset \left \{ \frac{\log(2)}{\log(3)} - 2 + i n \frac{2\pi}{\log(3)} \, : \, n \in \mathbb{N} \right \}  \,,
\]
We note that the main difficulty to rigorously establish the above statement boils down to the need for a more detailed analysis of $ [\zeta_{ \tau } (s, \nu_0) ] (1) $ so that an explicit bound can be established such that $\overline{\text{dim}}_B (\tau, \nu_0)  <  \overline{\text{dim}}_B (\tau, \nu) $ can be concluded.  Since the aim of this work is to introduce our definition of relative distance zeta functional and fractality in this context instead of diving into specific details of a particular example, we will defer this detailed analysis to a next work.
Meanwhile, for illustrative purpose and a better understanding of our zeta functional, in Figure \ref{test1B} we plot the norms of the value of our zeta functional, which we can compare with Figure \ref{test1A}.

\begin{figurehere} \centering
\includegraphics[width=5cm,height=4cm]{./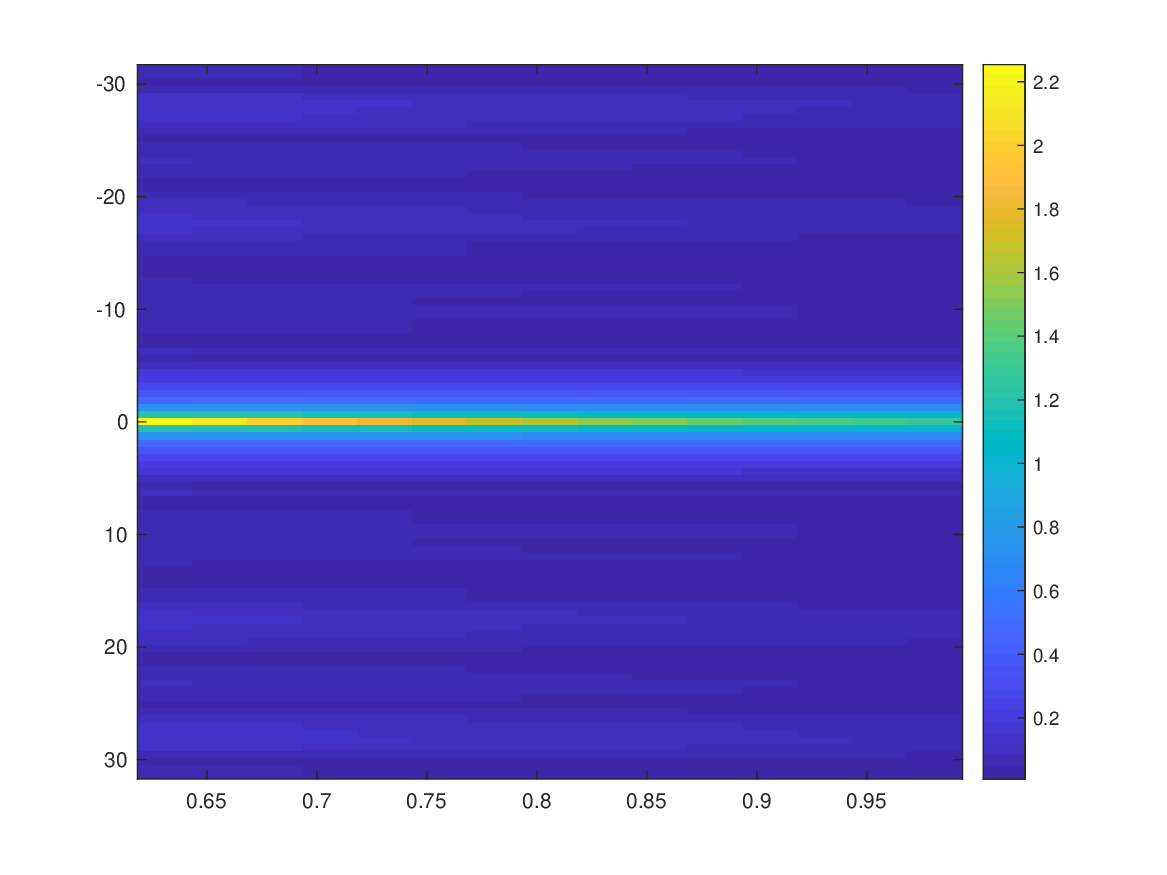}
\caption{$| \left[ \zeta_{\tau} (s, \nu) \right] (1) |$ for $\frac{\log(2)}{ \log(3)} \leq \Re (s) $  with $\eta(a,b,c) =e^{- 0.001 (a^2 + b^2 + c^2)}$.} \label{test1B}
\end{figurehere}

\noindent We notice that, even though this is not the most exciting example with the set $ \text{spt} \, \tilde{\mu}_\tau $ in fact simultaneously diagonizable, i.e.:
\[
\begin{pmatrix} 0 & a \\  a & 0 \end{pmatrix}   \begin{pmatrix} 0 & b \\ b & 0 \end{pmatrix}
=  \begin{pmatrix} a  b & 0 \\ 0 &  a b \end{pmatrix} 
= \begin{pmatrix} 0 & b \\ b & 0 \end{pmatrix}   \begin{pmatrix} 0 & a \\ a & 0 \end{pmatrix} \,,
\]
the set $ [d (\cdot , \tau) ]^{-1} ( 0)  $ is already geometrically intriguing as a unions of cones $\{ ( b - p)^2 = a c \}$ over $p\in \mathcal{C}$.

\end{example}

\begin{example} (A noncommutative example.)
Let us still keep $d=1$, $n=2$. We now let
$ \tau  = \int_{ [ \mathfrak{sa} \, \mathcal{L} (H) ]^d  } \frac{1}{2}  \, \text{tr}  (\cdot)  \, \tilde{\mu}_\tau (d X) := \frac{1}{2}  \, \int_{ \mathbb{R}^3 } \text{tr} (\cdot)  \, \delta_c( da )  \mu_\mathcal{C} (d b)  \mu_\mathcal{C} (dc) \,  $ where $\mu_\mathcal{C}$ is again the Cantor measure, and again take $ \nu =  \int_{ [ \mathfrak{sa} \, \mathcal{L} (H) ]^d  } \frac{1}{2}  \, \text{tr} (\cdot)  \tilde{\mu}_{\nu} (d X)  = \int_{\mathbb{R}^3 } \frac{1}{2}  \, \text{tr} (\cdot) \, \eta(a,b,c)
\,  da  \, db  \, dc $ 
where $\eta$ is smooth and to be defined later.
Then it is clear that 
\[
 \text{spt} \, \tilde{\mu}_\tau = \left \{ \begin{pmatrix} a & b \\ b & a \end{pmatrix}  :  a, b \in \mathcal{C}  \right \} \, ,
\]
and it is then straightforward to check for $\delta > 0$
{\small
\begin{eqnarray*}
&   &   [d (\cdot , \tau) ]^{-1} ( \delta ) \\
& =&   \left \{ (a,b,c)  \in \mathbb{R}^3  \, :  \,   \inf_{p , q \in \mathcal{C} } \min_{s=0,1}  \left| \frac{a+c}{2} - q + (-1)^{s} \sqrt{  \left( \frac{a+c}{2}  - q  \right)^2 + \bigg(  ( b-p )^2 - (a - q) (c -q ) \bigg)  } \right| < \delta \right \}  \\
& =&   \left \{ (a,b,c)  \in \mathbb{R}^3  \, :  \,   \inf_{p , q \in \mathcal{C} }  \left| \left| \frac{a+c}{2} - q \right| - \sqrt{  \left( \frac{a-c}{2} \right)^2 +   ( b-p )^2   } \right| < \delta \right \} 
\end{eqnarray*}
}which, if we only focus on the subset inside any Euclidean ball of finite radius $R > 0 $, can be effectively regarded as a union over $(p,q) \in \mathcal{C} \times  \mathcal{C} $ of Euclidean neighborhoods of the elliptic cones $ \{  ( b - p )^2 =  (a -q) (c - q)   \} =  \{  ( b - p )^2 +  (n - q)^2 = m^2  \}.$
Moreover, we observe that
\[
[d (\cdot , \tau) ]^{-1} ( 0)   =  \bigcup_{ p,q \in \mathcal{C}  }   \left \{ (a,b,c) \in \mathbb{R}^3   \, :  \, ( b-p )^2 = (a -q) (c - q) \right \} \,.
\]
\begin{figurehere} \centering
\includegraphics[width=5cm,height=4cm]{./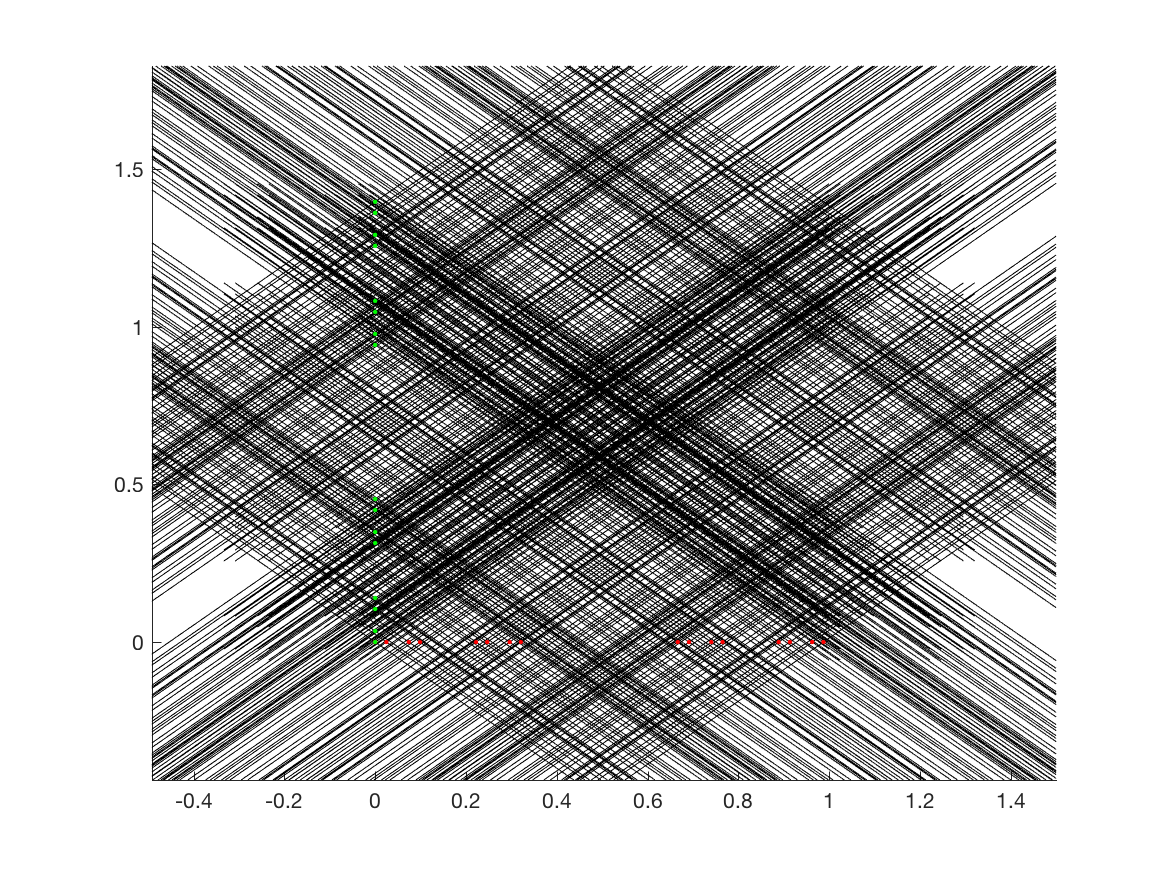}
\caption{The cross-section of $[d (\cdot , \tau) ]^{-1} ( 0) $ along the plane $a + c = 0$ with the variables $(a, b ,c) \in \mathbb{R}^3$, (the Cantor sets $\mathcal{C}$ and $\sqrt{2} \, \mathcal{C}$ are colored red and green respectively.)    } \label{test1C}
\end{figurehere}

Write
\begin{eqnarray*}
[d (\cdot , \tau) ]^{-1} ( 0) 
=  \bigcup_{i=1}^4 [d (\cdot , \tau_i) ]^{-1} ( 0) 
\end{eqnarray*}
as a union of four non-disjoint sets, where
\[
\tau = \frac{1}{4} \sum_{i=1}^d \tau_4 = \frac{1}{4} \left(\frac{1}{3}\right)_* \tau + \frac{1}{4} \left( T_{ - X_1 } \right)_*  \left(\frac{1}{3}\right)_* \tau +\frac{1}{4}   \left( T_{ - X_2 } \right)_*  \left(\frac{1}{3}\right)_* \tau  + \frac{1}{4}  \left( T_{ - X_1 - X_2 } \right)_*  \left(\frac{1}{3}\right)_* \tau
\]
where we write $X_1 = \begin{pmatrix} 0 & \frac{2}{3} \\ \frac{2}{3} & 0 \end{pmatrix} $, $X_2 = \begin{pmatrix} \frac{2}{3}  & 0 \\ 0 & \frac{2}{3}  \end{pmatrix} $.  
Meanwhile, there exists a function $\eta$ of compact support such that
\[
\eta  =  \frac{1}{4} \sum_{i=1}^d \eta_i  + \varphi : =  \left(\frac{1}{3}\right)^* \eta + T_{(0, \frac{2}{3},0)}^*  \left(\frac{1}{3}\right)^* \eta + T_{(\frac{2}{3}, 0,\frac{2}{3})}^*  \left(\frac{1}{3}\right)^* \eta + T_{(\frac{2}{3}, \frac{2}{3},\frac{2}{3})}^*  \left(\frac{1}{3}\right)^* \eta    + \varphi
\]
where $\varphi > 0$ is defined above and is also smooth and of compact support, and that for all $i\neq j$
\[
\text{spt}\left( \eta_i \right) \bigcap  [d (\cdot , \tau_j) ]^{-1} ( \epsilon)  = \emptyset\,.
\]
for some $\epsilon > 0$.
Then
\begin{eqnarray*}
&  &  \nu \\
&= & \sum_{i=1}^{4} \nu_i + \nu_0 \\
& := &3^{-3} \left(\frac{1}{3}\right)_* \nu +   3^{-3}  \left( T_{ - X_1 } \right)_*  \left(\frac{1}{3}\right)_* \nu +   3^{-3}  \left( T_{ - X_2 } \right)_*  \left(\frac{1}{3}\right)_* \nu +   3^{-3}  \left( T_{ - X_1- X_2 } \right)_*  \left(\frac{1}{3}\right)_* \nu \\
&  & + \int_{ \mathbb{R}^3}  \frac{1}{2} \text{tr} (\cdot )\, \varphi \, da  \, db \, dc
\end{eqnarray*}
Together with the transformation and decomposition rules Lemmas \ref{transformation_lemma} and \ref{decomposition_lemma}, writing $k = \frac{1}{3} $ we have
\begin{eqnarray*}
& & [\zeta_{ \tau } (s, \nu) ] (1)  \\
& = & \sum_{i=1}^4 [\zeta_{ \tau } (s, \nu_i) ] (1)  +  [\zeta_{ \tau } (s, \nu_0) ] (1)  \\
& = & \sum_{i=1}^4 [\zeta_{ \tau_i } (s, \nu_i) ] (1)  +  [\zeta_{ \tau } (s, \nu_0) ] (1) + [h(s)](1)  \\
&=& 4 k^{s - 1+3}  \, [\zeta_{\tau} (s, \nu) ](1)  + [\zeta_{ \tau } (s, \nu_0) ] (1) + [h(s)](1) \,.
\end{eqnarray*}
Or
\begin{eqnarray*}
\left(1 -4 k^{s - 1+3}  \right) [\zeta_{ \tau } (s, \nu) ] (1)  =  [\zeta_{ \tau } (s, \nu_0) ] (1) + [h(s)](1) \,,
\end{eqnarray*}
with $h$ holomorphic, which is a functional equation similar to the one introduced in \cite{Hoffer}.
Or
\begin{eqnarray*}
[\zeta_{ \tau } (s, \nu) ] (1)  =  \frac{ [\zeta_{ \tau } (s, \nu_0) ] (1) + [h(s)](1) } {  1 -4 \cdot 3^{-s - 2} } \,.
\end{eqnarray*}
and therefore we speculate it has poles
\[
[ \mathcal{D}_{\mathbb{C}} ( \tau, \nu) ](1) \bigcap \left \{ \Re(s) \geq  2 \frac{\log(2)}{\log(3)} - 2 \right \}  \subset \left \{ 2 \frac{\log(2)}{\log(3)} - 2 + i n \frac{2\pi}{\log(3)} \, : \, n \in \mathbb{N} \right \}  \,.
\]
Again, we defer the more detailed analysis of $ [\zeta_{ \tau } (s, \nu_0) ] (1) $ so as to rigorously establish the above statement to a next work.
Meanwhile, so as to have a better understanding of our zeta functional (for comparison with the more conventional zeta function), we show the norms of the values of our zeta functions in Figure \ref{test1E} for illustrative purpose and comparison with Figure \ref{test1A}.

\begin{figurehere} \centering
\includegraphics[width=5cm,height=4cm]{./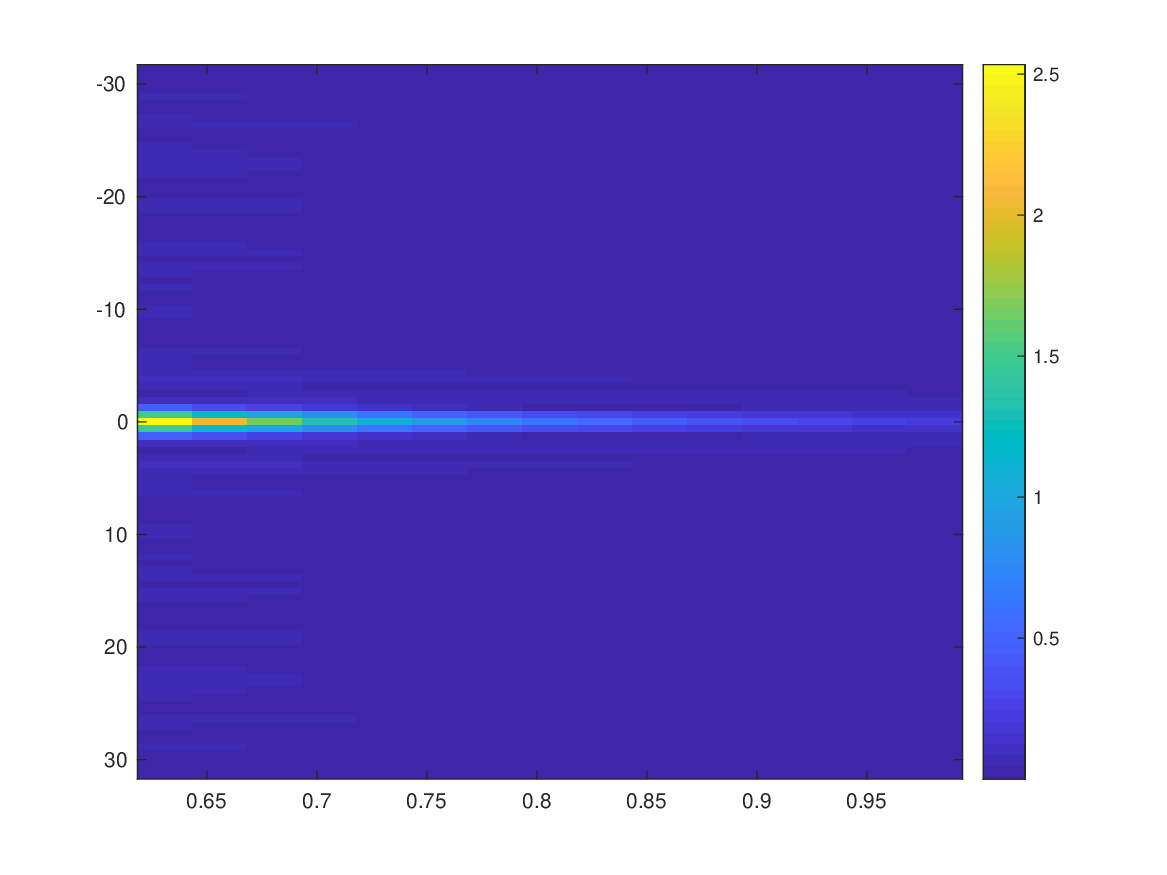}
\caption{$| \left[ \zeta_{\tau} (s, \nu ) \right] (1) |$ for $\frac{\log(2)}{ \log(3)} \leq \Re (s) $ with $\eta(a,b,c) =e^{- 0.001 (a^2 + b^2 + c^2)}$.} \label{test1E}
\end{figurehere}

\noindent  We remark that the set $ [d (\cdot , \tau) ]^{-1} ( 0)  $ is not simultaneously diagonizable, and is therefore the first noncommutative example of interest.

\end{example}

\begin{example} (A noncommutative example.)
Still, in this example, let us keep $d=1$, $n=2$.
Let
$ \tau  = \int_{ [ \mathfrak{sa} \, \mathcal{L} (H) ]^d  }  \frac{1}{2}  \, \text{tr} (\cdot)  \tilde{\mu}_{\tau} (d X) :=  \int_{ \mathbb{R}^3 } \frac{1}{2}  \, \text{tr} (\cdot) \, \mu_\mathcal{C} ( d b) \,  \chi_{\left[-\frac{1}{2},\frac{1}{2}\right]}( da ) \, \chi_{\left[-\frac{1}{2},\frac{1}{2}\right]} ( dc )$, and again $ \nu =  \int_{ [ \mathfrak{sa} \, \mathcal{L} (H) ]^d  }  \frac{1}{2}  \, \text{tr} (\cdot) \tilde{\mu}_{\nu} (dX)  = \int_{ \mathbb{R}^3 } \frac{1}{2}  \, \text{tr} (\cdot) \, \eta(a,b,c) 
\,  da  \, db  \, dc $
for a smooth function $\eta$ with compact support to be chosen later.

Then it is clear that 
\[
 \text{spt} \, \tilde{\mu}_\tau = \left \{ \begin{pmatrix} a & b \\ b & c \end{pmatrix}  :  b \in \mathcal{C} , a,c \in \left[-\frac{1}{2},\frac{1}{2}\right] \right \} \, ,
\]
and we may check that for $\delta > 0$
{\footnotesize
\begin{eqnarray*}
&    & [d (\cdot , \tau) ]^{-1} ( \delta )  \\
& = &   \left \{ (a,b,c)  \in \mathbb{R}^3  \, :  \,   \inf_{ \tiny \begin{matrix} p \in \mathcal{C}, \\  q,r \in \left[-\frac{1}{2},\frac{1}{2}\right]  \end{matrix} } \min_{s=0,1}  \left| \frac{a +c - q - r }{2} + (-1)^{s} \sqrt{  \left( \frac{a +c -q - r}{2}  \right)^2 + \bigg(  ( b-p )^2 - (a - q) (c -r ) \bigg)  } \right| < \delta \right \}  \\
& = &   \left \{ (a,b,c)  \in \mathbb{R}^3   \, :  \,   \inf_{ \tiny \begin{matrix} p \in \mathcal{C}, \\  q,r \in \left[-\frac{1}{2},\frac{1}{2}\right]  \end{matrix} }  \left| \left|  \frac{a +c - q - r }{2} \right| - \sqrt{  \left( \frac{a -c -q + r}{2}  \right)^2 + ( b-p )^2  \bigg)  } \right| < \delta \right \} 
\end{eqnarray*}
}which again, if we only focus on the subset inside any Euclidean ball of finite radius $R > 0 $, can be effectively regarded as a union over $(p,q,r) \in \mathcal{C} \times \left[-\frac{1}{2},\frac{1}{2}\right]^2 $ of Euclidean neighborhoods of the elliptic cones $ \{ ( b - p )^2 =  (a -q) (c - r)   \} =  \{  ( b - p )^2 +  (n - \frac{q + r}{2} )^2 = (m  - \frac{q - r}{2}  )^2  \}.$
Moreover, we have
\begin{eqnarray*}
&  &  [d (\cdot , \tau) ]^{-1} ( 0)   \\
&=&  \bigcup_{ p \in \mathcal{C}, \,  q,r \in \left[-\frac{1}{2},\frac{1}{2}\right]   }   \left \{ (a,b,c)  \in \mathbb{R} \times \mathbb{C} \times \mathbb{R}   \, :  \, ( b-p )^2 = (a -q) (c - r) \right \} \\
&=&  [d (\cdot , \tau_1) ]^{-1} ( 0)  \bigcup  [d (\cdot , \tau_2) ]^{-1} ( 0) \bigcup  [d (\cdot , \tau_0) ]^{-1} ( 0)  
\end{eqnarray*}
as a union of three non-disjoint set, where
\[
\tau = \frac{1}{2} \tau_1 + \frac{1}{2} \tau_2  + \tau_0  := 3^{-2} \frac{1}{2}  \left(\frac{1}{3}\right)_* \tau +  3^{-2} \frac{1}{2} \left( T_{ - X_0 } \right)_*  \left(\frac{1}{3}\right)_* \tau  + \tau_0
\]
where we write $X_0 = \begin{pmatrix} 0 & \frac{2}{3} \\ \frac{2}{3} & 0 \end{pmatrix} $.  
Meanwhile, there exists $\eta$ of compact support such that
\[
\eta  =  \eta_1 + \eta_2 + \varphi := \left(\frac{1}{3}\right)^* \eta + T_{(0, \frac{2}{3},0)}^*  \left(\frac{1}{3}\right)^* \eta  + \varphi
\]
where $\varphi > 0$ is defined above and is also smooth and of compact support, and that for some $\epsilon > 0 $,
\[
\text{spt}\left( \eta_1 \right) \bigcap  [d (\cdot , \tau_2) ]^{-1} ( \epsilon )  = \text{spt}\left(  \eta_2 \right)  \bigcap  [d (\cdot , \tau_1) ]^{-1} ( \epsilon )  = \emptyset\,.
\]
Then
\[
 \nu = \nu_1 + \nu_2 + \nu_3 := 3^{-3} \left(\frac{1}{3}\right)_* \nu +   3^{-3}  \left( T_{ - X_0 } \right)_*  \left(\frac{1}{3}\right)_* \nu +  \int_{ \mathbb{R}^2 \times \mathbb{C}  } \frac{1}{2} \text{tr} (\cdot) \, \varphi \, da  \, db \, dc 
\]
Hence, with the transformation and decomposition rules Lemmas \ref{transformation_lemma} and \ref{decomposition_lemma}, writing $k = \frac{1}{3} $ we have
\begin{eqnarray*}
& & [\zeta_{ \tau } (s, \nu) ] (1)  \\
& = & [\zeta_{ \tau } (s, \nu_1) ] (1)  + [\zeta_{ \tau } (s, \nu_2) ] (1) + [\zeta_{ \tau } (s, \nu_0) ] (1)  \\
& = & [\zeta_{ \tau_1 } (s, \nu_1) ] (1)  + [\zeta_{ \tau_2 } (s, \nu_2) ] (1) + [\zeta_{ \tau } (s, \nu_0) ] (1) + [h(s)](1)  \\
& = & k^{3-2} [\zeta_{ k_* \tau} (s, k_* \nu) ] (1)  + k^{3-2} [\zeta_{ T_{- X_0}^* k_* \tau} (s,  T_{- X_0}^* k_* \nu) ] (1)  + [\zeta_{ \tau } (s, \nu_3) ] (1) + [h(s)](1)  \\
&=& 2 k^{s -1 + 3 - 2}  \, [\zeta_{\tau} (s, \nu) ](1)  + [\zeta_{ \tau } (s, \nu_0) ] (1) + [h(s)](1) \,.
\end{eqnarray*}
with $h$ holomorphic, which is again a functional equation similar to that is introduced in \cite{Hoffer}.  Or
\begin{eqnarray*}
\left(1 -2 k^{s - 1 + 3 -2 }  \right) [\zeta_{ \tau } (s, \nu) ] (1)  =  [\zeta_{ \tau } (s, \nu_0) ] (1) + [h(s)](1) \,,
\end{eqnarray*}
Or
\begin{eqnarray*}
[\zeta_{ \tau } (s, \nu) ] (1)  =  \frac{ [\zeta_{ \tau } (s, \nu_0) ] (1) + [h(s)](1) } {  1 -2  \cdot 3^{-s} } \,.
\end{eqnarray*}
and therefore we suspect they will have poles
\[
[ \mathcal{D}_{\mathbb{C}} ( \tau, \nu) ](1) \bigcap \left \{ \Re(s) \geq  \frac{\log(2)}{\log(3)} \right \}  \subset \left \{ \frac{\log(2)}{\log(3)} + i n \frac{2\pi}{\log(3)} \, : \, n \in \mathbb{N} \right \}  \,.
\]
We again remark that we shall defer the more detailed analysis of $ [\zeta_{ \tau } (s, \nu_0) ] (1) $ so as to rigorously establish the above statement to a next work.
\end{example}

\begin{remark}
We note that in fact, with the above procedure and a similar argument as in \cite{Hoffer}, it is expected that we shall obtain a general functional equation using Lemmas \ref{transformation_lemma} and \ref{decomposition_lemma} whenever a general self-similar structure arises.  
Similar functional equations satisfied by the geometric, fractal and distance zeta function with rescaling were previously also introduced and used earlier in different context, including \cite{KL93,LalLap1,LalLap2,Lal88,Lal89,L3_ref,LF1,LF2,LF_book,LapPe1,LapPe2,LapPeWi1,LapPeWi2,LRZ_book,LRZ,LRZ4}.
This is a direction worth pursuing in our next work, especially rigorously establishing concrete criteria so as to guarantee an associated scaling zeta function similar to Definition 4.2 in \cite{Hoffer} and in \cite{LapPe2,LapPeWi1,LRZ_book} alone may recover all the poles of our relative distance zeta functional.
\end{remark}

\section{Conclusion and future directions}

In this work, we pioneered in defining a relative distance (and tube) zeta functional for a class of states over a C* algebra as a generalization of the relative distance zeta function introduced in \cite{LF1,LF2,LF_book, LRZ0,LRZ_book,LRZ,LRZ3,LRZ4,LRZ5,LRZ6a,LRZ6b,LRZ7}.  With this, we are able to define relative Minkowski dimensions and relative complex dimensions in this context, which provides essential tools to describe the notion of fractality in the noncommutative setting.  We discussed the geometric and transformation properties, decomposition rules and properties that respects tensor products for this newly proposed relative distance zeta functional, and then looked into some intriguing examples with this new zeta functional by exploring possible functional equations similar to \cite{Hoffer,KL93,LalLap1,LalLap2,Lal88,Lal89,L3_ref,LapPe1,LapPe2,LapPeWi1,LapPeWi2}. 

Our work provides a possible choice of set of tools to discuss what constitute a fractal in a noncommutative setting, more concretely in the context of the C*-algebraic formulation of quantum mechanics where their self-adjoint elements represent physical observables and their states (as a mapping from physical observables to their expected measurement outcomes) represent physically relevant mixed ``quantum states" of the system.  In this context, our characterization of fractality can now be understood as the periodicity in the logarithmic scale/multiplicative periodicity observed when the set of physical observables from where information for expected measurement outcomes can be obtained grows.

Important next steps to explore include thoroughly investigating more concrete (and perhaps physically relevant) examples in which a detailed understanding of how $t \mapsto \nu^{\tau,t}(g)$ grows can be established, rigorously establishing functional equations for our relative distance zeta functional whenever general self-similarity arises, examining concrete criteria that guarantees associated scaling zeta function alone may recover all the relative complex dimensions in our setting.

\section*{Acknowledgment}
The authors would like to thank (in alphabetical order of family names) S. Das, M. L. Lapidus, S. Rankin, A. Richardson, C. W. Ho and W. Hoffer for many tremendously helpful discussions and suggestions, which leads to a substantial improvement of the manuscript.

\end{document}